\newtheorem{thm}{Theorem}[section]
\theoremstyle{definition}
\journal{Journal of \LaTeX\ Templates}
\makeatletter \@addtoreset{equation}{section}
\begin{document}

\begin{frontmatter}
\title{A new matrix
modified Korteweg-de Vries equation: Riemann-Hilbert approach and exact solutions}
\tnotetext[mytitlenote]{Project supported by the Fundamental Research Fund for the Central Universities under the grant No. 2019ZDPY07.\\
\hspace*{3ex}$^{*}$Corresponding author.\\
\hspace*{3ex}\emph{E-mail addresses}:  sftian@cumt.edu.cn and
shoufu2006@126.com (S. F. Tian)}

\author{Wei-Kang Xun, Shou-Fu Tian$^{*}$ and  Jin-Jie Yang}
\address{
School of Mathematics and Institute of Mathematical Physics, China University of Mining and Technology, Xuzhou 221116, People's Republic of China
 }

\begin{abstract}
      A new matrix modified  Korteweg-de Vries (mmKdV)   equation with a $p\times q$ complex-valued  potential matrix function  is first studied via Riemann-Hilbert approach, which can be reduced to the well-known coupled modified Korteweg-de Vries   equations by selecting  special potential  matrix. Starting from the special analysis for the Lax pair of this equation,  we successfully establish a Riemann-Hilbert problem of the equation.  By introducing the special conditions  of irregularity and reflectionless case,  some interesting exact solutions, including the $N$-soliton solution formula, of the mmKdV   equation are derived  through  solving the corresponding Riemann-Hilbert problem. Moreover, due to the special symmetry of special potential matrices and the $N$-soliton solution formula, we  make further efforts to classify the original exact solutions to obtain some other interesting solutions which are all displayed graphically.
      It is interesting that  the local structures and dynamic behaviors of soliton solutions, breather-type   solutions and bell-type soliton solutions are all analyzed  via taking different types of  potential matrices.
\end{abstract}

\begin{keyword}
Matrix modified Korteweg-de  Vries  equation \sep   Riemann-Hilbert  approach \sep Exact solutions \sep Multi-soliton solutions \sep   Soliton classification.
\end{keyword}

\end{frontmatter}


\newpage
\tableofcontents
\newpage
\section{Introduction}
     Soliton theory is one of the important research directions in the field of nonlinear science in the world today.  Solitons of nonlinear differential equations play an important role in revealing some important physical laws such as fluid mechanics, plasma physics, nonlinear optics, classical field theory, and quantum theory. Based on this fact, many domestic and foreign scholars have devoted themselves to the study of solitons. In recent decades, a number of  effective methods  have been produced  such as Hirota bilinear method \cite{hirota1980direct},  Darboux and B\"{a}cklund transformation  \cite{matveev1979darboux},  inverse scattering transformation \cite{ablowitz1981solitons,beals1984scattering,beals1988direct}.  Apart from them, the Riemann-Hilbert(RH) approach is also  a very effective method which  can not only  slove the soliton solutions of a series of nonlinear evolution equations \cite{fokas2012unified,lenells2012initial,guo2012riemann,de2013riemann,geng2016riemann,yan2017initial,
     ma2018riemann,ma2019inverse,wang2010integrable,zhang2017riemann,tian2017initial,tian2018initial,tian2016mixed,
     xia2018initial,peng2019riemann,yang2019n}, but also study  integrable systems with non-zero boundaries \cite{ablowitz2018inverse,vekslerchik1992discrete,biondini2014inverse,prinari2015inverse,yang2019riemann}, the asymptoticity of integrable system solutions \cite{deift1992steepest,xu2015long,tian2018long,wang2019long,liu2019long},   certain improtant properties of orthogonal polynomials\cite{deift1999orthogonal}, etc.
      Because of the superiority of RH approach, many scholars have done a lot of work about the exact solutions of single equation and
partial coupled equations with it. However, there are very few work about the solutions of matrix-type equation via the RH approach. Therefore, the main purpose of our work is to  study the RH problem and exact solutions with their  properties of a new matrix modified Korteweg-de Vries(mmKdV) equation in this work.

     In this work, we focus on  Riemann-Hilbert problem and exact solutions with their propagation  behaviors for the mmKdV  equation \cite{tsuchida1998coupled}
     \begin{equation} \label{mmkdv}
         Q_t + Q_{xxx} - 3 \epsilon (Q_x Q^{\dagger} Q + Q Q^{\dagger} Q_x ) = 0,  \qquad  \epsilon = \pm 1,
     \end{equation}
      where $Q$ is a $p\times q$ complex-valued matrix function of variation $x$ and  $t$. If $Q$ is restricted to be a real matrix, Eq.  \eqref{mmkdv} is the same as the equation presented in \cite{athorne1987generalised}. When $Q$ is taken as some special forms, Eq. \eqref{mmkdv} can be reduced to the coupled modified  Korteweg-de Vries(cmKdV) equations
      \begin{equation} \label{cmkdv}
           \frac{\partial{v_i}}{\partial t}-6 \left( \sum_{j=0}^{M-1} \epsilon_{j}v_j^2 \right) \frac{\partial{v_i}}{\partial x} +\frac{\partial^3 v_i}{\partial x^3} = 0, \quad \epsilon_j=\pm 1, \quad i=0,1,\dots,M-1.
      \end{equation}
      Moreover, as far as known, there are already two different methods to complete the reduction to cmKdV in \cite{tsuchida1998coupled,zhang2008lax}. Here is a brief introduction to the reduced method in  \cite{tsuchida1998coupled}.
      We define
      \begin{equation}
            Q^{(1)}=\mu_0 v_0 +i v_1, \quad R^{(1)}=\epsilon_1 ( \mu_0 v_0 - i v_1),
      \end{equation}
      \begin{equation}
       Q^{(m+1)}= \begin{bmatrix}  Q^{(m)} &  -\epsilon_{2m+1}(\mu_{2m} v_{2m}+i v_{2m})I_{2^{m-1}} \\  -(\mu_{2m} v_{2m} - i v_{2m})I_{2^{m-1}}  & -R^{(m)}  \end{bmatrix},
      \end{equation}
      \begin{equation}
       R^{(m+1)}= \begin{bmatrix}  R^{(m)} &  -\epsilon_{2m+1}(\mu_{2m} v_{2m}+i v_{2m})I_{2^{m-1}} \\  -(\mu_{2m} v_{2m} - i v_{2m})I_{2^{m-1}}  & -Q^{(m)}  \end{bmatrix},
      \end{equation}
      where $I_{2^{m-1}}$ is the $2^{m-1}\times 2^{m-1}$ identity matrix, $\epsilon=\pm 1$, and $\mu_{2m}$ satisfies
      \begin{equation}
            \mu_{2m}^2 = \frac{\epsilon_{2m}}{\epsilon_{2m+1}}=\epsilon_{2m}\epsilon_{2m+1}.
      \end{equation}
      Substituting $Q^{(m)}$  and $R^{(m)}$ for $Q$ and $R$ into Eq. \eqref{mmkdv}, we can obtain Eq. \eqref{cmkdv} $(M=2m)$.

      In addition, the conservation laws  and Hamilton structure of the mmKdV equation  have been studied carefully  by  Tsuchida  and   Wadati in  \cite{tsuchida1998coupled}. Starting from a special class $p=q=n$, the Lax representation for Eq. \eqref{cmkdv}  can be written as
      \begin{equation}  \label{laxs}
         \begin{aligned}
            \begin{bmatrix} \Phi_1 \\  \Phi_2  \end{bmatrix}_x & = \begin{bmatrix} -i \zeta I & Q \\ R & i \zeta I \end{bmatrix}   \begin{bmatrix} \Phi_1 \\  \Phi_2  \end{bmatrix},  \\
                        \begin{bmatrix} \Phi_1 \\  \Phi_2  \end{bmatrix}_t & = \begin{bmatrix} V_{11} &    V_{12} \\ V_{21} &    V_{22}  \end{bmatrix}   \begin{bmatrix} \Phi_1 \\  \Phi_2  \end{bmatrix},
         \end{aligned}
      \end{equation}
      where every element is a $n\times n$ matrix. We define $\Gamma=\Phi_2 \Phi_1^{-1}$, with the aid of  Lax pair and  the compatibility condition of it,  some special relationship can be derived by
      \begin{equation}
        \begin{aligned}
         \left\{ tr(Q \Gamma) \right\}_t &= \left\{ \mbox{some   function    of  } \Gamma, \mbox{Q, R   and } \zeta  \right\}_x,    \\
          2i Q \Gamma  & = - QR + Q(Q^{-1} Q \Gamma)_x + (Q \Gamma)^2,
        \end{aligned}
      \end{equation}
      where $tr(Q \Gamma)$ is the generating form of conserved densities. By expanding $Q \Gamma$  with respect to the spectral parameter $\zeta$ as   $Q \Gamma = \sum_{l=1}^{\infty}\frac{1}{(2i \zeta)^l}F_l$.  A recursion formula can be obtained as follows
      \begin{equation}
         F_{l+1}= -\delta_{l,0} Q R +Q (Q^{-1}F_l)_x + \sum_{k=1}^{l=1} F_k F_{l-k}, \quad l=0,1,\dots,
      \end{equation}
      where   $tr(F_l)$ is a conserved  density for any positive integer $l$. Moreover, on the basis of conserved density, the Hamilton structure and the Possion bracket  of  mmKdV equation can be otained  as follows
      \begin{equation}
                   H = tr \int \left\{ i F_4 \right\} dx = tr \int \left\{-i Q R_{xxx} + i \frac{3}{2} QR (Q R_x- Q_x R) \right\} dx,
      \end{equation}
      \begin{equation}
        \left\{ Q(x)  \stackrel{\otimes}{,}  Q(y)   \right\}  = \left\{ R(x)    \stackrel{\otimes}{,}   Q(y)   \right\} = 0,
      \end{equation}
      \begin{equation}
        \left\{ Q(x)    \stackrel{\otimes}{,}   R(y)   \right\}  =  i \delta (x-y) \Pi,
      \end{equation}
       where $\left\{  X  \stackrel{\otimes}{,}  Y \right\}_{kl}^{ij}=\left\{ X_{ij} , Y_{kl}  \right\}$, and $\Pi$ denotes the $n^2 \times n^2$ permutation matrix. The complete integrability can be proved by a classical $r-$matrix  \cite{tsuchida1998coupled}.  The exact solutions of mmKdV equation as $p=q=n$ and  $\epsilon=-1$ have been studied by using the classical  inverse scattering method \cite{tsuchida1998coupled}. Different from previous work about mmKdV equation, our work is to perfect the exact solutions of mmKdV equation for any positive integers $p,q$ and $\epsilon = \pm 1$ via RH approach, and obtain some other interesting and meaningful phenomenon  by analyzing the special properties of potential matrix which is important to understand  mmKdV equation more thoroughly.

       The outline of this work is as follows. In Section 2, we perform the spectral analysis of Lax pair, and analyze the  symmetry and analyticity of scattering matrix. In Section 3, on the basis of the results of  the last section,  the RH problem is formulated. In Section 4, we complete the spatiotemporal evolution of the scattered data. In Section 5, we obtain general form of solutions  by solving the RH problem. Finally,  some specific forms of potential matrix are considered. According to the special properties unique to the particular forms, we can further classify soliton solutions based on the original soliton solutions. Then, we obtain some other interesting solutions such as $N$-soliton solutions, breather-type soliton solutions and  bell-type soliton solutions.  The conclusions are discussed in the last section.

\section{Spectral   Analysis}
  \subsection{Lax pair and eigenfuction}
   The equivalent form of Lax pair of Eq. (\ref{mmkdv}) can be written as

\begin{equation}  \label{Lax}
\left\{
 \begin{aligned}
         \Phi_x & =M\Phi, & M =& -i \zeta \sigma +U,  \\
         \Phi_t  & =N\Phi, & N=& -4 i \zeta^3 \sigma + 4 \zeta^2 U - 2 i \zeta (   U^2 +   U_{x} )\sigma + 2 U^3- U_{xx} +  U_{x}U - UU_x,
    \end{aligned}
\right.
\end{equation}
with
\begin{equation}
  \sigma = \begin{bmatrix} I_1 & 0 \\ 0 & -I_2 \end{bmatrix}, \quad  U = \begin{bmatrix} 0 & Q \\ R & 0 \end{bmatrix},
\end{equation}
where   $ \Phi = \Phi (x, t; \zeta)$ is  a  $(p+q)$-component vector, $Q$ is a  $p\times q$ matrix, $R$ is a $q\times p$ matrix,    and $I_1$  and $ I_2$ are $p \times p$ and $q \times q $ identity matrix, respectively.  The potential matrices satisfy  $R= \epsilon Q^{\dagger}$ and $\epsilon =\pm 1$.  In addition,  Eq. \eqref{mmkdv} can be derived via the compatibility  condition of  Eq. \eqref{Lax}.

For the convenience of discussion, we can rewritte  Eq. \eqref{Lax} as
\begin{equation} \label{lax2}
\left\{
 \begin{aligned}
   & \Phi_x + i \zeta \sigma  \Phi = U \Phi, \\
    & \Phi_t +4 i \zeta^3 \sigma \Phi = V \Phi,
 \end{aligned}
\right.
\end{equation}
where
\begin{equation}
          V =   4   \zeta^2  U + i \zeta ( 2 U_{x} U + 2 U_{x} )\sigma +  2 U^3- U_{xx} +  U_{x}U - UU_x.
\end{equation}
According to Eq. \eqref{lax2}, when  $|x| \to \infty $,
\begin{equation}
\Phi \propto \exp(-i \zeta \sigma x-4 i \zeta^3 \sigma t).
\end{equation}
Let  $\Psi=\Phi \exp(i \zeta \sigma x+4 i \zeta^3 \sigma t)$, then  $\Psi$ satisfy :
\begin{equation} \label{lax12}
         \Psi_x +i \zeta [\sigma, \Psi] =U \Psi,
\end{equation}
\begin{equation}\label{lax13}
       \Psi_ t+4 i \zeta^3 [\sigma, \Psi] =V \Psi,
\end{equation}
where $[\sigma, \Psi]=\sigma \Psi- \Psi \sigma $ is the commutator. Based on  Eqs. \eqref{lax12} and  \eqref{lax13}, we  can  get  the formula
\begin{equation}
d(e^{i(\zeta  x+4\zeta^3 t)\widehat{\sigma}}\Psi)=e^{i(\zeta  x+4\zeta^3 t)\widehat{\sigma}}(Udx+Vdt) \Psi.
\end{equation}

Now we begin to consider the spectral analysis of   Lax pair   \eqref{lax12} and \eqref{lax13},  for which we merely focus on the spectral problem  \eqref{lax12}, because the analysis will take place at a fixed time, and the $t$-dependence will be suppressed. As for \eqref{lax12},  we can write its two matrix Jost solutions as a  collection of columns, that is
\begin{equation}
\begin{aligned}
          \Psi_{1}& =([\Psi_{1}]_{1},[\Psi_{1}]_{2},\dots,[\Psi_{1}]_{p},[\Psi_{2}]_{p+1},\dots,[\Psi_{2}]_{p+q}),     \\
\Psi_{2}& =([\Psi_{2}]_{1},[\Psi_{2}]_{2},\dots,[\Psi_{2}]_{p},[\Psi_{1}]_{p+1},\dots,[\Psi_{1}]_{p+q}),     \\
\end{aligned}
\end{equation}
obeying the asymptotic conditions
\begin{equation}
\begin{aligned}
          \Psi_{1}\to \mathbb{I} ,&  \qquad   x \to - \infty,\\
 \Psi_{2}\to \mathbb{I} ,&  \qquad   x \to   +\infty.
\end{aligned}
\end{equation}
Here  $\mathbb{I}$  is a  $(p+q) \times (p+q)$ identity matrix.  $\Psi_{1}$ and $\Psi_{2}$  are uniquely determined by the integral equations of Volterra type
\begin{equation}
   \begin{aligned}
         \Psi_{1}&= \mathbb{I} + \int_{-\infty}^{x} e^{-i \zeta \sigma (x-y)} U(y) \Psi_{1}(y,\zeta)  e^{i \zeta \sigma (x-y)} dy,  \\
        \Psi_{2}&=  \mathbb{I} - \int_{x}^{+ \infty} e^{-i \zeta \sigma (x-y)} U(y) \Psi_{2}(y,\zeta)  e^{i \zeta \sigma (x-y)} dy.
   \end{aligned}
\end{equation}
By direct computation, we can get
\begin{equation}
        e^{-i \zeta \sigma (x-y)} U(y) e^{i \zeta \sigma (x-y)}=
\begin{pmatrix} 0 & e^{-i \zeta (x-y) I_1 }Qe^{-i \zeta  (x-y) I_2}    \\ e^{i \zeta (x-y) I_2 }R e^{i \zeta   (x-y) I_1}& 0 \end{pmatrix}.
\end{equation}
By  direct analysis, we can see that
\begin{equation}
[\Psi_{1}]_1,[\Psi_{1}]_2,\dots,[\Psi_{1}]_p,[\Psi_{2}]_{p+1},[\Psi_{2}]_{p+2}, \dots, [\Psi_{2}]_{p+q}
\end{equation}
 are analytic for  $\zeta \in \mathbb{C}^{+}$ and continuous for $\zeta \in \mathbb{C}^{+} \cup \mathbb{R}$,
  and
\begin{equation}
[\Psi_{2}]_1,[\Psi_{2}]_2, \dots,  [\Psi_{2}]_p,[\Psi_{1}]_{p+1},[\Psi_{1}]_{p+2},\dots,[\Psi_{1}]_{p+q}
\end{equation}
 are analytic for  $\zeta \in \mathbb{C}^{-}$ and   continuous for $\zeta \in \mathbb{C}^{-}\cup \mathbb{R} $,   where  $\mathbb{C}^{+}$ and  $\mathbb{C}^{-}$   respectively the upper and lower half  $\zeta$-plane.
It is indicated owing to the Abel¡¯s identity and $tr(U) = 0$ that the determinants of $\Psi_{1}$ and $\Psi_{2}$ are  independent of the variable $x$. Through evaluating $\det(\Psi_{1})$  at  $x=-\infty$  and $\det(\Psi_{2})$  at $x=+ \infty$ , we
know
\begin{equation}
     \det (\Psi_{1})=\det (\Psi_{2})=1  ,\quad   \zeta \in \mathbb{R}.
\end{equation}

\subsection{Symmetry and analyticity of scattering matrix}
Since  $\Psi_{1}E$ and $\Psi_{2}E$ are  matrix solutions of the spectral problem(12), where $E=e^{-i (\zeta  x+4\zeta^3 t)\sigma}$, therefore, $\Psi_{1}E$ and $\Psi_{2}E$ are linear  dependent, namely,
\begin{equation}  \label{scatter}
    \Psi_{1}E=\Psi_{2}E S(\zeta)   , \qquad  \zeta \in  \mathbb{R},
\end{equation}
where  $S(\zeta)$ is a $(p+q)\times (p+q)$  matrix, i.e., $S(\zeta)=(S_{i,j})_{(p+q)\times(p+q)}$.
By taking  determinant at both ends of  Eq. \eqref{scatter},  it is  obvious that
\begin{equation}
\det(S)=1.
\end{equation}

    In order  to construct the  RH  problem,  we   need to consider the inverse of  $\Psi_{1}$  and $\Psi_{2}$,  and we partition the inverse matrices of $\Psi_{1}$  and $\Psi_{2}$ into rows, that is
\begin{equation}
      \Psi_{1}^{-1}=\begin{pmatrix} [\Psi_{1}^{-1}]^{1}&\\ [\Psi_{1}^{-1}]^{2}&\\  \vdots \\  [\Psi_{1}^{-1}]^{p+q}  &   \end{pmatrix},   \qquad  \Psi_{2}^{-1}=\begin{pmatrix} [\Psi_{2}^{-1}]^{1}&\\ [\Psi_{2}^{-1}]^{2}&\\  \vdots \\  [\Psi_{2}^{-1}]^{p+q}  &   \end{pmatrix},
\end{equation}
in which each  $[\Psi_{1}^{-1}]^{l} $ and   $ [\Psi_{2}^{-1}]^{l}$ denote the  $l-th$ row of the matrices $ \Psi_{1}^{-1}$  and  $\Psi_{2}^{-1}$.  At   the same time,  $ \Psi_{1}^{-1}$  and  $\Psi_{2}^{-1}$  meet the equation
\begin{equation} \label{lax21}
    \Upsilon_{x}+ i\zeta  [\sigma ,\Upsilon]=- \Upsilon U.
\end{equation}
With the aid  of Eq. \eqref{lax21}, we can  see that
 \begin{equation}
 [\Psi_{1}^{-1}]_{1}, [\Psi_{1}^{-1}]_{2},\dots, [\Psi_{1}^{-1}]_{p}, [\Psi_{2}^{-1}]_{p+1}, \dots,  [\Psi_{2}^{-1}]_{p+q}
  \end{equation}
 are  analytic in $\zeta \in \mathbb{C}^{-}$, while
 \begin{equation}
 [\Psi_{2}^{-1}]_{1},  [\Psi_{2}^{-1}]_{2}, \dots, [\Psi_{2}^{-1}]_{p},  [\Psi_{1}^{-1}]_{p+1},  \dots,  [\Psi_{1}^{-1}]_{p+q}
 \end{equation}
 are  analytic in $\zeta \in \mathbb{C}^{+}$.  Moreover, from Eq. \eqref{scatter},  we  can get that
\begin{equation}
   E^{-1} \Psi_{1}^{-1}=R(\zeta)E^{-1} \Psi_{2}^{-1},
\end{equation}
where $R(\zeta)=(R_{i,j})_{(p+q) \times (p+q)}=S^{-1}(\zeta)$, we can call it as inverse scattering matrix. \\
After  finding the scattering matrix $S(\zeta)$ and the inverse scattering matrix $R(\zeta)$, we give the following theorem to describe the  analyticity  of the elements for the two matrices.

\begin{thm}
       Assume that  the scattering matrix $S(\zeta)$ and the inverse scattering matrix $R(\zeta)$ are divided into the following forms
        \begin{equation} \label{3}
                \begin{pmatrix} S_{1}&S_{2}  \\   S_{3} &  S_{4}
                 \end{pmatrix},
              \qquad
                              \begin{pmatrix} R_{1}&R_{2}  \\  R_{3} &  R_{4}
                 \end{pmatrix},
        \end{equation}
       where   $S_{1}$ and $R_{1}$ are p$\times$p matrices, $ S_{4}$ and $R_{4}$  are  q$\times$q  matrices,   $ S_{2}$ and $R_{2}$ are p$\times$q matrices, and $S_{3}$ and $R_{3}$ are q$\times$p  matrices. Then the  elements of $S_{1}$ and $R_{4}$  can be analytic extension to $\zeta \in \mathbb{C}^{+}$; the  elements of $S_{4}$ and $R_{1}$  can be   extended analytically to $\zeta \in \mathbb{C}^{-}$;  the  elements of $S_{2},S_{3},R_{2}$ and $R_{3}$  are not analytic in $\zeta \in \mathbb{C}^{+}$ and $\zeta \in \mathbb{C}^{-}$;  the  elements of $S_{2}$ and $S_{3}$  are  continuous in $\zeta \in \mathbb{R}$;  the  elements of $R_{2}$ and $R_{3}$  can not  be   extended analytically to $\zeta \in \mathbb{R}$.
\end{thm}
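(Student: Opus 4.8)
The plan is to reduce the analyticity of every entry of $S$ and $R$ to the analyticity of the individual columns of $\Psi_{1},\Psi_{2}$ and rows of $\Psi_{1}^{-1},\Psi_{2}^{-1}$ that was established just before the statement. First I would solve the scattering relation \eqref{scatter} for $S$: left-multiplying $\Psi_{1}E=\Psi_{2}ES(\zeta)$ by $\Psi_{2}^{-1}$ and then by $E^{-1}$ yields $S(\zeta)=E^{-1}\Psi_{2}^{-1}\Psi_{1}E$, and the analogous manipulation of the inverse relation $E^{-1}\Psi_{1}^{-1}=R(\zeta)E^{-1}\Psi_{2}^{-1}$ yields $R(\zeta)=E^{-1}\Psi_{1}^{-1}\Psi_{2}E$. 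Because $E=\mathrm{diag}(e^{-i\theta}I_{1},e^{i\theta}I_{2})$ with $\theta=\zeta x+4\zeta^{3}t$ is diagonal, each entry factorises as $S_{ij}=(E^{-1})_{ii}(E)_{jj}\,[\Psi_{2}^{-1}]^{i}[\Psi_{1}]_{j}$ and $R_{ij}=(E^{-1})_{ii}(E)_{jj}\,[\Psi_{1}^{-1}]^{i}[\Psi_{2}]_{j}$, i.e. a single scalar exponential times the inner product of the $i$-th row of an inverse Jost matrix with the $j$-th column of a Jost matrix.

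The key observation is that the scalar prefactor $(E^{-1})_{ii}(E)_{jj}$ equals $1$ whenever $i,j$ lie in the same diagonal block, since the two exponentials cancel, and equals $e^{\pm 2i\theta}$ on the off-diagonal blocks; in every case it is an entire, nowhere-vanishing function of $\zeta$, so it never alters the domain of analyticity of a factor. Hence the analyticity of $S_{ij}$ (resp. $R_{ij}$) coincides with that of $[\Psi_{2}^{-1}]^{i}[\Psi_{1}]_{j}$ (resp. $[\Psi_{1}^{-1}]^{i}[\Psi_{2}]_{j}$), and I would read off the four blocks one at a time. For $S_{1}$ ($i,j\le p$) both $[\Psi_{2}^{-1}]^{i}$ and $[\Psi_{1}]_{j}$ are analytic in $\mathbb{C}^{+}$, so $S_{1}$ extends to $\mathbb{C}^{+}$; for $S_{4}$ ($i,j>p$) both factors are analytic in $\mathbb{C}^{-}$, giving $S_{4}$ on $\mathbb{C}^{-}$. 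Similarly $R_{1}$ pairs the first $p$ rows of $\Psi_{1}^{-1}$ ($\mathbb{C}^{-}$) with the first $p$ columns of $\Psi_{2}$ ($\mathbb{C}^{-}$), so $R_{1}$ is analytic in $\mathbb{C}^{-}$, while $R_{4}$ pairs the last $q$ rows of $\Psi_{1}^{-1}$ ($\mathbb{C}^{+}$) with the last $q$ columns of $\Psi_{2}$ ($\mathbb{C}^{+}$), giving $R_{4}$ on $\mathbb{C}^{+}$. Each time the mechanism is simply that a product of two functions analytic in a common half-plane is analytic there.

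For the off-diagonal blocks the two factors are analytic in opposite half-planes: $S_{2}$ couples a $\mathbb{C}^{+}$-row with a $\mathbb{C}^{-}$-column, $S_{3}$ a $\mathbb{C}^{-}$-row with a $\mathbb{C}^{+}$-column, and likewise for $R_{2},R_{3}$. Such products share no common domain of analyticity, so none of them can be continued into either half-plane. To obtain the real-axis statements I would invoke the continuity of all these Jost columns and inverse-Jost rows up to $\zeta\in\mathbb{R}$, which follows from the same Volterra-integral estimates used for their analyticity: this makes $S_{2},S_{3}$ continuous on $\mathbb{R}$, while for $R_{2},R_{3}$ the boundary values approached from $\mathbb{C}^{+}$ and $\mathbb{C}^{-}$ need not match, so no analytic continuation across $\mathbb{R}$ exists.

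The computation is essentially bookkeeping, so the only genuine care is in two places: checking that the diagonal-block exponential prefactors cancel exactly (and that the off-diagonal ones, though they blow up in one half-plane, are entire and hence analytically harmless), and being precise that ``a product of functions analytic in opposite half-planes is non-analytic'' is the generic conclusion the theorem intends rather than an identity that could be spoiled by accidental cancellation. I expect the main subtlety to be this last point, together with distinguishing the continuity of $S_{2},S_{3}$ on $\mathbb{R}$ from the failure of analytic continuation of $R_{2},R_{3}$; both rest entirely on the continuity-up-to-the-boundary of the Jost solutions rather than on any new idea.
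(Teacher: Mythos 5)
Your proposal is correct and follows essentially the same route as the paper: writing $S(\zeta)=E^{-1}\Psi_{2}^{-1}\Psi_{1}E$ and $R(\zeta)=E^{-1}\Psi_{1}^{-1}\Psi_{2}E$, expanding each entry as a row-of-inverse-Jost times column-of-Jost product, and reading off the block-by-block analyticity from the half-plane analyticity already established for those rows and columns. Your version is in fact more explicit than the paper's (which only gestures at the entrywise products), and your caveat about the off-diagonal non-analyticity being the generic rather than strictly forced conclusion is a fair point the paper does not address.
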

\begin{proof}
          From Eq. \eqref{scatter},  we can get
        \begin{equation}
              E^{-1} \Psi_{2}^{-1} \Psi_{1} E =  S(\zeta).
        \end{equation}
More  explicity, we have

 \begin{equation}
 \begin{aligned}
 \Psi_{2}^{-1} \Psi_{1} = &  \\
           & \begin{bmatrix}
   [\Psi_{2}^{-1} ]_{1}[\Psi_{1}]_{1}& \dots & [\Psi_{2}^{-1} ]_{1}[\Psi_{1}]_{p} &[\Psi_{2}^{-1} ]_{1}[\Psi_{1}]_{p+1}& \dots &  [\Psi_{2}^{-1} ]_{1}[\Psi_{1}]_{p+q} \\
                                            \vdots &      \ddots                  & \vdots &   \vdots   &   \ddots& \vdots        \\
                                      [\Psi_{2}^{-1} ]_{p}[\Psi_{1}]_{1}    &   \dots &[\Psi_{2}^{-1} ]_{p}[\Psi_{1}]_{p}                                                         &[\Psi_{2}^{-1} ]_{p}[\Psi_{1}]_{p+1}&   \dots  & [\Psi_{2}^{-1} ]_{p}[\Psi_{1}]_{p+q} \\
  [\Psi_{2}^{-1} ]_{p+1}[\Psi_{1}]_{1}& \dots & [\Psi_{2}^{-1} ]_{p+1}[\Psi_{1}]_{p} &[\Psi_{2}^{-1} ]_{p+1}[\Psi_{1}]_{p+1}& \dots &  [\Psi_{2}^{-1} ]_{p+1}[\Psi_{1}]_{p+q} \\
                                            \vdots &      \ddots                  & \vdots &   \vdots   &   \ddots& \vdots        \\
                                      [\Psi_{2}^{-1} ]_{p+q}[\Psi_{1}]_{1}    &   \dots &[\Psi_{2}^{-1} ]_{p+q}[\Psi_{1}]_{p}                                                         &[\Psi_{2}^{-1} ]_{p+q}[\Psi_{1}]_{p+1}&   \dots  & [\Psi_{2}^{-1} ]_{p+q}[\Psi_{1}]_{p+q}
          \end{bmatrix}.
 \end{aligned}
 \end{equation}
With the aid of  the analyticity of the  columns of $\Psi_{1}$,  the rows of $\Psi_{2}$ and the  matrice E, we can proof the analyticity of the elements of $S(\zeta)$. Similarly, the analyticity of the elements of $R(\zeta)$ can be obtained. Thus, we complete the proof.
\end{proof}

In addition,  the potential  matrix $U$  has  the symmerty  as  follows:
\begin{equation}
       U=\begin{bmatrix}  0&  Q \\  R & 0  \end{bmatrix},\qquad  R= \epsilon Q^{\dagger}, \quad  \epsilon = \pm 1.
\end{equation}
\textbf{Case 1}:       $\epsilon = -1$.

From $R= - Q^{\dagger}$, we  can derive that  $U^{\dagger}=-U$, and then   obtain
\begin{equation}  \label{e1}
           \Psi_{i}^{\dagger}(\zeta^{*})=\Psi_{i}^{-1}(\zeta), \qquad   i=1,2.
\end{equation}
According  to  Eq. \eqref{scatter}, we  have
\begin{equation}
    S^{\dagger}(\zeta^{*})=R(\zeta),
\end{equation}
which  gives the  following relationships
\begin{equation}
    \left\{
          \begin{array}{l l l}
                            s_{i,j}^{*}(\zeta^{*})=r_{j,i}(\zeta), & \qquad  1\leq i \leq p, \quad  1\leq j \leq p,  &\qquad  \zeta \in \mathbb{C}^{+},  \\
                             s_{i,j}^{*}(\zeta)=r_{j,i}(\zeta), & \qquad  1\leq i \leq p,  \quad  p\leq j \leq p+q,   &\qquad  \zeta \in \mathbb{R},  \\
                              s_{i,j}^{*}(\zeta)=r_{j,i}(\zeta), & \qquad  p\leq i \leq p+q,  \quad  1\leq j \leq p,   &\qquad  \zeta \in \mathbb{R},  \\
                             s_{i,j}^{*}(\zeta^{*})=r_{j,i}(\zeta), & \qquad  p+1\leq i \leq p+q, \quad  p+1\leq j \leq p+q,  &\qquad  \zeta \in \mathbb{C}^{-}.  \\
          \end{array}
          \right.
\end{equation}
\textbf{Case 2}:       $\epsilon = 1$.

From $R=  Q^{\dagger}$,  we  can see that  $U^{\dagger}=-\sigma U \sigma $,  and then   obtain
\begin{equation} \label{e2}
           \Psi_{i}^{\dagger}(\zeta^{*})=\sigma \Psi_{i}^{-1}(\zeta) \sigma, \qquad   i=1,2.
\end{equation}
According  to  Eq. \eqref{scatter}, we  have
\begin{equation}
    S^{\dagger}(\zeta^{*})= \sigma R(\zeta) \sigma.
\end{equation}
Thus, we  successfully obtain  the relationship between the  scattering matrix and the inverse scattering matrix.

\section{Riemann-Hilbert Problem}
 A RH problem desired for the mmKdV  equation  involves two matrix functions: one is analytic in  $\mathbb{C}^{+}$,  and the other is analytic in  $\mathbb{C}^{-}$. Define the first matrix  function \begin{equation}
   \Gamma_{1}(x,\zeta)=([\Psi_{1}]_1,[\Psi_{1}]_2,\dots,[\Psi_{1}]_p,[\Psi_{2}]_{p+1},[\Psi_{2}]_{p+2},\dots, [\Psi_{2}]_{p+q}),
\end{equation}
which is an  analytic  function  of $\zeta$ in $\mathbb{C}^{+}$.  Next, we study the very large $\zeta$ asymptotic  behavior of $\Gamma_{1}(x,\zeta)$, which has the asymptotic expansion
\begin{equation}  \label{expansion}
   \Gamma_{1}=\Gamma_{1}^{(0)}+\frac{\Gamma_{1}^{(1)}}{\zeta}+\frac{\Gamma_{1}^{(2)}}{\zeta^2}+O(\frac{1}{\zeta^3}), \quad \zeta \to \infty.
\end{equation}
Inserting Eq. \eqref{expansion} into Eq. \eqref{lax12},  and comparing the coefficients
of $\zeta$ directly bring about
\begin{equation}
  \begin{array}{lll}
      O(1) &: & \Gamma_{1,x}^{(0)}+ i [\sigma,\Gamma_{1}^{(1)}]=U \Gamma_{1}^{(0)}, \\
  O(\zeta)& : &i [\sigma,\Gamma_{1}^{(0)}]=0, \\
  O(\zeta^{3}) &: &  i [\sigma,\Gamma_{1}^{(1)}]= U \Gamma_{1}^{(0)},  \\
  \end{array}
\end{equation}
from which we have $\Gamma_{1}^{(0)}=\mathbb{I} $,   namely,
\begin{equation}
      \Gamma_{1} \to \mathbb{I},\qquad  \zeta \in \mathbb{C}^{+}\to \infty.
\end{equation}
We  can  introduce the  matrix function    $\Gamma_{2}$,   which is  analytic in $\zeta \in \mathbb{C}^{-}$ in terms  of
\begin{equation}
      \Gamma_{2}(x,\zeta)= \begin{pmatrix}             [\Psi_{1}^{-1}]_{1}  \\   \vdots     \\   [\Psi_{1}^{-1}]_{p}    \\    [\Psi_{2}^{-1}]_{p+1}  \\   \vdots         \\  [\Psi_{2}^{-1}]_{p+q} \end{pmatrix} (x, \zeta).
\end{equation}
Similarly,  $\Gamma_{2}$ satisfies  the following condition:
\begin{equation}
      \Gamma_{2} \to \mathbb{I},\qquad  \zeta \in \mathbb{C}^{-}\to \infty.
\end{equation}
In  addition,  we  define some auxiliary  matrices
\begin{equation}
 H_{i}=diag(0,\dots,0,\mathop{\underline{1}}\limits_{i},0,\dots,0),
\end{equation}
then
\begin{equation}
   \begin{aligned}
             \Gamma_{1} &= \Psi_{1}H_{1}+\dots+ \Psi_{1}H_{p}+\Psi_{2}H_{p+1}+\dots+ \Psi_{2}H_{p+q},  \\
             \Gamma_{2} &= H_{1}\Psi_{1}^{-1}+\dots+ H_{p}\Psi_{1}^{-1}+H_{p+1}\Psi_{2}^{-1}+\dots+ H_{p+q}\Psi_{2}^{-1}.
    \end{aligned}
\end{equation}
Before  we give the normal  RH  Problem,  we need to do  some calculations:
\begin{equation}
   \begin{aligned}
                             \Gamma_{2}(\zeta)\Gamma_{1}(\zeta)       = &   ( H_{1}\Psi_{1}^{-1}+\dots+ H_{p}\Psi_{1}^{-1}+H_{p+1}\Psi_{2}^{-1}+\dots+ H_{p+q}\Psi_{2}^{-1}  )                    \\
                                   &    \times ( \Psi_{1}H_{1}+\dots+ \Psi_{1}H_{p}+\Psi_{2}H_{p+1}+\dots+ \Psi_{2}H_{p+q}  ).                  \\
    \end{aligned}
\end{equation}
Note that
\begin{equation}
      H_{i} H_{j} = \left\{
       \begin{array}{lll}
          H_{i},  & i=j,  &  1 \leq i,j \leq p+q,  \\
            0,    & i\neq j, &  1 \leq i,j \leq p+q,
      \end{array}
    \right.
\end{equation}

\begin{equation}
\begin{aligned}
 \Psi_{1}^{-1} \Psi_{2}&=E R(\zeta) E^{-1} = &    \\
        & \begin{bmatrix}
             r_{1,1} &  \dots &  r_{1,p}&r_{1,p+1}e^{-2i(\zeta x+4\zeta^3 t)}&  \dots &  r_{1,p+q}e^{-2i(\zeta x+4\zeta^3 t)}\\
              \vdots &  \ddots &   \vdots  &   \vdots  &  \ddots &    \vdots   \\
             r_{p,1} &  \dots &  r_{p,p}&r_{p,p+1}e^{-2i(\zeta x+4\zeta^3 t)}&  \dots &  r_{p,p+q}e^{-2i(\zeta x+4\zeta^3 t)}\\
             r_{p+1,1}e^{2i(\zeta x+4\zeta^3 t)} &  \dots &  r_{p+1,p}e^{2i(\zeta x+4\zeta^3 t)}&   r_{p+1,p+1}&  \dots &  r_{p+1,p+q}\\
              \vdots &  \ddots &   \vdots  &   \vdots  &  \ddots &    \vdots   \\
              r_{p+q,1}e^{2i(\zeta x+4\zeta^3 t)} &  \dots &  r_{p+q,p}e^{2i(\zeta x+4\zeta^3 t)}&   r_{p+q,p+1}&  \dots &  r_{p+q,p+q}
 \end{bmatrix},&   \\
 \end{aligned}
\end{equation}
and
\begin{equation}
\begin{aligned}
     \Psi_{2}^{-1} \Psi_{1}&=E S(\zeta) E^{-1}= &\\
      &   \begin{bmatrix}
             s_{1,1} &  \dots &  s_{1,p}&  s_{1,p+1}e^{-2i(\zeta x+4\zeta^3 t)}  &  \dots &  s_{1,p+q}e^{-2i(\zeta x+4\zeta^3 t)}\\
              \vdots &  \ddots &   \vdots  &   \vdots  &  \ddots &    \vdots   \\
             s_{p,1} &  \dots &  s_{p,p}&   s_{p,p+1}e^{-2i(\zeta x+4\zeta^3 t)}&  \dots &  s_{p,p+q}e^{-2i(\zeta x+4\zeta^3 t)}\\
             s_{p+1,1}e^{2i(\zeta x+4\zeta^3 t)} &  \dots &  s_{p+1,p}e^{2i(\zeta x+4\zeta^3 t)}&   s_{p+1,p+1}&  \dots &  s_{p+1,p+q}\\
              \vdots &  \ddots &   \vdots  &   \vdots  &  \ddots &    \vdots   \\
              s_{p+q,1}e^{2i(\zeta x+4\zeta^3 t)} &  \dots &  s_{p+q,p}e^{2i(\zeta x+4\zeta^3 t)}&   s_{p+q,p+1}&  \dots &  s_{p+q,p+q}
 \end{bmatrix},  &  \\
 \end{aligned}
\end{equation}
we  can  obtain
\begin{equation} \label{RH2}
  \begin{aligned}
    \Gamma_{2} \Gamma_{1} =  & J(x,\zeta) \\
     = & \begin{bmatrix}
               1 &  \dots &  0      &  r_{1,p+1}e^{-2i(\zeta x+4\zeta^3 t)}  &  \dots & r_{1,p+q}e^{-2i(\zeta x+4\zeta^3 t)}\\
              \vdots &  \ddots &   \vdots  &   \vdots  &  \ddots &    \vdots   \\
             0   &  \dots &  1&   r_{p,p+1}e^{-2i(\zeta x+4\zeta^3 t)}&  \dots &  r_{p,p+q}e^{-2i(\zeta x+4\zeta^3 t)}\\
             s_{p+1,1}e^{2i(\zeta x+4\zeta^3 t)} &  \dots &  s_{p+1,p}e^{2i(\zeta x+4\zeta^3 t)}&   1  &  \dots &  0  \\
              \vdots &  \ddots &   \vdots  &   \vdots  &  \ddots &    \vdots   \\
              s_{p+q,1}e^{2i(\zeta x+4\zeta^3 t)} &  \dots &  s_{p+q,p}e^{2i(\zeta x+4\zeta^3 t)}&  0  &  \dots &1
 \end{bmatrix}.
 \end{aligned}
\end{equation}

We  denote   $\Gamma_{1}$  and $\Gamma_{2}$ as   $\Gamma^{+}$  and $\Gamma^{-}$,   based  on  which  a RH  problem  can be set  up  as follows:
\begin{itemize}
  \item    $\Gamma^{\pm}$ are analytic in $\mathbb{C}^{+}=  \overline{\mathbb{C}^{+}}\backslash  \Sigma $   and   $\mathbb{C}^{-}=  \overline{\mathbb{C}^{-}}\backslash  \Sigma $, respectively,  where  $\Sigma$  represent the directed path  along the positive direction of   $\mathrm{Im} \zeta =0$.
  \item    jump  condition
                \begin{equation}
                   \Gamma^{-}\Gamma^{+}=J(x,\zeta),  \qquad  \zeta \in \mathbb{R}
                 \end{equation}
          where $J(x,\zeta)$ is from  Eq. \eqref{RH2}.
  \item  normalization  condition
           \begin{equation}
                  \begin{aligned}
                             \Gamma^{+}  \to \mathbb{I},    & \quad as  &  \zeta \to  \infty,  \\
                             \Gamma^{-}  \to \mathbb{I},    & \quad as  &  \zeta \to  \infty.
                  \end{aligned}
            \end{equation}
\end{itemize}

\section{Time  evolution  of  the scattering data}
In the previous section, we have obtained  the zeros  of $\det(\Gamma_{1})$ or $\det(\Gamma_{2})$. The zero and non-zero vectors $\vartheta_{j}$, $\widehat{\vartheta}_{j}$ form complete scattering data, which satisfy

 \begin{equation} \label{time}
       \begin{aligned}
              \Gamma_{1}(\zeta_{j}) \vartheta_{j} &=0,   \\
                \widehat{\vartheta}_{j} \Gamma_{2}(\zeta_{j}^{*}) & =0,
       \end{aligned}
\end{equation}
where $\vartheta_{j}$ and $\widehat{\vartheta}_{j}$  represent row vector and column vector, respectively.

Next, we begin to analyze the law of time and space evolution. First, we  derive the Eq. \eqref{time}  from $x$,
\begin{equation} \label{time2}
       \begin{aligned}
               \Gamma_{1,x}\vartheta_{j} + \Gamma_{1} \vartheta_{j,x} & =0,   \\
               \Gamma_{1,t}\vartheta_{j} + \Gamma_{1} \vartheta_{j,t} & =0.
       \end{aligned}
\end{equation}
Since
\begin{equation} \label{s1}
          \Gamma_{1,x}=\Psi_{1,x}H_{1}+\cdots +\Psi_{1,x}H_{p}+\Psi_{2,x}H_{p+1}+\cdots +\Psi_{2,x}H_{p+q},
\end{equation}
and
\begin{equation}  \label{s2}
         \Psi_{x}=- i \zeta [\sigma ,  \Psi] + U \Psi,
\end{equation}
we  can obtian
\begin{equation} \label{ss1}
          \Gamma_{1,x}=- i \zeta_{j} [\sigma , \Gamma_{1}] + U \Gamma_{1}.
\end{equation}
Following the  similar  process  as $\Gamma_{1,x}$, we get
\begin{equation}\label{ss2}
          \Gamma_{1,t}=- 4 i \zeta_{j}^{3} [\sigma , \Gamma_{1}] + U \Gamma_{1}.
\end{equation}
Substituting  Eqs. \eqref{ss1} and  \eqref{ss2}  into Eq. \eqref{time2}, and note that  $\Gamma_{1}\vartheta_{j}=0$, we  have
\begin{equation}
    \begin{aligned}
            \vartheta_{j,x} + i \zeta_{j} \sigma \vartheta_{j} & =0, \\
            \vartheta_{j,t} + 4i \zeta_{j}^{3} \sigma \vartheta_{j} & =0.
    \end{aligned}
\end{equation}
By sloving the above vector differential equations, we have
\begin{equation}\label{tt}
    \vartheta_{j}=e^{-i(\zeta_{j} x + 4\zeta_{j}^{3}t) \sigma} \vartheta_{j,0},
\end{equation}
where  $\vartheta_{j,0}$ is a constant vector.

Moreover, for $\epsilon=-1$,  we  have  $\Psi_{i}^{\dagger}(\zeta^{*})=\Psi_{i}^{-1}(\zeta) (i=1,2)$,  then
\begin{equation}  \label{sy1}
  \Gamma_{1}^{\dagger}(\zeta^{*})=\Gamma_{2}(\zeta), \qquad  \zeta  \in \mathbb{C}^{-}.
\end{equation}
From Eq. \eqref{time} and Eq. \eqref{sy1},  we know
\begin{equation}
     \widehat{\vartheta_{j}}=\vartheta_{j}^{\dagger},  \qquad   1\leq j \leq N,
\end{equation}
where $N$ is the numeber of zeros of $\det(\Gamma_{1}(\zeta))$.  Therefore,
\begin{equation}
     \widehat{\vartheta}_{j} = \vartheta_{j,0}^{\dagger} e^{i(\zeta_{j}^{*}x +4 \zeta^{*3}_{j}t)\sigma}.
\end{equation}
For $\epsilon= 1$,  we  have  $\Psi_{i}^{\dagger}(\zeta^{*})=\sigma \Psi_{i}^{-1}(\zeta)\sigma,(i=1,2) $,  then
\begin{equation}   \label{sy2}
  \Gamma_{1}^{\dagger}(\zeta^{*})=\sigma \Gamma_{2}(\zeta) \sigma , \qquad  \zeta  \in \mathbb{C}^{-}.
\end{equation}
From Eq. \eqref{time} and Eq. \eqref{sy2},  we know
\begin{equation}
     \widehat{\vartheta_{j}}=\vartheta_{j}^{\dagger} \sigma,  \qquad   1\leq j \leq N,
\end{equation}
where  $N$ is the numeber of zeros of $\det(\Gamma_{1}(\zeta))$. Therefore,
\begin{equation}
     \widehat{\vartheta}_{j} = \vartheta_{j,0}^{\dagger} e^{i(\zeta_{j}^{*}x +4 \zeta^{*3}_{j}t)\sigma} \sigma, \qquad   1\leq j \leq N.
\end{equation}
Thus, we comlpete the analysis  of the law of time and space evolution.

\section{Exact solutions of mmKdV equation}
In fact,   $\Gamma_{1}$  has  the asymptotic expansion as follows
\begin{equation}  \label{expansion1}
     \Gamma_{1}(\zeta)= \mathbb{I} + \frac{P_{1}^{(1)}}{\zeta} +  \frac{P_{1}^{(2)}}{\zeta^2} + O(\frac{1}{\zeta^3}), \qquad  \zeta \to \infty.
\end{equation}
Substituting  the expression \eqref{expansion1} into Eq. \eqref{lax12}  yields
\begin{equation}\label{compare}
   i [\sigma, \Gamma_{1}^{(1)}] = U.
\end{equation}
Comparing with the elements of the matrices of Eq. \eqref{compare}, we can finally recover the  potential  function.

 We  define
\begin{equation}
  Q=
  \begin{bmatrix}
     Q_{1,1}   & Q_{1,2} & \dots & Q_{1,q}  \\
      \vdots      & \vdots    & \ddots   & \vdots \\
     Q_{p,1}    &Q_{p,2}      &  \dots  & Q_{p,q}
   \end{bmatrix},
\end{equation}
where
\begin{equation}
   Q_{i,j}= 2 i (\Gamma_{1}^{(1)})_{i,j+p},  \qquad   1\leq i,j \leq p.
\end{equation}
For the above RH problem, under the reflectionless case($S_3$=0, in Eq. \eqref{3}), the discrete spectrum corresponds to multiple soliton solutions. In addition, the solution $\Gamma_{1}$, $\Gamma_{2}$ of the RH problem can be given explicitly,
\begin{equation} \label{solu}
  \begin{aligned}
       \Gamma_{1}(\zeta) = \mathbb{I} -\sum_{k=1}^{N} \sum_{j=1}^{N} \frac{\vartheta_{k}\widehat{\vartheta}_{j}(M^{-1})_{k,j}}{\zeta-\zeta_{j}^{*}},   \\
       \Gamma_{2}(\zeta) =  \mathbb{I} + \sum_{k=1}^{N} \sum_{j=1}^{N} \frac{\vartheta_{k}\widehat{\vartheta}_{j}(M^{-1})_{k,j}}{\zeta-\zeta_{k}},
  \end{aligned}
\end{equation}
where $M$ is a $N\times N$ matrix, the  element of $M$ is $ M_{k,j}=\frac{\widehat{\vartheta}_{k}\vartheta_{j}}{\zeta_{j}-\zeta_{k}^{*}} $.
From the above equations, we  can get
\begin{equation}
  \Gamma_{1}^{(1)}=-\sum_{k=1}^{N} \sum_{j=1}^{N} \vartheta_{k}\widehat{\vartheta}_{j}(M^{-1})_{k,j}.
\end{equation}
Let
\begin{equation}
  \vartheta_{k,0}=(\alpha_{1,k},\alpha_{2,k},\dots,\alpha_{p+q,k})^{T}, \qquad \theta_{k}=-i(\zeta_{k} x+4\zeta_{k}^{3}t),
\end{equation}
then we obtain the  solutions of the mmKdV equation as follows:

\begin{equation} \label{solution}
  \begin{aligned}
           Q_{m,n}=2i \epsilon \sum_{k=1}^{N} \sum_{j=1}^{N} \alpha_{m,k} \alpha_{n+p,j}^{*}e^{\theta_{k}-\theta_{j}^{*}}(M^{-1})_{k,j}, \quad \epsilon = \pm 1.
   \end{aligned}
\end{equation}
If we define
\begin{equation}
   F_{i,j}=
       \begin{pmatrix}
                0 &  \alpha_{i,1}e^{\theta_{1}} & \alpha_{i,2}e^{\theta_{2}} & \dots & \alpha_{i,N}e^{\theta_{N}} \\
               -\epsilon \alpha_{j+p,1}^{*}e^{-\theta_{1}^{*}}  & M_{1,1}  & M_{1,2} &  \dots   &  M_{1,N}  \\
            - \epsilon    \alpha_{j+p,2}^{*}e^{-\theta_{2}^{*}}  & M_{2,1}  & M_{2,2} &  \dots   &  M_{2,N}  \\
                  \vdots &  \vdots  & \vdots  &     &  \vdots              \\
            - \epsilon     \alpha_{j+p,N}^{*}e^{-\theta_{N}^{*}}  & M_{N,1}  & M_{N,2} &  \dots   &  M_{N,N}  \\
       \end{pmatrix},  \quad \epsilon = \pm 1,
\end{equation}
  we obtain the explicit expression of   the exact solutions  of  mmKdV equation by
\begin{equation}
         Q_{i,j}=2i \frac{\det(F_{i,j})}{\det(M)}, \qquad  1 \leq i \leq p,   1 \leq j \leq q.
\end{equation}

\section{Special examples}

Next, let's consider the following special cases:

\subsection{ $Q$ is taken as $2\times 2$ form}
 At first, we consider the following form of potential matrix
\begin{equation} \label{Q1}
    Q= \begin{bmatrix}  u_1(x,t)  &  u_2(x,t)\\  -u_{2}(x,t)  & u_{1}(x,t)  \end{bmatrix},
\end{equation}
where  $R=Q^{\dagger}$ and $p=q=2$.
It is a direct calculation to verify that the two-component  mKdV equations of system \eqref{cmkdv} can be obtained by substituting Eq. \eqref{Q1} into Eq. \eqref{mmkdv}, which is also studied in  \cite{zhang2008lax}.

\subsubsection{ A variety of Rational Solutions and Physical Visions}
Taking  $N=1$ in Eq. \eqref{solution}, once-iterated solutions are as follows
\begin{equation} \label{solution 11}
\left\{
   \begin{aligned}
          u_{1}(x,t)=  &  2i \frac{\alpha_{1,1} \alpha_{3,1}^{*} (\zeta_{1}-\zeta_{1}^{*})e^{\theta_1-\theta_{1}^{*}}}{ (|\alpha_{1,1}|^2+|\alpha_{2,1}|^2)e^{\theta_1+\theta_{1}^{*} }- (|\alpha_{3,1}|^2+|\alpha_{4,1}|^2)e^{-\theta_1-\theta_{1}^{*} }},  \\
          u_{2}(x,t)=  &  2i \frac{\alpha_{1,1} \alpha_{4,1}^{*} (\zeta_{1}-\zeta_{1}^{*})e^{\theta_1-\theta_{1}^{*}}}{ (|\alpha_{1,1}|^2+|\alpha_{2,1}|^2)e^{\theta_1+\theta_{1}^{*} } - (|\alpha_{3,1}|^2+|\alpha_{4,1}|^2)e^{-\theta_1-\theta_{1}^{*} }},
   \end{aligned}
\right.
\end{equation}
where
\begin{equation}
\zeta_{1}=a_1+i b_1, \quad \theta_1= - i (\zeta_1 x +4 \zeta_1^3 t).
\end{equation}
If we set  $\alpha_{3,1}=\frac{\sqrt{3}}{2}$, $\alpha_{4,1}=\frac{1}{2}$  and $|\alpha_{1,1}|^2 = |\alpha_{2,1}|^2 =\frac{1}{2} e^{2 \xi}$,  then Eq. \eqref{solution 11} can be simplified as follows
\begin{equation} \label{sf1.1}
\left\{
       \begin{aligned}
             u_1(x,t) &= -\sqrt{3} \alpha_{1,1} b_1 e^{\theta_1-\theta_1^{*}-\xi} csch(\theta_{1}+\theta_{1}^{*}+\xi), \\
              u_2(x,t) &= -  \alpha_{1,1} b_1 e^{\theta_1-\theta_1^{*}-\xi} csch(\theta_{1}+\theta_{1}^{*}+\xi).
       \end{aligned}
\right.
\end{equation}

The localized structures and  dynamic behaviors  of once-iterated solutions are plotted in Fig. 1.
 It is noteworthy that although the waves corresponding to the soutions  propagate in a fixed direction  and show strong periodicity,   the values  are infinity in certain points,  i.e. the solutions in Eq. \eqref{sf1.1}  are  singular  which are different from usual soliton solutions.

\noindent
{\rotatebox{0}{\includegraphics[width=3.6cm,height=3.0cm,angle=0]{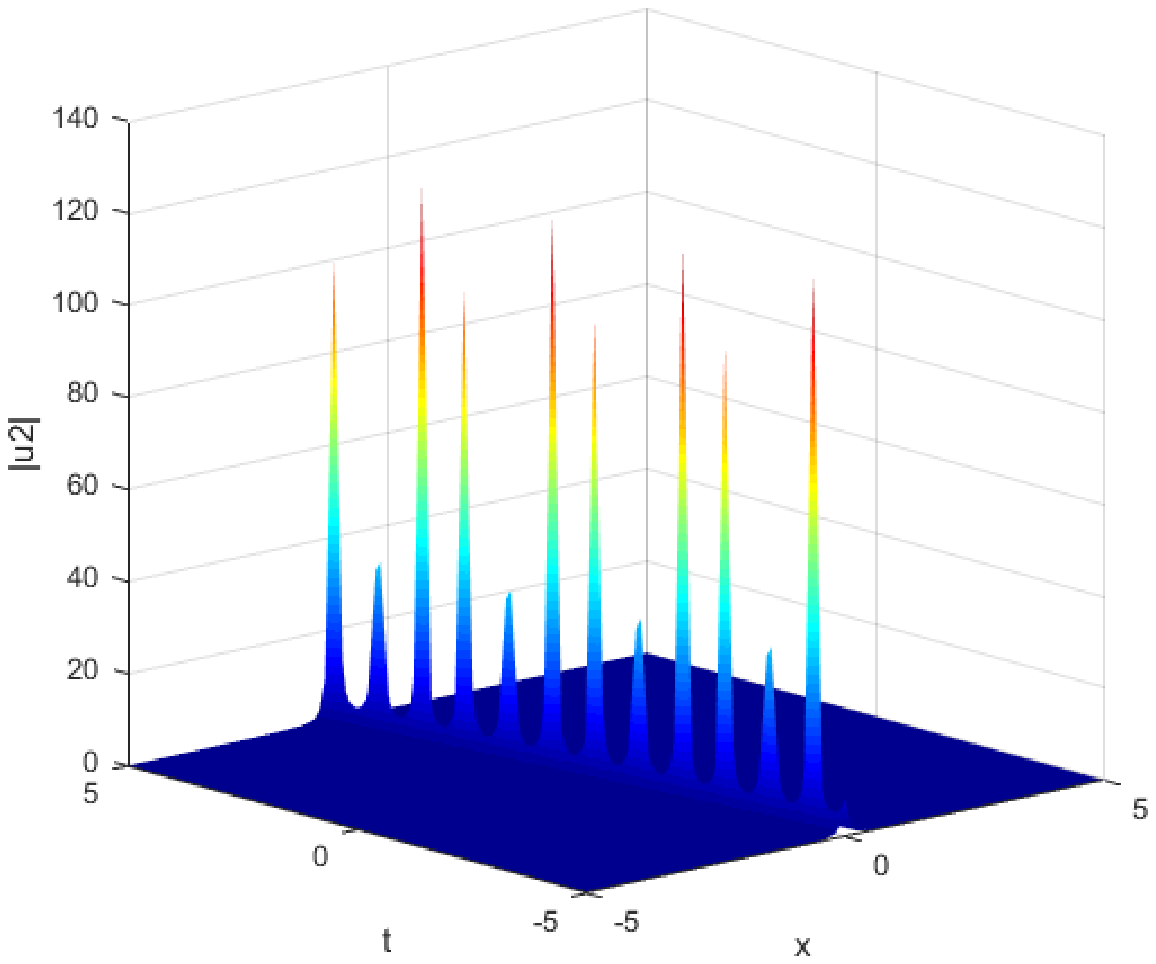}}}
~~~~
{\rotatebox{0}{\includegraphics[width=3.6cm,height=3.0cm,angle=0]{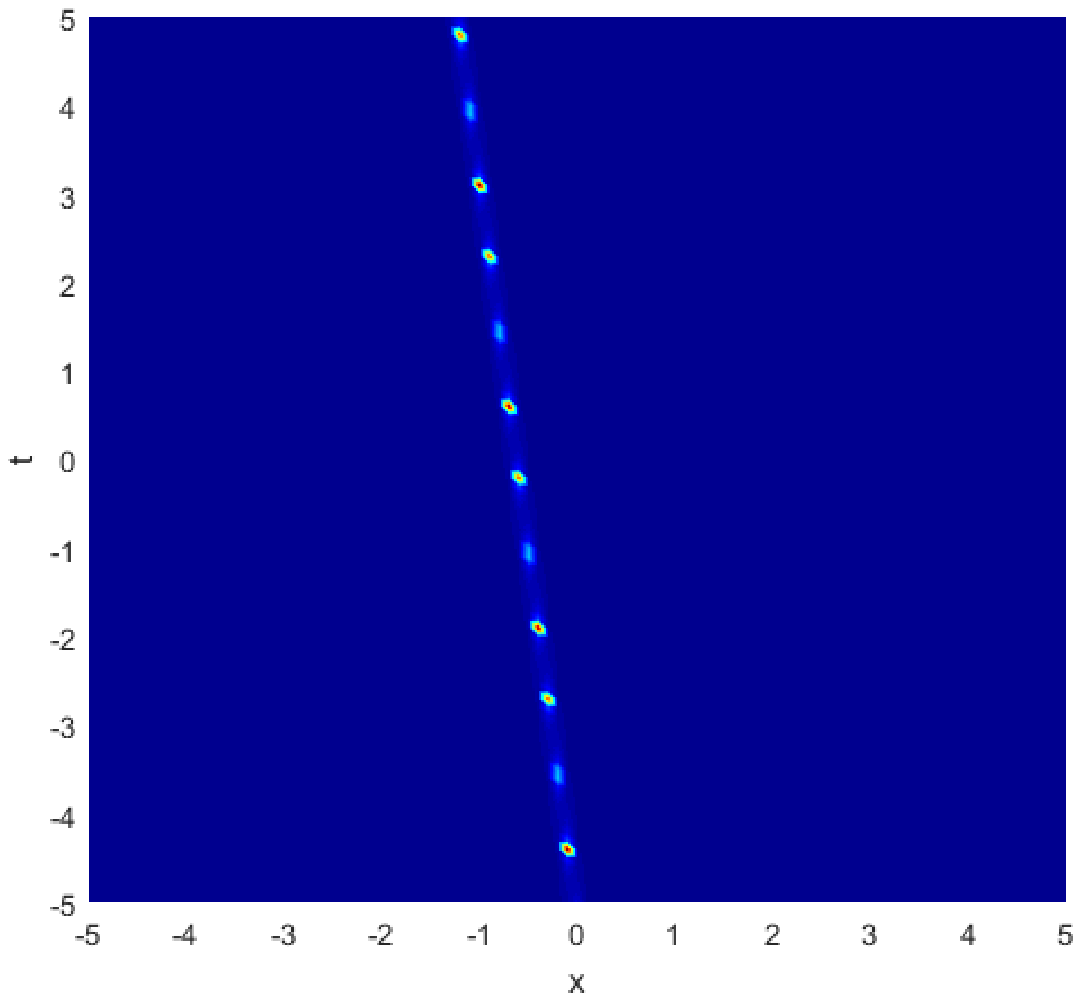}}}
~~~~
{\rotatebox{0}{\includegraphics[width=3.6cm,height=3.0cm,angle=0]{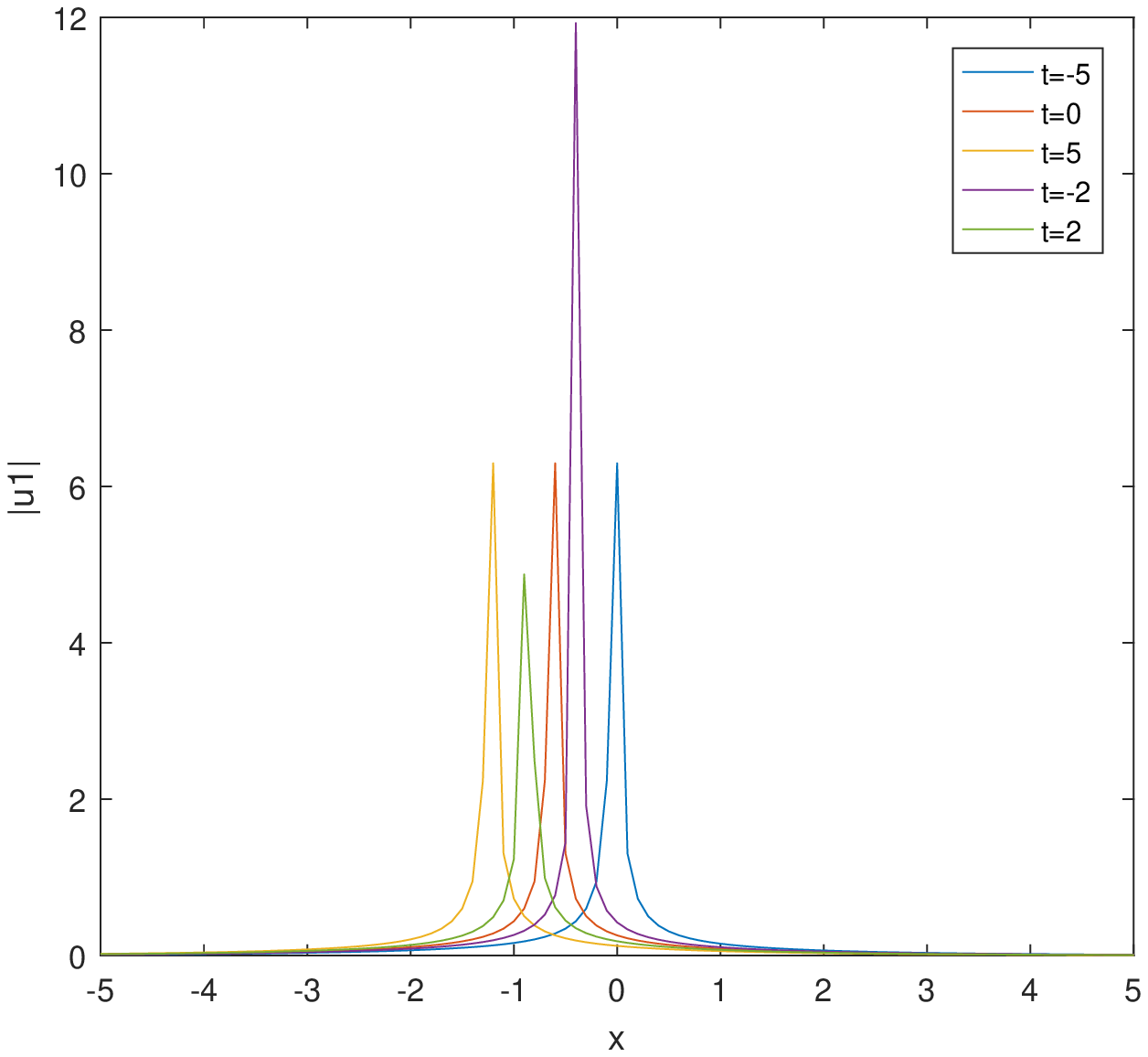}}}

$\ \qquad~~~~~~(\textbf{a})\qquad \ \qquad\qquad\qquad\qquad~(\textbf{b})
\ \qquad\qquad\qquad\qquad\qquad~(\textbf{c})$\\
\noindent
{\rotatebox{0}{\includegraphics[width=3.6cm,height=3.0cm,angle=0]{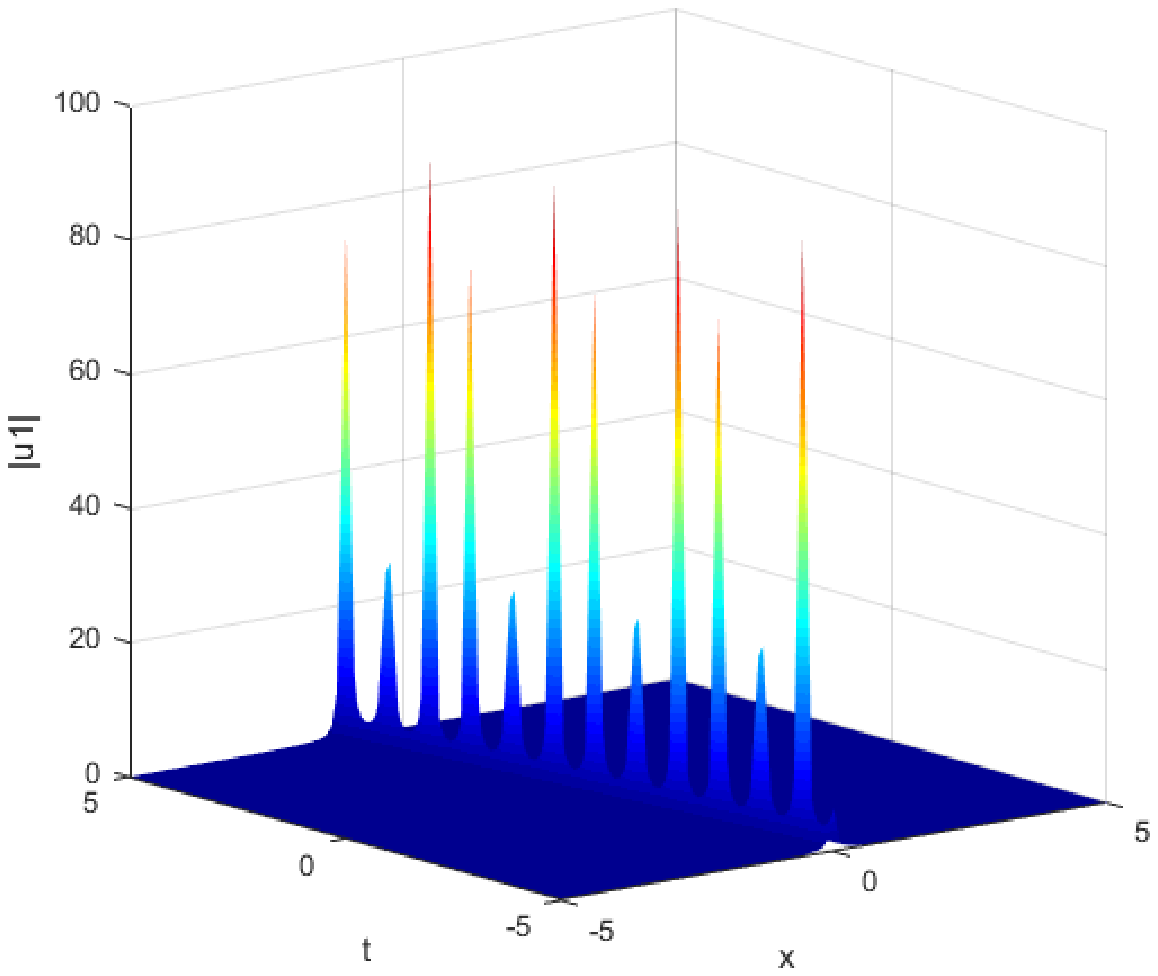}}}
~~~~
{\rotatebox{0}{\includegraphics[width=3.6cm,height=3.0cm,angle=0]{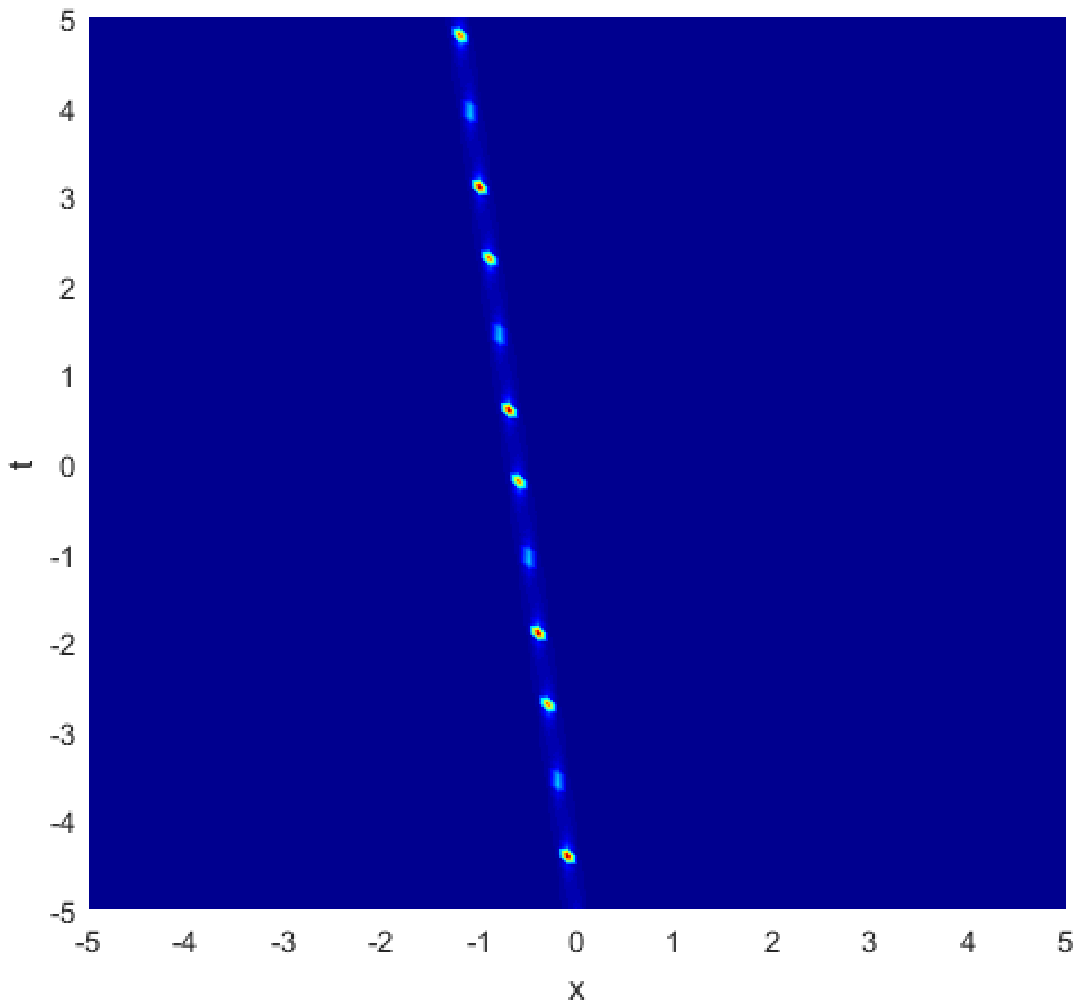}}}
~~~~
{\rotatebox{0}{\includegraphics[width=3.6cm,height=3.0cm,angle=0]{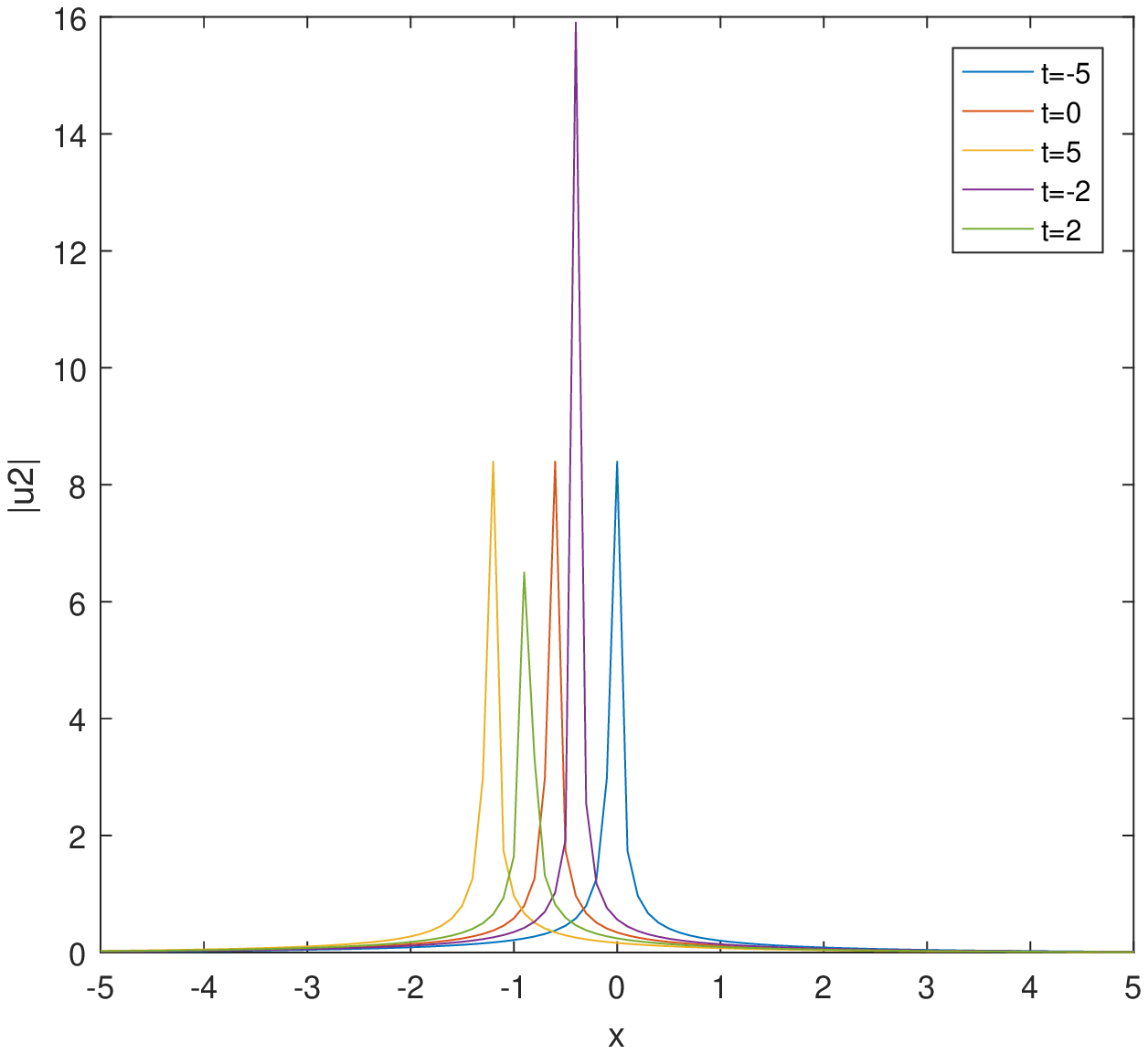}}}

$\ \qquad~~~~~~(\textbf{d})\qquad \ \qquad\qquad\qquad\qquad~(\textbf{e})
\ \qquad\qquad\qquad\qquad\qquad~(\textbf{f})$\\
\noindent { \small \textbf{Figure 1.} (Color online) Once-iterated solutions  to Eq. \eqref{sf1.1}  with the parameters  $a_1=0.2$, $b_1=0.3$,  $\xi=0.3$, $\alpha_{1,1}=\alpha_{1,2}=\frac{\sqrt{2}}{2} e^{\xi}$.
\label{fig1.1}
$\textbf{(a)(d)}$: the structures of the once-iterated solutions,
$\textbf{(b)(e)}$: the density plot,
$\textbf{(c)(f)}$: the wave propagation of the once-iterated solutions.} \\

In what follows, if we take $N=2$,  the twice-iterated solutions can be expressed by
\begin{equation}
\left\{
   \begin{aligned}
          u_{1}(x,t)=  &  \frac{2i}{M_{11}M_{22}-M_{12}M_{21}}   (\alpha_{1,1}\alpha_{3,1}^{*}e^{\theta_{1}-\theta_{1}^{*}}M_{22}-\alpha_{1,1}\alpha_{3,2}^{*}e^{\theta_{1}-\theta_{2}^{*}}M_{12}\\
    &-\alpha_{1,2}\alpha_{3,1}^{*}e^{\theta_{2}-\theta_{1}^{*}}M_{21}+\alpha_{1,2}\alpha_{3,2}^{*}e^{\theta_{2}-\theta_{2}^{*}}M_{11}),  \\
        u_{2}(x,t)=  &  \frac{2i}{M_{11}M_{22}-M_{12}M_{21}}  (\alpha_{1,1}\alpha_{4,1}^{*}e^{\theta_{1}-\theta_{1}^{*}}M_{22} - \alpha_{1,1}\alpha_{4,2}^{*}e^{\theta_{1}-\theta_{2}^{*}}M_{12} \\
  & -\alpha_{1,2}\alpha_{4,1}^{*}e^{\theta_{2}-\theta_{1}^{*}}M_{21}+\alpha_{1,2}\alpha_{4,2}^{*}e^{\theta_{2}-\theta_{2}^{*}}M_{11}),  \\
   \end{aligned}
\right.
\end{equation}
where
\begin{equation}
\left\{
   \begin{aligned}
         M_{11}= &  \frac{(|\alpha_{1,1}|^2+|\alpha_{2,1}|^2)e^{\theta_{1}^{*}+\theta_{1}}-(|\alpha_{3,1}|^2+|\alpha_{4,1}|^2)e^{-\theta_{1}^{*}-\theta_{1}}}{\zeta_{1}-\zeta_{1}^{*}}, \\
         M_{12}= &  \frac{(\alpha_{1,1}^{*}\alpha_{1,2}+\alpha_{2,1}^{*}\alpha_{2,2})e^{\theta_{1}^{*}+\theta_{2}}-(\alpha_{ 3,1}^{*}\alpha_{3,2}+\alpha_{4,1}^{*}\alpha_{4,2})e^{-\theta_{1}^{*}-\theta_{2}}}{\zeta_{2}-\zeta_{1}^{*}},  \\
          M_{21}= &  \frac{(\alpha_{1,2}^{*}\alpha_{1,1}+\alpha_{2,2}^{*}\alpha_{2,1})e^{\theta_{2}^{*}+\theta_{1}}-(\alpha_{ 3,2}^{*}\alpha_{3,1}+\alpha_{4,2}^{*}\alpha_{4,1})e^{-\theta_{2}^{*}-\theta_{1}}}{\zeta_{1}-\zeta_{2}^{*}},   \\
         M_{22}= &  \frac{(|\alpha_{1,2}|^2+|\alpha_{2,2}|^2)e^{\theta_{2}^{*}+\theta_{2}}-(|\alpha_{3,2}|^2+|\alpha_{4,2}|^2)e^{-\theta_{2}^{*}-\theta_{2}}}{\zeta_{2}-\zeta_{2}^{*}},\\
   \end{aligned}
\right.
\end{equation}
with  $\zeta_{1}=a_1+i b_1$, $\zeta_{2}=a_2+i b_2$, $\theta_1=-i(\zeta_1 x + 4 \zeta_1^3 t)$ and $\theta_2=-i(\zeta_2 x + 4 \zeta_2^3 t)$.

If we select the parameters as $\alpha_{3,1}=\alpha_{3,2}=\frac{\sqrt{2}}{2}$, $\alpha_{4,1}=\alpha_{4,2}=-\frac{\sqrt{2}}{2}$, $\alpha_{1,1}=\alpha_{1,2}$, $\alpha_{2,1}=\alpha_{2,2}$ and $|\alpha_{1,1}|^2 = |\alpha_{2,1}|^2 =\frac{1}{2} e^{2 \xi}$, then  the explicit solutions can be expressed as follow
\begin{equation} \label{sf1.2}
\left\{
   \begin{aligned}
         u_1(x,t) & = \sqrt{2} i \frac{(\alpha_{1,1}e^{\theta_{1}-\theta_{1}^{*}} M_{2,2}- \alpha_{1,1} e^{\theta_1-\theta_2^{*}}M_{1,2} -\alpha_{1,2} e^{\theta_2-\theta_1^{*}}M_{2,1} + \alpha_{1,2} e^{\theta_2- \theta_2^{*}}M_{1,1})}{M_{11}M_{22}-M_{12}M_{21}},   \\
                  u_2(x,t) & = - \sqrt{2}  i  \frac{(\alpha_{1,1}e^{\theta_{1}-\theta_{1}^{*}} M_{2,2}- \alpha_{1,1} e^{\theta_1-\theta_2^{*}}M_{1,2} -\alpha_{1,2} e^{\theta_2-\theta_1^{*}}M_{2,1} + \alpha_{1,2} e^{\theta_2- \theta_2^{*}}M_{1,1})}{M_{11}M_{22}-M_{12}M_{21} },
   \end{aligned}
\right.
\end{equation}
where
\begin{equation}
\left\{
   \begin{aligned}
         M_{11}  & = -\frac{i}{b_1}e^{\xi} \sinh (\theta_1 +\theta_1^{*}+ \xi), \\
         M_{12}  & = \frac{2 e^{\xi}}{a_2-a_1+i(b_1+b_2)} \sinh (\theta_1^{*} + \theta_2 + \xi), \\
         M_{21}  & = \frac{2 e^{\xi}}{a_1-a_2 +i(b_1+b_2)} \sinh (\theta_2^{*} + \theta_1 + \xi), \\
         M_{22}  & = -\frac{i}{b_2}e^{\xi} \sinh (\theta_2 +\theta_2^{*}+ \xi).
   \end{aligned}
\right.
\end{equation}

The localized structures and  dynamic behaviors  of twice-iterated solutions are  plotted  in  Fig. 2.
It can be seen from Fig. 2  that with the special parameters, the solution as a whole is wavy and shows a significant attenuation trend as time goes on.

\noindent
{\rotatebox{0}{\includegraphics[width=3.7cm,height=3.0cm,angle=0]{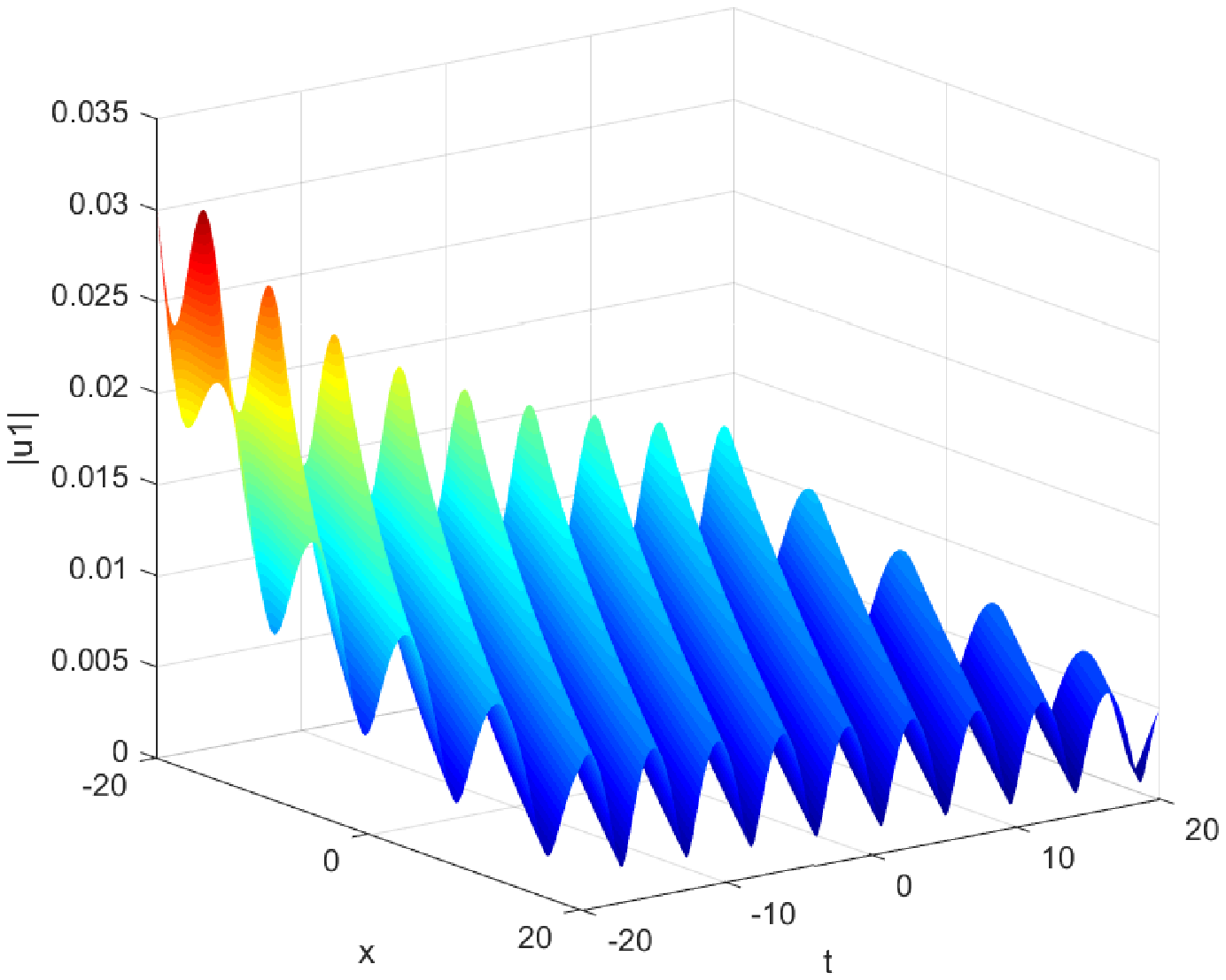}}}
~~~~
{\rotatebox{0}{\includegraphics[width=3.7cm,height=3.0cm,angle=0]{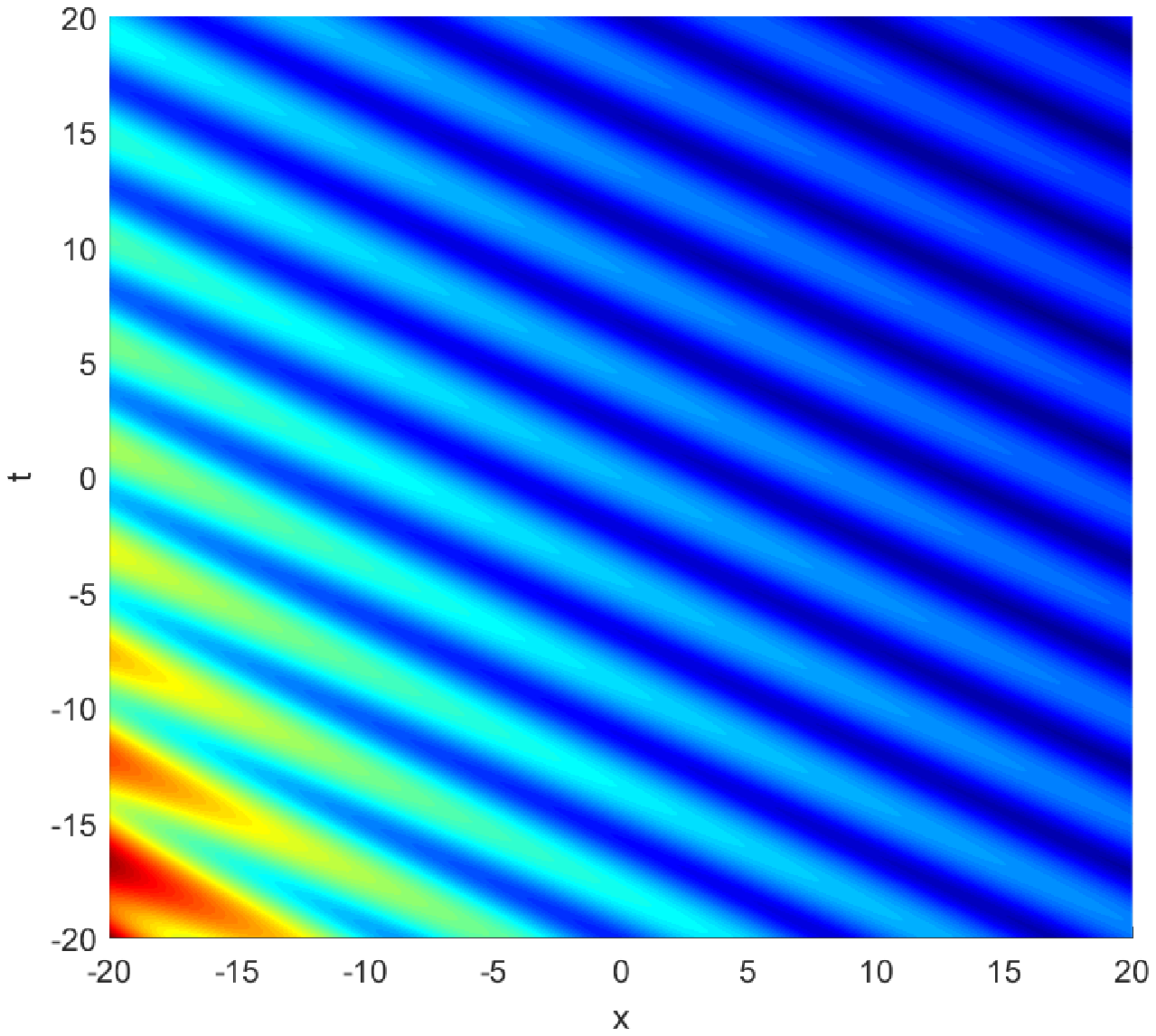}}}
~~~~
{\rotatebox{0}{\includegraphics[width=3.7cm,height=3.0cm,angle=0]{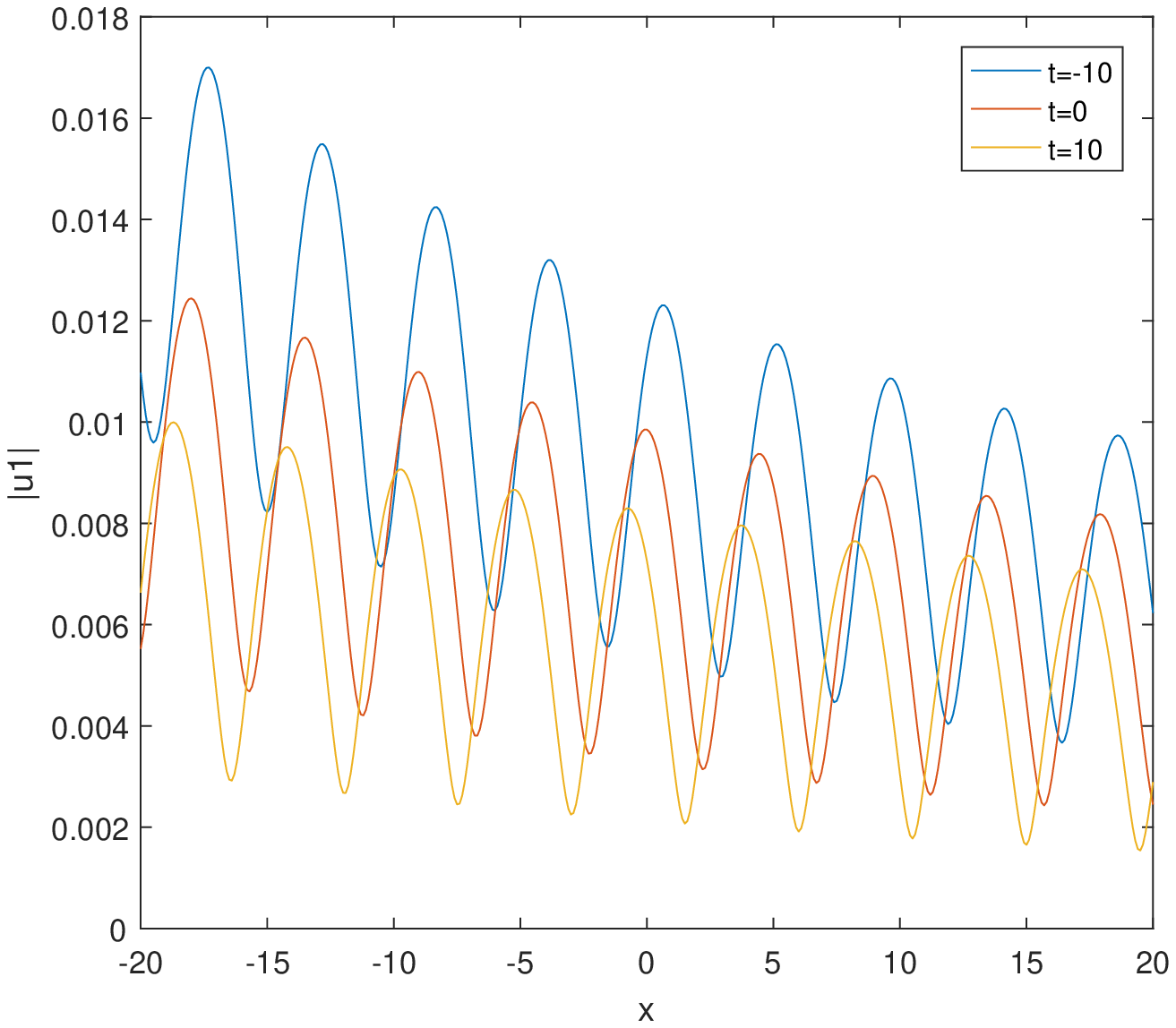}}}

$\ \qquad~~~~~~(\textbf{a})\qquad \ \qquad\qquad\qquad\qquad~(\textbf{b})
\ \qquad\qquad\qquad\qquad\qquad~(\textbf{c})$\\
\noindent
{\rotatebox{0}{\includegraphics[width=3.6cm,height=3.0cm,angle=0]{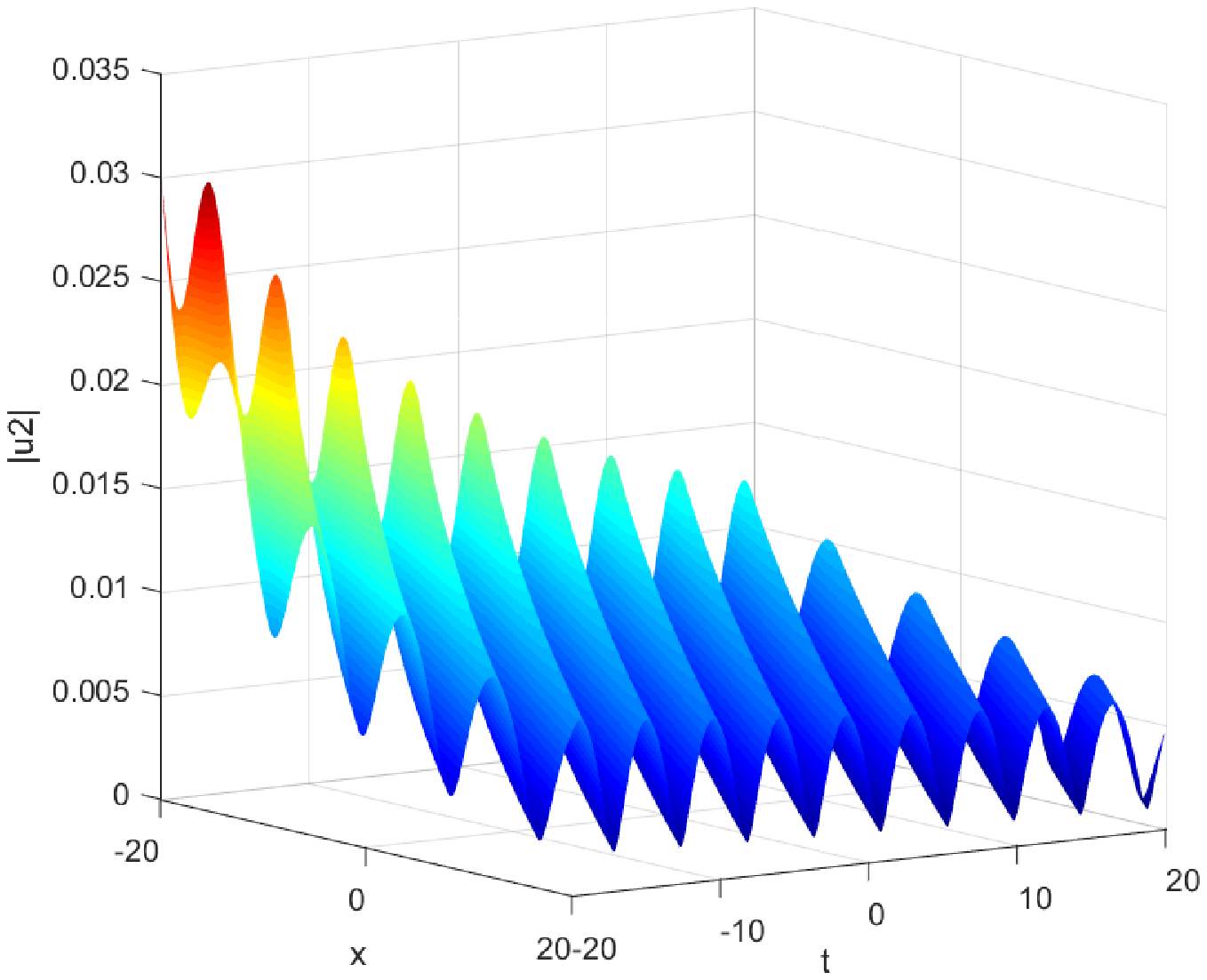}}}
~~~~
{\rotatebox{0}{\includegraphics[width=3.6cm,height=3.0cm,angle=0]{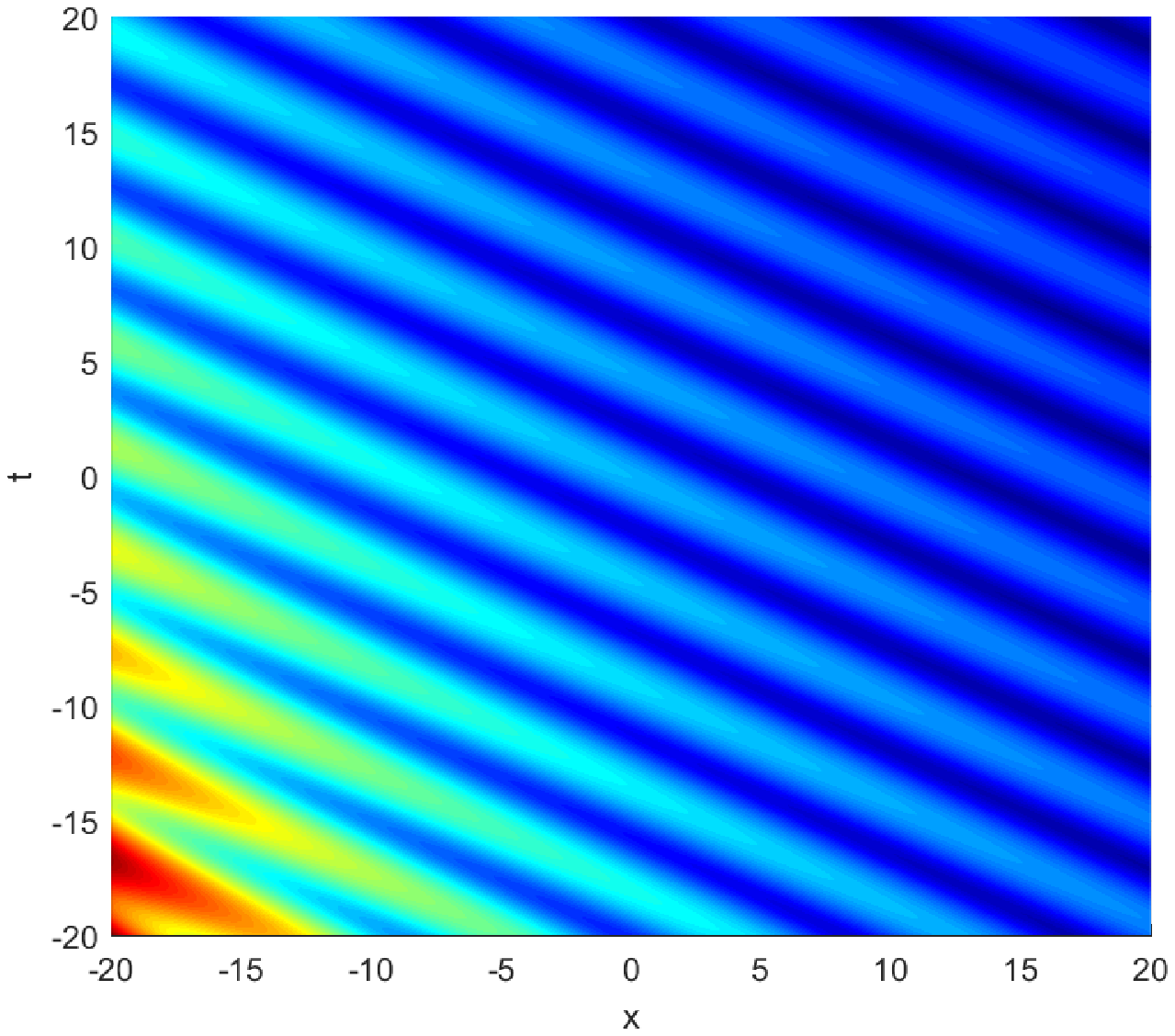}}}
~~~~
{\rotatebox{0}{\includegraphics[width=3.6cm,height=3.0cm,angle=0]{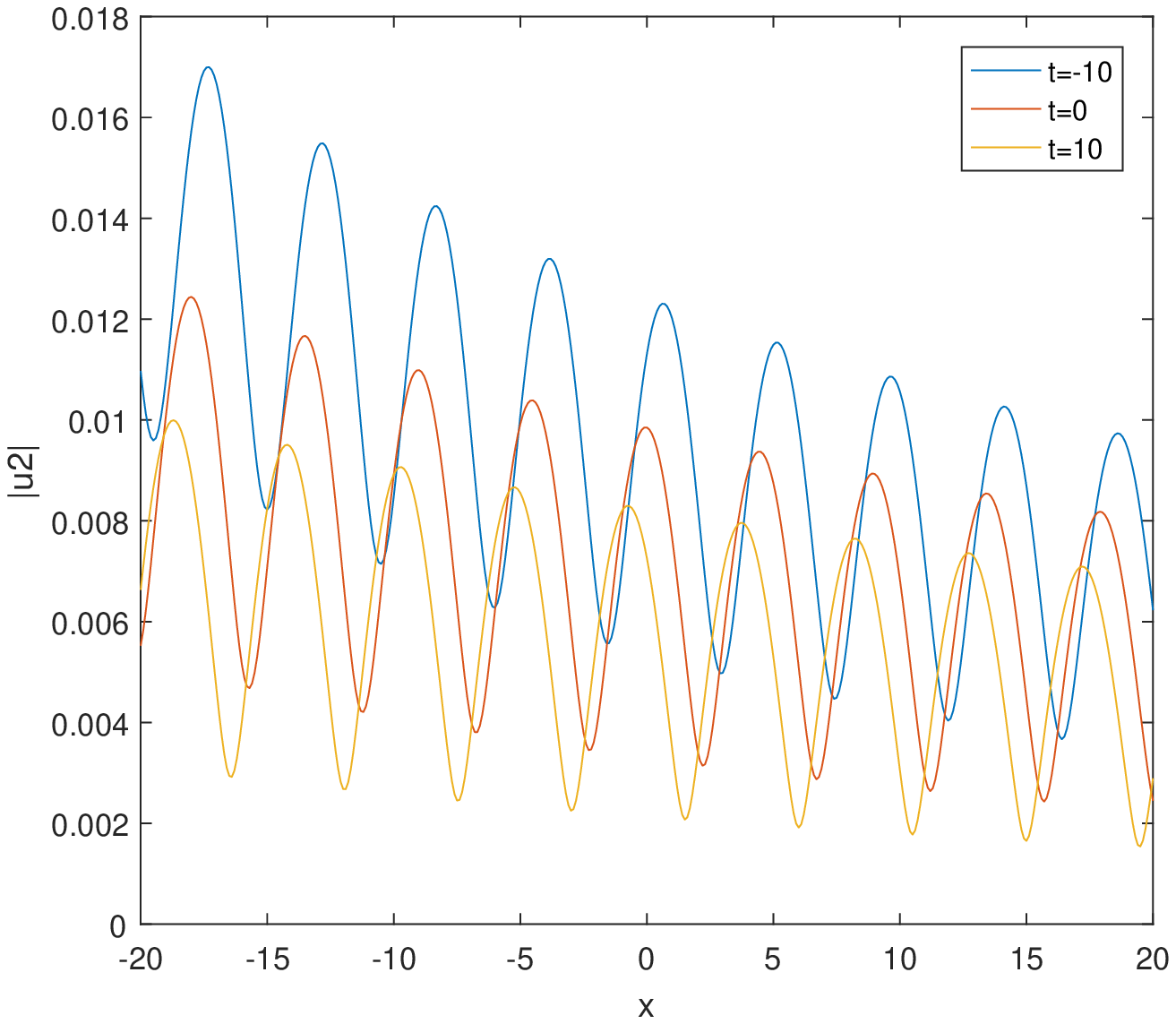}}}

$\ \qquad~~~~~~(\textbf{d})\qquad \ \qquad\qquad\qquad\qquad~(\textbf{e})
\ \qquad\qquad\qquad\qquad\qquad~(\textbf{f})$\\
\noindent { \small \textbf{Figure 2.} Twice-iterated  solutions to Eq. \eqref{sf1.2} with parameters  $\xi=0.3$, $a_1=-0.3$, $b_1=0.001$, $a_2=0.4$,  $b_2=0.002$, $\alpha_{1,1}=\alpha_{1,2}=\frac{\sqrt{2}}{2} e^{\xi}$.
$\textbf{(a)(d)}$: the structures of the twice-iterated solutions,
$\textbf{(b)(e)}$: the density plot,
$\textbf{(c)(f)}$: the wave propagation of the twice-iterated solutions.}  \\

\subsection{ $Q$ is taken as 4$\times$4 form}
In this subsection, we consider the special case for potential matrix as follows
\begin{equation} \label{Q2}
    Q= \begin{bmatrix}
        u_1(x,t)  &  u_2(x,t) &u_3(x,t) & 0\\
         -u_{2}(x,t)  & u_{1}(x,t) &0 & u_3(x,t) \\
         -u_{3}(x,t) & 0 & u_{1}(x,t) & -u_2(x,t) \\
          0  & -u_{3}(x,t) &u_2(x,t) & u_1(x,t)
        \end{bmatrix},
\end{equation}
here  $R=-Q^{\dagger}$, $p=q=4$ and  $Q$ is a real-valued matrix.
It is a direct calculation to verify that the three-component modified KdV equations of system \eqref{cmkdv} can be obtained by substituting Eq. \eqref{Q2} into Eq. \eqref{mmkdv}, which is also studied in \cite{zhang2008lax}.

Note that, besides $R=-Q^{\dagger}$, another symmetry $Q^{*}=Q$ should be considered. Based on above symmetry, we can easily get that
\begin{equation} \label{ssy1}
    \Psi_{i}^{*}(-\zeta^{*})= \Psi_{i}(\zeta), \qquad i=1,2.
\end{equation}

Moreover,
\begin{equation}
   S^{*}(-\zeta^{*})=S(\zeta),
\end{equation}
equivalently,
\begin{equation}
    \left\{
          \begin{array}{lll}
                            s_{i,j}(-\zeta^{*})=s_{i,j}(\zeta), & \qquad  1\leq i \leq p, \quad  1\leq j \leq p  &\qquad  \zeta \in \mathbb{C}^{+},  \\
                             s_{i,j}(-\zeta)=s_{i,j}(\zeta), & \qquad  1\leq i \leq p,  \quad  p\leq j \leq p+q,   &\qquad  \zeta \in \mathbb{R},  \\
                            s_{i,j}(-\zeta)=s_{i,j}(\zeta), & \qquad  p\leq i \leq p+q,  \quad  1\leq j \leq p,   &\qquad  \zeta \in \mathbb{R},  \\
                             s_{i,j}(-\zeta^{*})=s_{i,j}(\zeta), & \qquad  p+1\leq i \leq p+q, \quad  p+1\leq j \leq p+q,  &\qquad  \zeta \in \mathbb{C}^{-}.  \\
          \end{array}
    \right.
\end{equation}
Futher more, we obtain
\begin{equation}
     \det(\Gamma_{1}(-\zeta^{*}))^{*} =\det(\Gamma_{1}(\zeta)), \qquad \zeta \in \mathbb{C}^{+}.
\end{equation}

Since  $\det(\Gamma_{1}(-\zeta^{*}))^{*} =\det(\Gamma_{1}(\zeta)), \zeta \in \mathbb{C}^{+}$ and $ \det(\Gamma_{1}(\zeta^{*}))=(\det(\Gamma_{2}(\zeta)))^{*}$,   $\zeta \in \mathbb{C}^{-}$, we can obtain the relationship of zeros of $\Gamma_{1}(\zeta)$ and  $\Gamma_{2}(\zeta)$.    Based on these results, we consider the following three cases:  \\
\textbf{Case 1:  $\det(\Gamma_1(\zeta))$ has 2$\Delta_1$ simple zeros $\zeta_{j}$($1 \leq j \leq  2\Delta_1$) in $\mathbb{C}^{+}$, and $\zeta_{j}=-\zeta_{j-\Delta_1}^{*},(\Delta_1 +1 \leq j \leq  2\Delta_1)$.}

  Considering the Eq. \eqref{time} and Eq. \eqref{ssy1}, we have

 \begin{equation}\label{var1}
\left\{
\begin{array}{cc}
   \hat{\vartheta}_{j} = \vartheta_{j}^{\dagger},  & \quad    1 \leq j \leq  2 \Delta_1,  \\
   \vartheta_{j}=\vartheta_{j-\Delta_1}^{*},    & \quad  \Delta_1 + 1 \leq j \leq  2 \Delta_1.
\end{array} \right.
 \end{equation}
From  Eqs. \eqref{tt} and  \eqref{var1}, we can obtain

\begin{equation}
\vartheta_{j}=
\left\{
\begin{array}{cc}
     e^{\theta_{j}\sigma} \vartheta_{j,0}, & \quad    1 \leq j \leq   \Delta_1,  \\
    e^{\theta_{j-\Delta_1}^{*} \sigma} \vartheta_{j-\Delta_1,0}^{*},  & \quad  \Delta_1 + 1 \leq j \leq  2 \Delta_1,
\end{array} \right.
 \end{equation}

\begin{equation}
\hat{\vartheta}_{j}=
\left\{
\begin{array}{cc}
       \vartheta_{j,0}^{\dagger} e^{\theta_{j}^{*}\sigma},  & \quad    1 \leq j \leq   \Delta_1,  \\
          \vartheta_{j-\Delta_1,0}^{T}  e^{\theta_{j-\Delta_1} \sigma},  & \quad  \Delta_1 + 1 \leq j \leq  2 \Delta_1.
\end{array} \right.
 \end{equation}
Moreover, the solutions of RH problem in Eq. \eqref{solu} can be rewritten as follows:
\begin{equation}
\left\{
  \begin{aligned}
       \Gamma_{1}(\zeta) = \mathbb{I} -\sum_{k=1}^{2 \Delta_1} \sum_{j=1}^{2 \Delta_1} \frac{\vartheta_{k}\widehat{\vartheta}_{j}(M^{-1})_{k,j}}{\zeta-\zeta_{j}^{*}},   \\
       \Gamma_{2}(\zeta) = \mathbb{I} + \sum_{k=1}^{2 \Delta_1} \sum_{j=1}^{2 \Delta_1} \frac{\vartheta_{k}\widehat{\vartheta}_{j}(M^{-1})_{k,j}}{\zeta-\zeta_{k}}.
  \end{aligned}
\right.
\end{equation}
\textbf{Case 2: $\det(\Gamma_1(\zeta))$ has $\Delta_2$ simple zeros $\zeta_{j}$($1 \leq j \leq  \Delta_2$) in $\mathbb{C}^{+}$, and each $\zeta_{j}$ is pure imaginary.}

 Similarly, $\vartheta_{j}$ and $\hat{\vartheta_{j}}$ satisfy:
\begin{equation}
  \begin{aligned}
       \vartheta_{j} & = e^{\theta_{j}\sigma} \vartheta_{j,0},  \qquad \hat{\vartheta_{j}}=\vartheta_{j}^{\dagger}, \qquad 1 \leq j \leq  \Delta_2.
  \end{aligned}
\end{equation}
Futhermore,

\begin{equation}
\left\{
\begin{array}{ccc}
      \vartheta_{j} & = e^{\theta_{j}\sigma} \vartheta_{j,0}, & \qquad 1 \leq j \leq  \Delta_2,\\
       \hat{\vartheta}_{j} & = \vartheta_{j,0}^{\dagger} e^{\theta_{j}^{*}\sigma}, &  \qquad 1 \leq j \leq  \Delta_2.
\end{array} \right.
 \end{equation}
Moreover, the solutions of RH problem in Eq. \eqref{solu}  can be rewritten as follows
\begin{equation}
\left\{
  \begin{aligned}
       \Gamma_{1}(\zeta) = \mathbb{I} -\sum_{k=1}^{ \Delta_2} \sum_{j=1}^{\Delta_2} \frac{\vartheta_{k}\widehat{\vartheta}_{j}(M^{-1})_{k,j}}{\zeta-\zeta_{j}^{*}},   \\
       \Gamma_{2}(\zeta) = \mathbb{I} + \sum_{k=1}^{\Delta_2} \sum_{j=1}^{\Delta_2} \frac{\vartheta_{k}\widehat{\vartheta}_{j}(M^{-1})_{k,j}}{\zeta-\zeta_{k}}.
  \end{aligned}
\right.
\end{equation}
\textbf{Case 3: $\det(\Gamma_1(\zeta))$ has $2\Delta_1 + \Delta_2$ simple zeros in $\mathbb{C}^{+}$, where the first $2 \Delta_1$ zeros satisfy $\zeta_{\Delta_1+j}=-\zeta_{j}^{*}$, $1 \leq j \leq \Delta_1$,  and $\zeta_{j}, (2 \Delta_1 +1  \leq j \leq  2 \Delta_1 +\Delta_2)$ are pure imaginary.}

 Similarly,  $\vartheta_{j}$ and $\hat{\vartheta_{j}}$ satisfy
\begin{equation}
\left\{
  \begin{aligned}
     \vartheta_{j} =& e^{\theta_{j}} \vartheta_{j,0}, & \quad  1 \leq j \leq  2 \Delta_1 + \Delta_2, \\
      \vartheta_{j}=& \vartheta_{j-\Delta_1}^{*},    & \quad  \Delta_1 + 1 \leq j \leq  2 \Delta_1, \\
   \hat{\vartheta}_{j} =& \vartheta_{j}^{\dagger},  & \quad    1 \leq j \leq  2 \Delta_1 + \Delta_2.  \\
  \end{aligned}
\right.
\end{equation}
Futhermore,
\begin{equation}
\vartheta_{j}=
\left\{
\begin{array}{cc}
        e^{\theta_{j}\sigma} \vartheta_{j,0}, & \qquad 1 \leq j \leq  \Delta_1,\\
         e^{\theta_{j-\Delta_1}^{*}\sigma} \vartheta_{j-\Delta_1,0}^{*}, & \qquad  \Delta_1 + 1 \leq j \leq  2 \Delta_1, \\
       e^{\theta_{j}\sigma} \vartheta_{j,0}, & \qquad  2 \Delta_1 + 1 \leq j \leq  2 \Delta_1 + \Delta_2,
\end{array} \right.
 \end{equation}

\begin{equation}
\hat{\vartheta}_{j}=
\left\{
\begin{array}{cc}
      \vartheta_{j,0}^{\dagger}  e^{\theta_{j}^{*}\sigma},  & \qquad 1 \leq j \leq  \Delta_1,\\
       \vartheta_{j-\Delta_1,0}^{T}  e^{\theta_{j-\Delta_1} \sigma}, & \qquad  \Delta_1 + 1 \leq j \leq  2 \Delta_1, \\
       \vartheta_{j,0}^{\dagger}  e^{\theta_{j}^{*}\sigma}, & \qquad  2 \Delta_1 + 1 \leq j \leq  2 \Delta_1 + \Delta_2.
\end{array} \right.
 \end{equation}
Moreover, the solutions of RH problem in Eq. \eqref{solu}  can be rewritten as follows:
\begin{equation}
\left\{
  \begin{aligned}
       \Gamma_{1}(\zeta) = \mathbb{I} -\sum_{k=1}^{2 \Delta_1 + \Delta_2} \sum_{j=1}^{2 \Delta_1 + \Delta_2} \frac{\vartheta_{k}\widehat{\vartheta}_{j}(M^{-1})_{k,j}}{\zeta-\zeta_{j}^{*}},   \\
       \Gamma_{2}(\zeta) = \mathbb{I} + \sum_{k=1}^{2 \Delta_1 + \Delta_2} \sum_{j=1}^{2 \Delta_1 + \Delta_2} \frac{\vartheta_{k}\widehat{\vartheta}_{j}(M^{-1})_{k,j}}{\zeta-\zeta_{k}}.
  \end{aligned}
\right.
\end{equation}

\subsubsection{Multi-soliton solutions}
For case 1, we suppose that
\begin{equation}
  \vartheta_{j}=(\alpha_{1,j}e^{\theta_j},\dots,\alpha_{4,j}e^{\theta_j},\alpha_{5,j}e^{-\theta_j},\dots,\alpha_{8,j}e^{-\theta_j})^{T}, \qquad  1 \leq j \leq \Delta_1,
\end{equation}
naturally,
\begin{equation}
  \vartheta_{\Delta_1+j}=(\alpha_{1,j}^{*}e^{\theta_j^{*}},\dots,\alpha_{4,j}^{*}e^{\theta_j^{*}},\alpha_{5,j}^{*}e^{-\theta_j^{*}},\dots,\alpha_{8,j}^{*}e^{-\theta_j^{*}})^{T}, \qquad  1 \leq j \leq \Delta_1,
\end{equation}
then  $\Delta_1$-breather solution are as follows
\begin{equation}
    u_{k}=2 i
    \frac{\det(F_{k+4})}{\det(M)},
\end{equation}
where
\begin{equation}
F_{k+4}=
 \begin{bmatrix}
   0 &  \beta_{1}^{T}   \\
   \beta_{k+4} & M
 \end{bmatrix}, \qquad  1 \leq k \leq 3,
 \end{equation}

\begin{equation}
M   =
\begin{bmatrix}
M_{1,1} & \dots & M_{1,\Delta_1} & M_{1,\Delta_1 + 1} & \dots & M_{1,2 \Delta_1 }  \\
 \vdots   &    \ddots  &  \vdots  &  \vdots  & \ddots & \vdots  \\
 M_{\Delta_1,1}  &    \ddots  &  M_{\Delta_1,\Delta_1}  &   M_{\Delta_1,\Delta_1+1}  & \dots & M_{\Delta_1,2 \Delta_1}  \\
 M_{\Delta_1+1,1} & \dots & M_{\Delta_1+1,\Delta_1} & M_{\Delta_1+1,\Delta_1 + 1} & \dots & M_{\Delta_1+1,2 \Delta_1 }  \\
\vdots   &    \ddots  &  \vdots  &  \vdots  & \ddots & \vdots  \\
 M_{2 \Delta_1,1}  &    \ddots  &  M_{2 \Delta_1,\Delta_1}  &   M_{2\Delta_1,\Delta_1+1}  & \dots & M_{2\Delta_1,2 \Delta_1}
\end{bmatrix},\\
\end{equation}
with
\begin{equation}
  \begin{aligned}
 \beta_{1} &=(\alpha_{1,1}e^{\theta_{1}}, \dots , \alpha_{1,\Delta_1}e^{\theta_{\Delta_1}},\alpha_{1,\Delta_1+1 }e^{\theta_{\Delta_1+1}}, \dots, \alpha_{1,2\Delta_1}e^{\theta_{2\Delta_1}} )^{T}    \\
  &=(\alpha_{1,1}e^{\theta_{1}},\dots , \alpha_{1,\Delta_1}e^{\theta_{\Delta_1}},\alpha_{1,1}^{*}e^{\theta_{1}^{*}}, \dots, \alpha_{1,\Delta_1}^{*}e^{\theta_{\Delta_1}^{*}} )^{T},  \\
 \beta_{k+4}& =(\alpha_{k+4,1}^{*}e^{-\theta_{1}^{*}},\dots , \alpha_{k+4,\Delta_1}^{*}e^{-\theta_{\Delta_1}^{*}},\alpha_{k+4,\Delta_1+1 }^{*}e^{-\theta_{\Delta_1+1}^{*}}, \dots, \alpha_{k+4,2\Delta_1}^{*}e^{-\theta_{2\Delta_1}^{*}} )^{T}   \\
  & =(\alpha_{k+4,1}^{*}e^{-\theta_{1}^{*}},\dots , \alpha_{k+4,\Delta_1}^{*}e^{-\theta_{\Delta_1}^{*}},\alpha_{k+4,1}e^{-\theta_{1}}, \dots, \alpha_{k+4,\Delta_1}e^{-\theta_{\Delta_1}} )^{T}.  \\
   \end{aligned}
\end{equation}

For case 2, we set
\begin{equation}
  \vartheta_{j}=(\alpha_{1,j}e^{\theta_j},\dots,\alpha_{4,j}e^{\theta_j},\alpha_{5,j}e^{-\theta_j},\dots,\alpha_{8,j}e^{-\theta_j})^{T}, \qquad  1 \leq j \leq \Delta_2,
\end{equation}
then we can obtain $\Delta_2$-bell solutions
\begin{equation}
    u_{k}=2 i
    \frac{\det(F_{k+4})}{\det(M)},
\end{equation}
where
\begin{equation}
\begin{aligned}
F_{k+4}=
 \begin{bmatrix}
   0 &  \beta_{1}^{T}   \\
   \beta_{k+4} & M
 \end{bmatrix},
 \qquad   1\leq k \leq 3,
\end{aligned}
 \end{equation}
\begin{equation}
M=
\begin{bmatrix}
  M_{1,1}  & M_{1,2}  & \dots & M_{1,\Delta_2}  \\
  M_{2,1} & M_{2,2}  &  \dots & M_{2,\Delta_2}  \\
  \vdots  &  \vdots  &  \ddots  & \vdots \\
   M_{\Delta_2,1} & M_{\Delta_2,2}  & \dots & M_{\Delta_2,\Delta_2},
\end{bmatrix},
\end{equation}
\begin{equation}
  \begin{aligned}
  \beta_{1}&=(\alpha_{1,1}e^{\theta_{1}},\alpha_{1,2}e^{\theta_{2}},\dots , \alpha_{1,\Delta_2}e^{\theta_{\Delta_2}})^{T},\\
  \beta_{k+4} &=(\alpha_{k+4,1}^{*}e^{-\theta_{1}^{*}},\alpha_{k+4,2}^{*}e^{-\theta_{2}^{*}},\dots,\alpha_{k+4,\Delta_2}^{*}e^{-\theta_{\Delta_2}^{*}})^{T}.
   \end{aligned}
\end{equation}

For case 3, combining the results of case 1 and  case 2, we can easily  obtain the $2\Delta_1 + \Delta_2$ soliton solutions.

\subsubsection{A variety of Rational Solutions and Physical Visions}

 For case 1, if we take   $\Delta_1=1$, we can obtain  single-breather solutions as follows
 \begin{equation} \label{sf2.1}
\left\{
\begin{aligned}
   u_1(x,t)=2  i \frac{    \begin{vmatrix}
      0  & \alpha_{1,1}e^{\theta_{1}}  &  \alpha_{1,1}^{*}e^{\theta_{1}^{*}}  \\
      \alpha_{5,1}^{*}e^{-\theta_{1}^{*}} & M_{1,1}   & M_{1,2}      \\
      \alpha_{5,1}e^{-\theta_{1}}  & M_{2,1}  & M_{2,2}
   \end{vmatrix}  }
   {     \begin{vmatrix}   M_{1,1} &M_{1,2}    \\    M_{2,1}  & M_{2,2}                 \end{vmatrix}   },\\
 u_2(x,t)=2  i \frac{    \begin{vmatrix}
      0  & \alpha_{1,1}e^{\theta_{1}}  &  \alpha_{1,1}^{*}e^{\theta_{1}^{*}}  \\
      \alpha_{6,1}^{*}e^{-\theta_{1}^{*}} & M_{1,1}   & M_{1,2}      \\
      \alpha_{6,1}e^{-\theta_{1}}  & M_{2,1}  & M_{2,2}
   \end{vmatrix}  }
   {         \begin{vmatrix}   M_{1,1} &M_{1,2}    \\    M_{2,1}  & M_{2,2}                 \end{vmatrix}  },\\
u_3(x,t)=2  i \frac{     \begin{vmatrix}
      0  & \alpha_{1,1}e^{\theta_{1}}  &  \alpha_{1,1}^{*}e^{\theta_{1}^{*}}  \\
      \alpha_{7,1}^{*}e^{-\theta_{1}^{*}} & M_{1,1}   & M_{1,2}      \\
      \alpha_{7,1}e^{-\theta_{1}}  & M_{2,1}  & M_{2,2}
   \end{vmatrix}  }
   {   \begin{vmatrix}   M_{1,1} &M_{1,2}    \\    M_{2,1}  & M_{2,2}                 \end{vmatrix}   },
\end{aligned}
\right.
 \end{equation}
where
\begin{equation}
\left\{
   \begin{aligned}
         M_{11}= &  \frac{(|\alpha_{1,1}|^2+|\alpha_{2,1}|^2+|\alpha_{3,1}|^2+|\alpha_{4,1}|^2)e^{\theta_{1}^{*}+\theta_{1}}}{\zeta_{1}-\zeta_{1}^{*}}  \\
           +& \frac{(|\alpha_{5,1}|^2+|\alpha_{6,1}|^2+|\alpha_{7,1}|^2+|\alpha_{8,1}|^2)e^{-\theta_{1}^{*}-\theta_{1}}}{\zeta_{1}-\zeta_{1}^{*}},\\
         M_{12}= &  \frac{(\alpha_{1,1}^{*}\alpha_{1,1}^{*}+\alpha_{2,1}^{*}\alpha_{2,1}^{*}+\alpha_{ 3,1}^{*}\alpha_{3,1}^{*}+\alpha_{4,1}^{*}\alpha_{4,1}^{*})e^{\theta_{1}^{*}+\theta_{1}^{*}}}{\zeta_{2}-\zeta_{1}^{*}} \\
         +&\frac{(\alpha_{ 5,1}^{*}\alpha_{5,1}^{*}+\alpha_{6,1}^{*}\alpha_{6,1}^{*}+\alpha_{ 7,1}^{*}\alpha_{7,1}^{*}+\alpha_{8,1}^{*}\alpha_{8,1}^{*})e^{-\theta_{1}^{*}-\theta_{1}^{*}}}{\zeta_{2}-\zeta_{1}^{*}},  \\
          M_{21}= &  \frac{(\alpha_{1,1}\alpha_{1,1}+\alpha_{2,1}\alpha_{2,1}+\alpha_{3,1}\alpha_{3,1}+\alpha_{4,1}\alpha_{4,1})e^{\theta_{1}+\theta_{1}}}{\zeta_{1}-\zeta_{2}^{*}}   \\
          +& \frac{ (\alpha_{5,1}\alpha_{5,1}+\alpha_{6,1}\alpha_{6,1}+\alpha_{ 7,1}\alpha_{7,1}+\alpha_{8,1}\alpha_{8,1})e^{-\theta_{1}-\theta_{1}}}{\zeta_{1}-\zeta_{2}^{*}},  \\
         M_{22}= &  \frac{(|\alpha_{1,1}|^2+|\alpha_{2,1}|^2+|\alpha_{3,1}|^2+|\alpha_{4,1}|^2)e^{\theta_{1}^{*}+\theta_{1}}}{\zeta_{2}-\zeta_{2}^{*}} \\
        +&\frac{(|\alpha_{5,1}|^2+|\alpha_{6,1}|^2+|\alpha_{7,1}|^2+|\alpha_{8,1}|^2)e^{-\theta_{1}^{*}-\theta_{1}}}{\zeta_{2}-\zeta_{2}^{*}}.
   \end{aligned}
\right.
\end{equation}
with $\zeta_1=a_1+i b_1 (a_1 \neq 0, b_1>0)$, $\theta_1=-i (\zeta_1 x + 4 \zeta_1^3 t)$ and $\zeta_2= -\zeta_1^{*}$. In addition, the localized structures and  dynamic behaviors of single-breather solutions are shown  in Fig. 3.

For case 2, if we take the $\Delta_2=1$, we can obtain  the single-bell soliton solution as follows
\begin{equation} \label{sf2.2}
\left\{
   \begin{aligned}
          u_{1}(x,t)=  & \frac{-2i\alpha_{1,1} \alpha_{5,1}^{*} (\zeta_{1}-\zeta_{1}^{*})e^{\theta_1-\theta_{1}^{*}}}{ \gamma_1 e^{\theta_{1}^{*}+\theta_{1}}+ \gamma_2 e^{-\theta_{1}^{*}-\theta_{1}}},  \\
          u_{2}(x,t)=  & \frac{-2i \alpha_{1,1} \alpha_{6,1}^{*} (\zeta_{1}-\zeta_{1}^{*})e^{\theta_1-\theta_{1}^{*}}}{ \gamma_1 e^{\theta_{1}^{*}+\theta_{1}}+ \gamma_2 e^{-\theta_{1}^{*}-\theta_{1}}},\\
          u_{3}(x,t)=  &\frac{ -2i \alpha_{1,1} \alpha_{7,1}^{*} (\zeta_{1}-\zeta_{1}^{*})e^{\theta_1-\theta_{1}^{*}}}{ \gamma_1 e^{\theta_{1}^{*}+\theta_{1}}+ \gamma_2 e^{-\theta_{1}^{*}-\theta_{1}}},
   \end{aligned}
\right.
\end{equation}
where
\begin{equation}
\begin{aligned}
\gamma_1=&|\alpha_{1,1}|^2+|\alpha_{2,1}|^2+|\alpha_{3,1}|^2+|\alpha_{4,1}|^2, \\
\gamma_2=&|\alpha_{5,1}|^2+|\alpha_{6,1}|^2+|\alpha_{7,1}|^2+|\alpha_{8,1}|^2, \\
\zeta_1=&i b_1(b_1>0),   \theta_1 = -i (\zeta_1 x +4 \zeta_1^3 t).
\end{aligned}
\end{equation}

\noindent
{\rotatebox{0}{\includegraphics[width=3.6cm,height=3.0cm,angle=0]{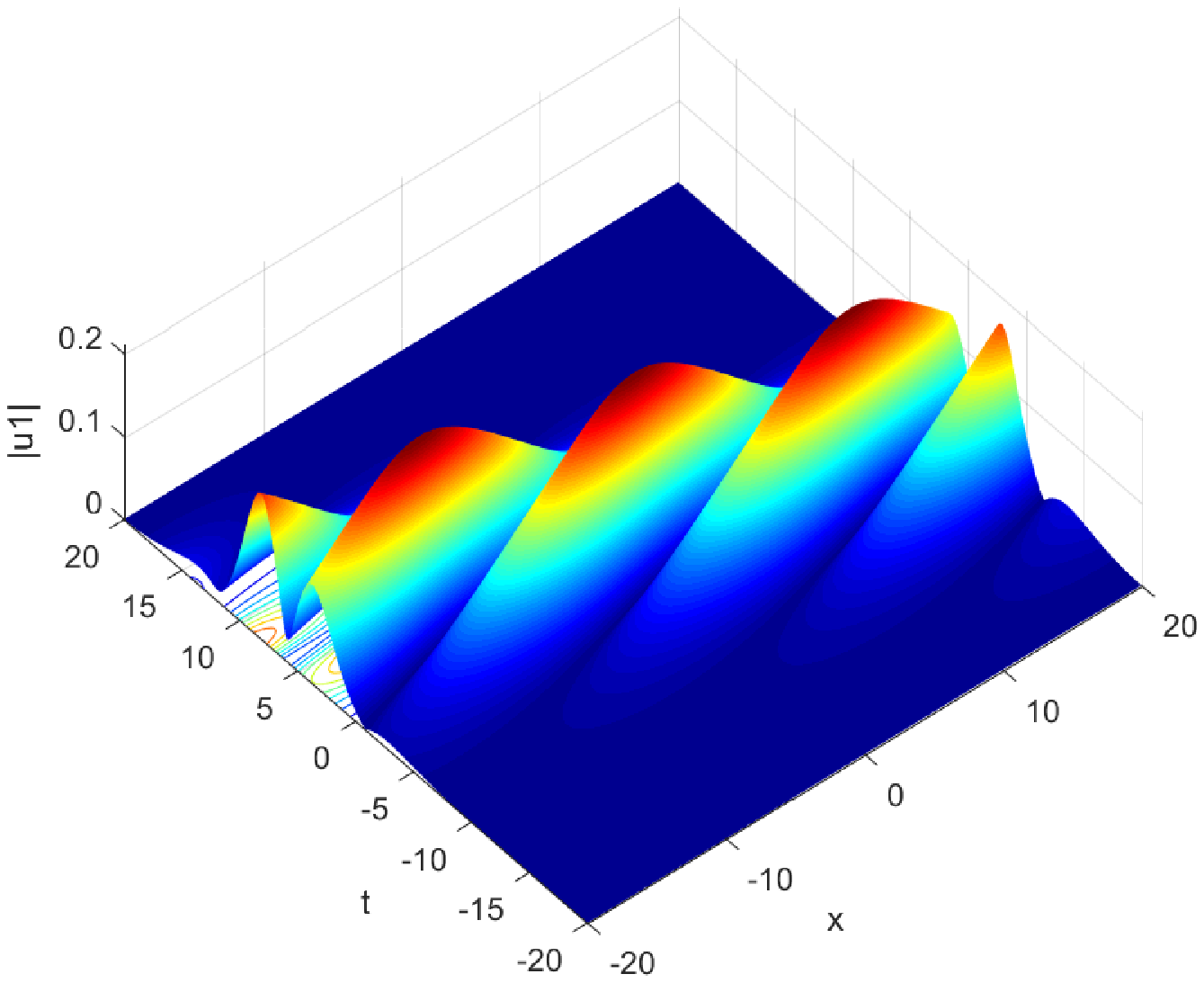}}}
~~~~
{\rotatebox{0}{\includegraphics[width=3.6cm,height=3.0cm,angle=0]{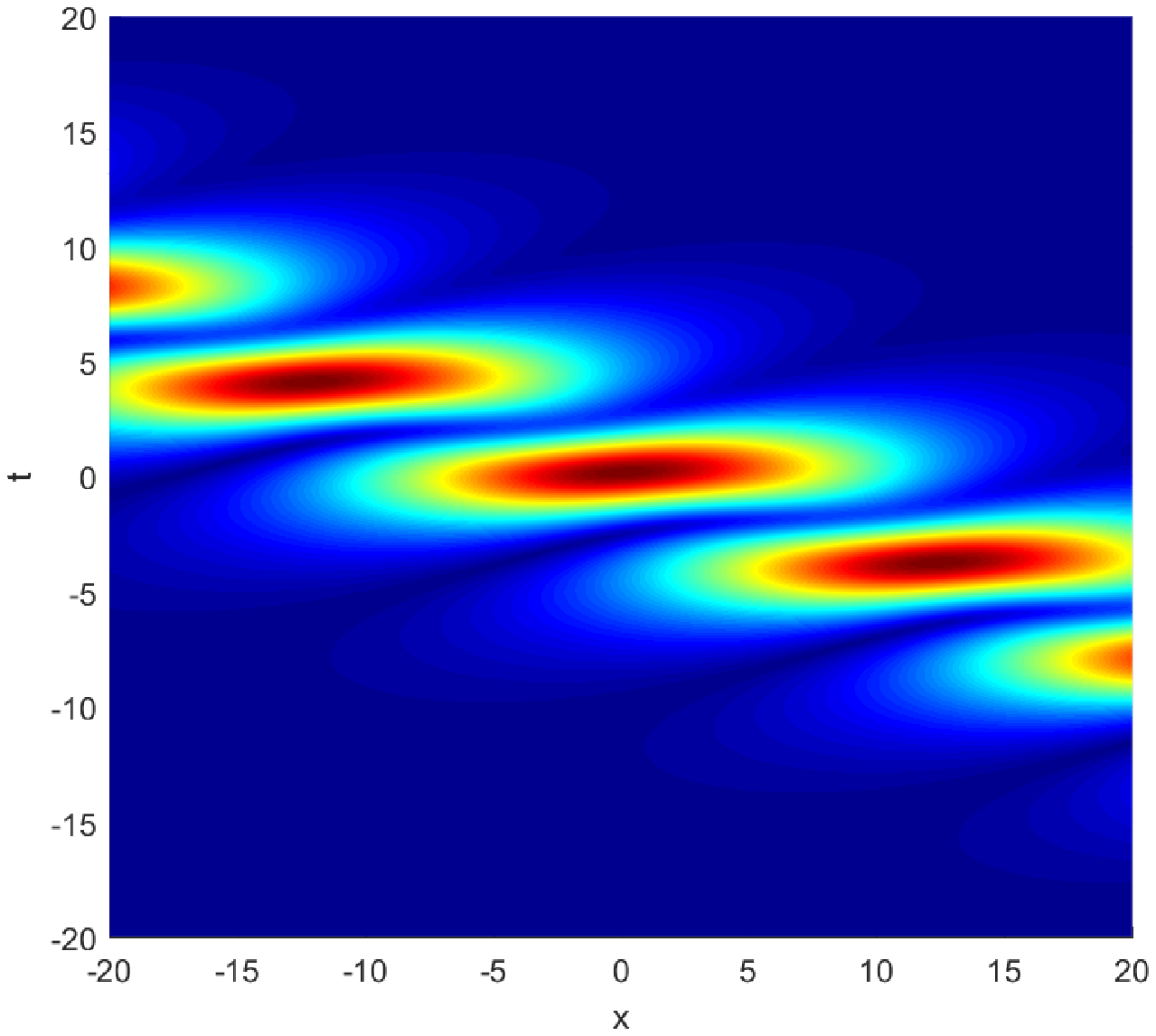}}}
~~~~
{\rotatebox{0}{\includegraphics[width=3.6cm,height=3.0cm,angle=0]{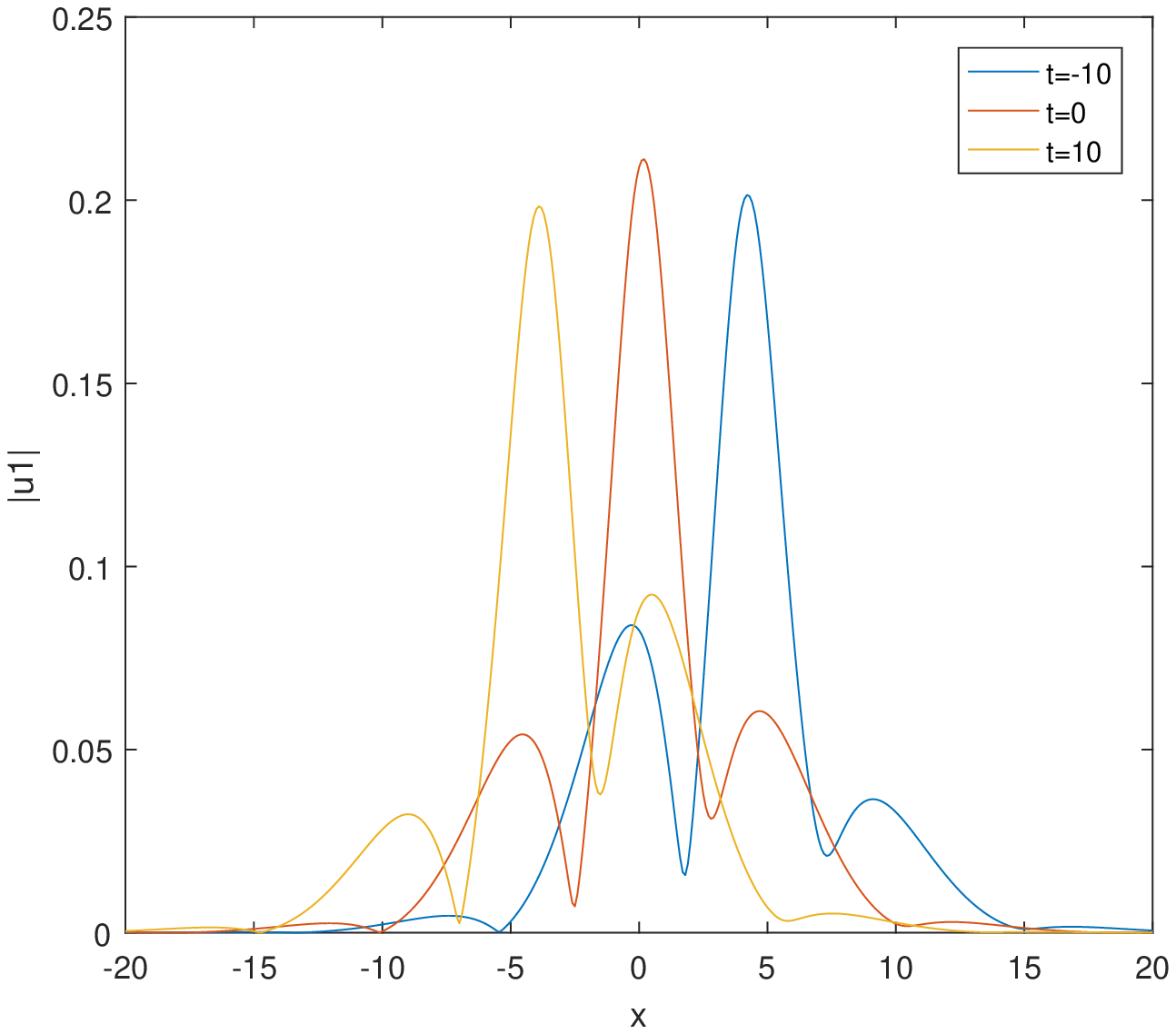}}}

$\ \qquad~~~~~~(\textbf{a})\qquad \ \qquad\qquad\qquad\qquad~(\textbf{b})
\ \qquad\qquad\qquad\qquad\qquad~(\textbf{c})$\\
\noindent
{\rotatebox{0}{\includegraphics[width=3.6cm,height=3.0cm,angle=0]{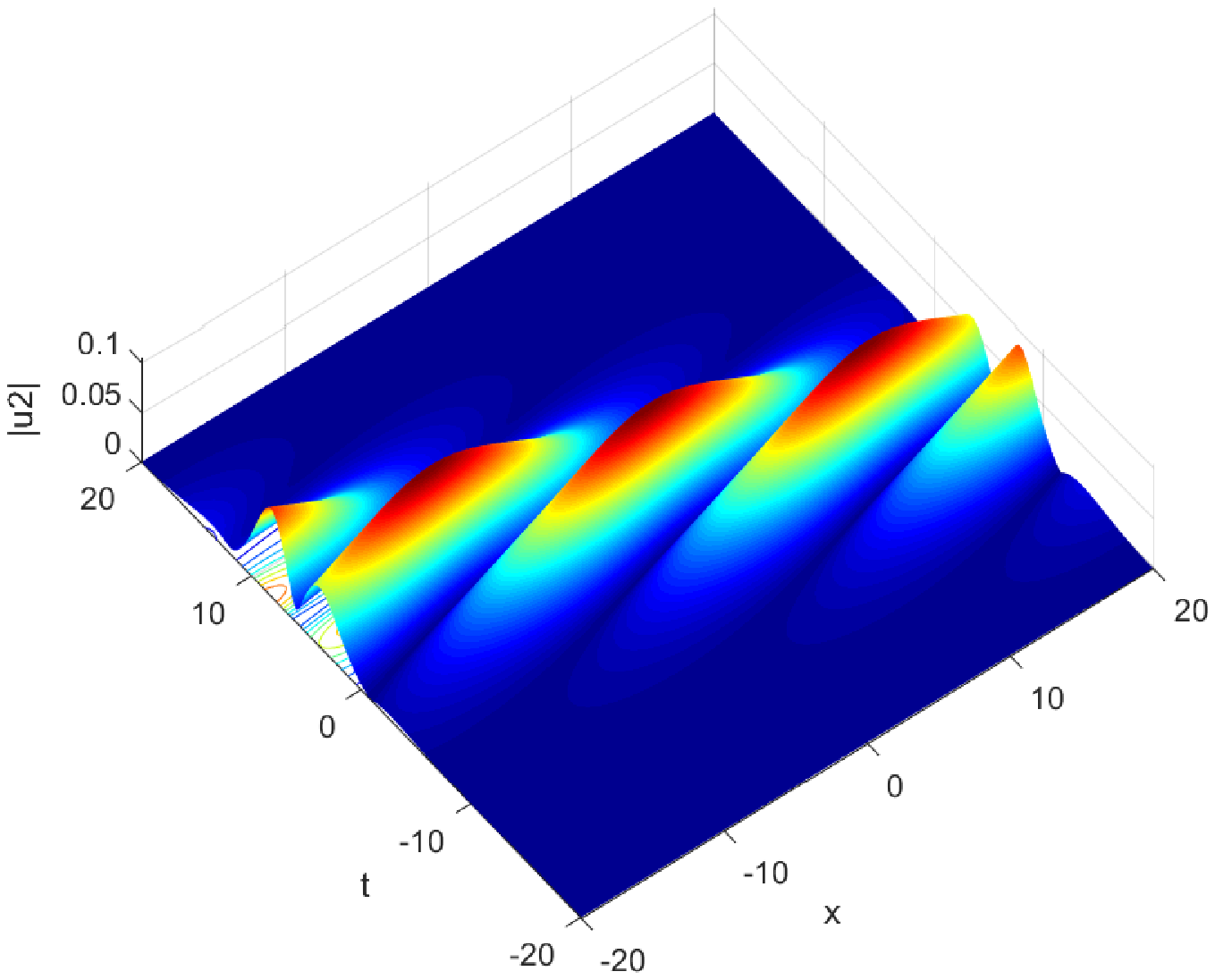}}}
~~~~
{\rotatebox{0}{\includegraphics[width=3.6cm,height=3.0cm,angle=0]{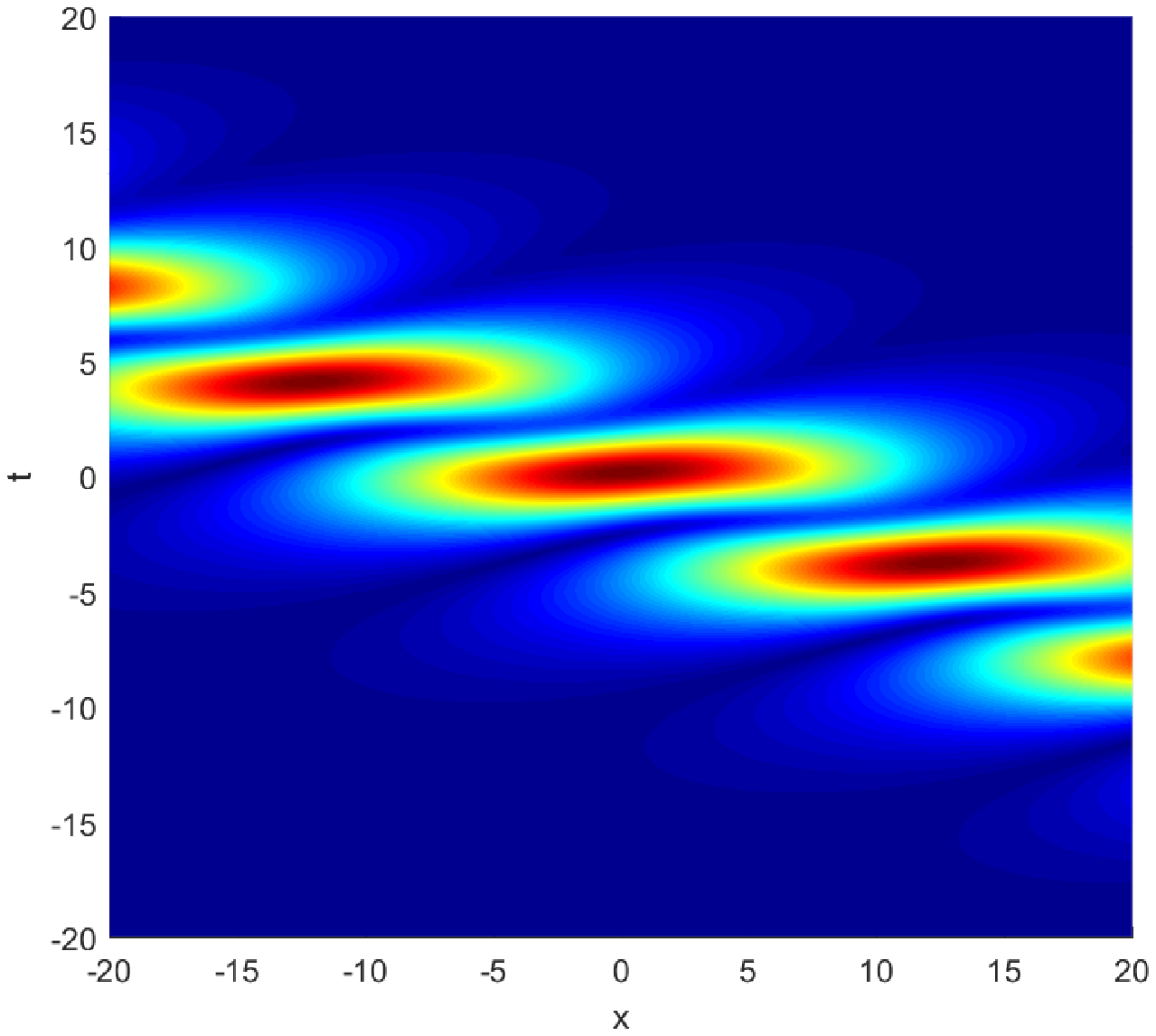}}}
~~~~
{\rotatebox{0}{\includegraphics[width=3.6cm,height=3.0cm,angle=0]{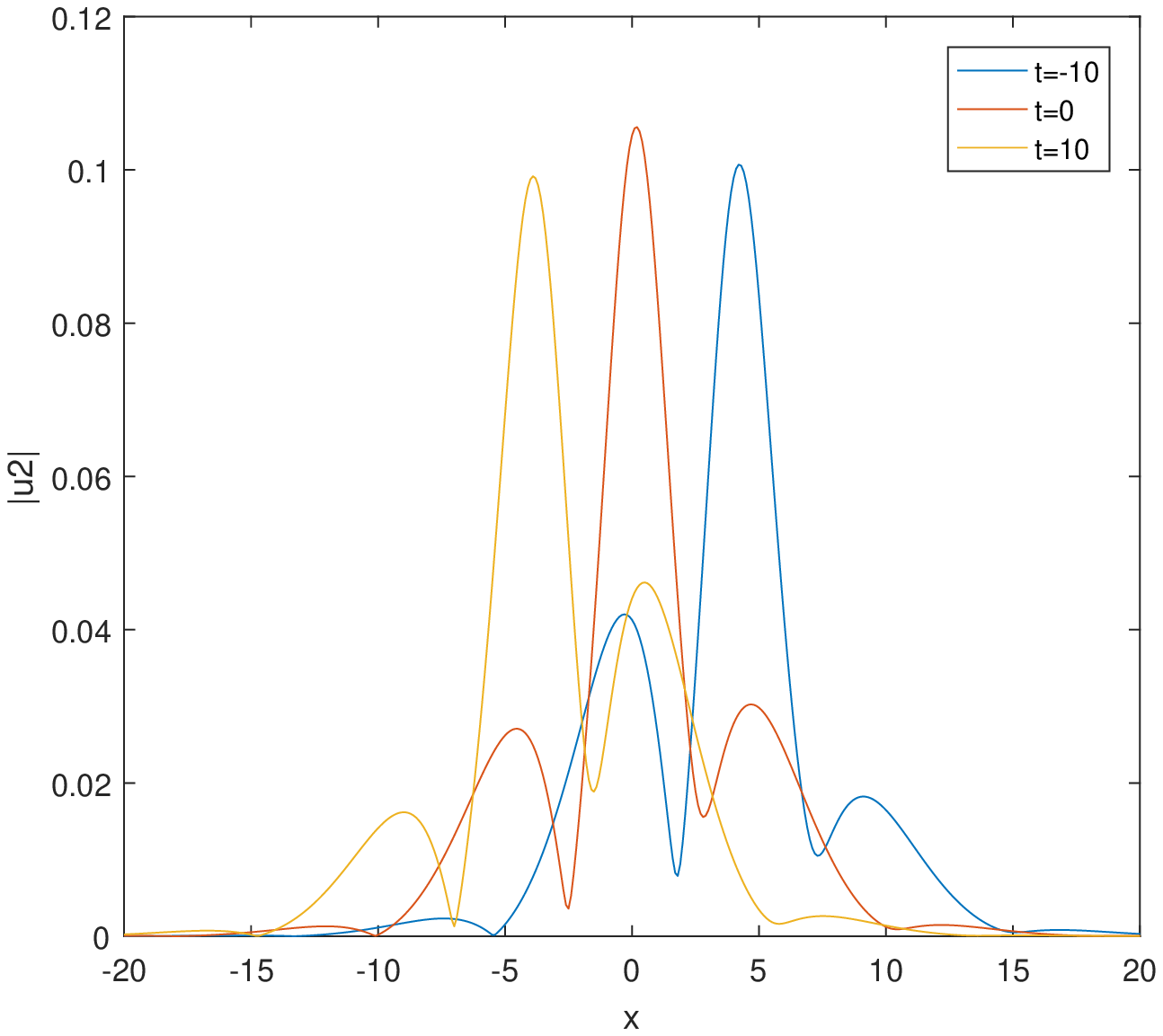}}}

$\ \qquad~~~~~~(\textbf{d})\qquad \ \qquad\qquad\qquad\qquad~(\textbf{e})
\ \qquad\qquad\qquad\qquad\qquad~(\textbf{f})$\\
\noindent
{\rotatebox{0}{\includegraphics[width=3.6cm,height=3.0cm,angle=0]{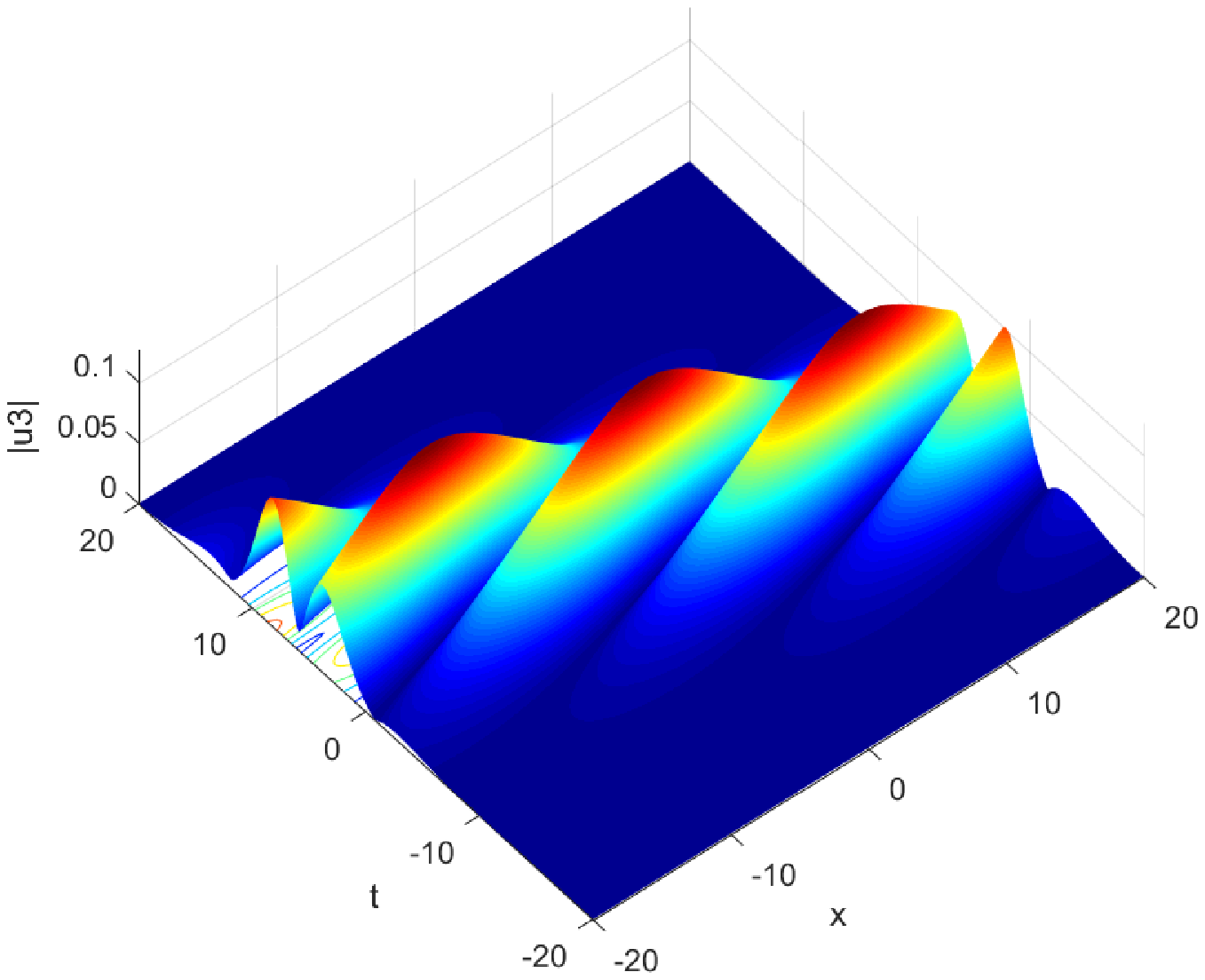}}}
~~~~
{\rotatebox{0}{\includegraphics[width=3.6cm,height=3.0cm,angle=0]{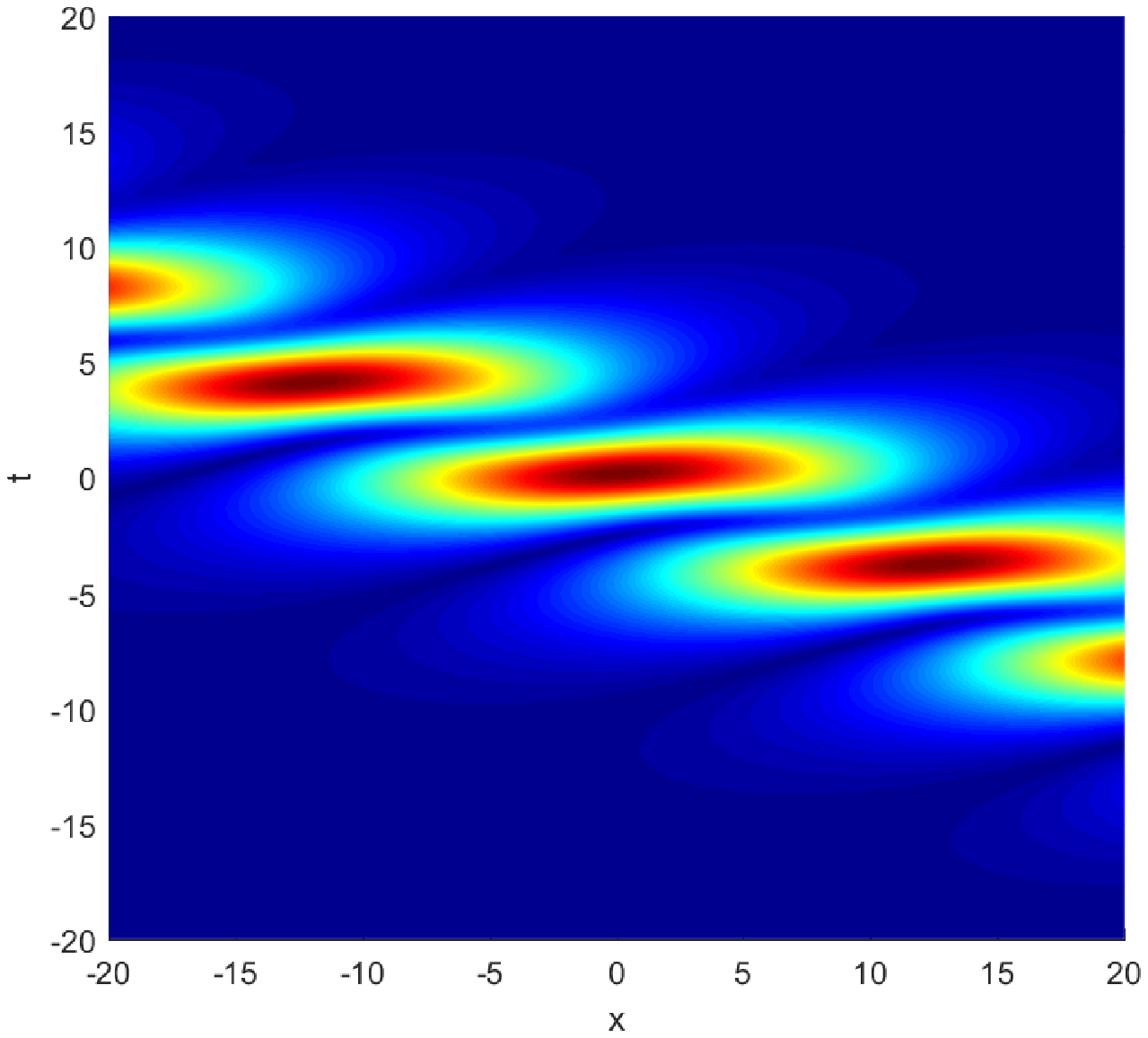}}}
~~~~
{\rotatebox{0}{\includegraphics[width=3.6cm,height=3.0cm,angle=0]{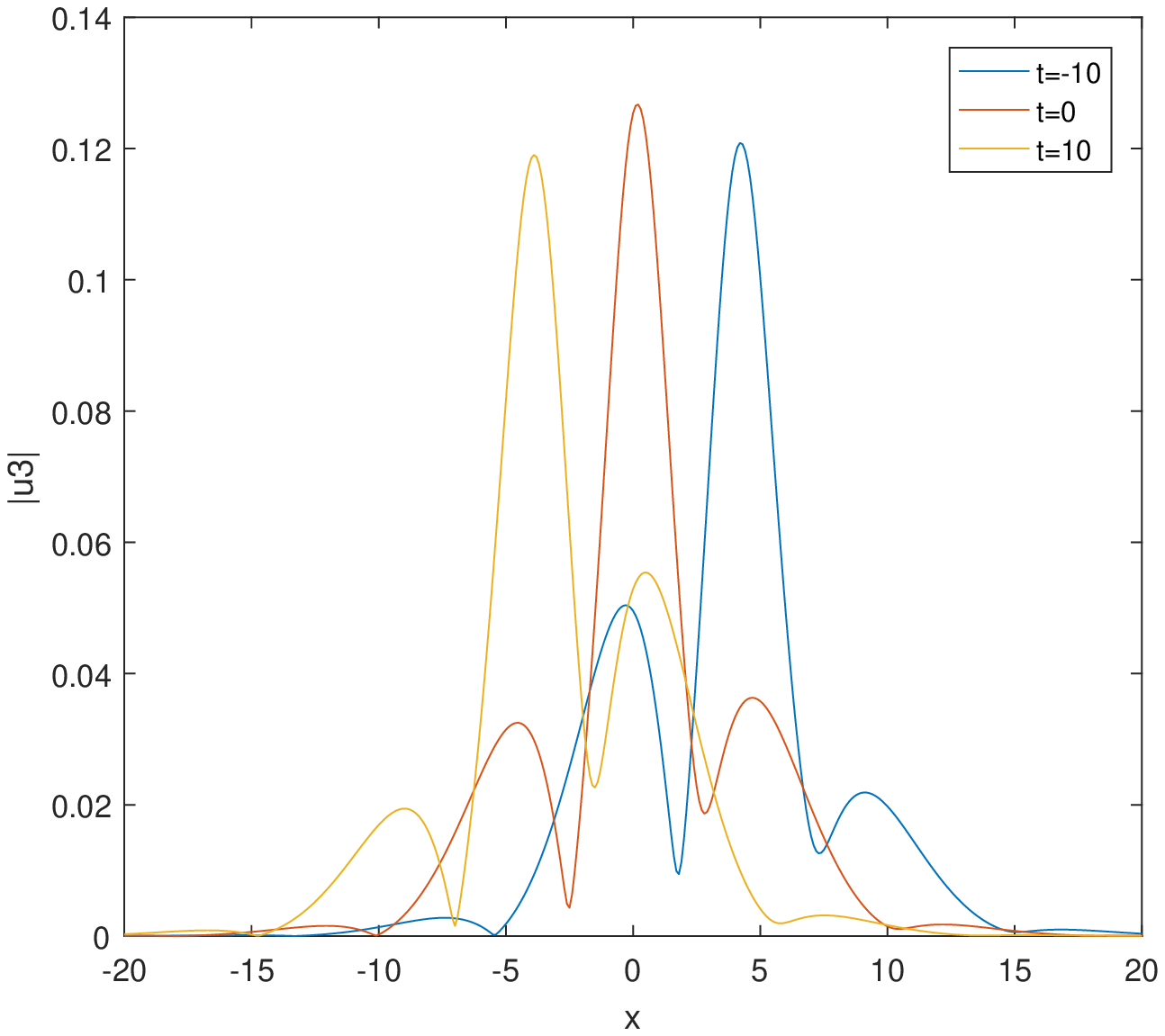}}}

$\ \qquad~~~~~~(\textbf{g})\qquad \ \qquad\qquad\qquad\qquad~(\textbf{h})
\ \qquad\qquad\qquad\qquad\qquad~(\textbf{i})$\\
\noindent { \small \textbf{Figure 3.} (Color online) Single-breather solutions to Eq. \eqref{sf2.1}  with the paremeters $a_1 =0.2$, $b_1=0.2$, $\alpha_{1,1}=\alpha_{3,1}=\alpha_{6,1}=0.1$, $\alpha_{2,1}=\alpha_{5,1}=0.2$, $\alpha_{4,1}=0.05$, $\alpha_{7,1}=0.12$, $\alpha_{8,1}=0.13$.
\label{fig2.1}
$\textbf{(a)(d)(g)}$: the structures of the single-breather solutions,
$\textbf{(b)(e)(h)}$: the density plot ,
$\textbf{(c)(f)(i)}$: the wave propagation of the single-breather solutions.}

From the 3D plot and density plot in Fig. 3, we can see that the waves corresponds to the solutions which travels left along the $x$-axis and their amplitude increases and decreases periodically. Moreover, it can be seen from the dynamic behaviors that the waveforms when time is negative are almost symmetrical with the waveform when time  is positive,  but they are not symmetrical about the central axis of themselves.  At $t=0$, the waveforms are symmetrical about the central axis.

\subsection{ $Q$ is taken as a special 1$\times$3 form}
In this subsection, we consider following special case for matrix $Q$
\begin{equation}
    Q= (u_1(x,t), \quad   u_2(x,t),  \quad   u_3(x,t))^{T},
\end{equation}
where $p=1, q=3$ and $R=-Q^{\dagger}$.
\subsubsection{A variety of Rational Solutions and Physical Visions}
If we set $N=1,2,3$, $p=1$,  and $q=3$ in Eq. \eqref{solution}, we can easily get the one-soliton solutions, two-soliton solutions and three-soliton solutions, respectively,
\begin{equation} \label{sf3.1}
\left\{
  \begin{aligned}
  u_1(x,t)=-2i \sum_{k=1}^{N} \sum_{j=1}^{N} \alpha_{1,k} \alpha_{2,j}^{*} e^{\theta_k -\theta_j^{*}}(M^{-1})_{k,j}, \quad N=1,2,3,  \\
  u_2(x,t)=-2i \sum_{k=1}^{N} \sum_{j=1}^{N} \alpha_{1,k} \alpha_{3,j}^{*} e^{\theta_k -\theta_j^{*}}(M^{-1})_{k,j}, \quad N=1,2,3,  \\
  u_3(x,t)=-2i \sum_{k=1}^{N} \sum_{j=1}^{N} \alpha_{1,k} \alpha_{4,j}^{*} e^{\theta_k -\theta_j^{*}}(M^{-1})_{k,j}, \quad N=1,2,3,
  \end{aligned}
\right.
\end{equation}
where
$
   M_{k,j}= \frac{\alpha_{1,k}^{*}\alpha_{1,j}e^{\theta_{k}^{*}+\theta_j}+\alpha_{2,k}^{*}\alpha_{2,j}e^{-\theta_{k}^{*}-\theta_j}+\alpha_{3,k}^{*}\alpha_{3,j}e^{-\theta_{k}^{*}-\theta_j}+\alpha_{4,k}^{*}\alpha_{4,j}e^{-\theta_{k}^{*}-\theta_j}}{\zeta_j -\zeta_k^{*}}.
$

\noindent
{\rotatebox{0}{\includegraphics[width=3.6cm,height=3.0cm,angle=0]{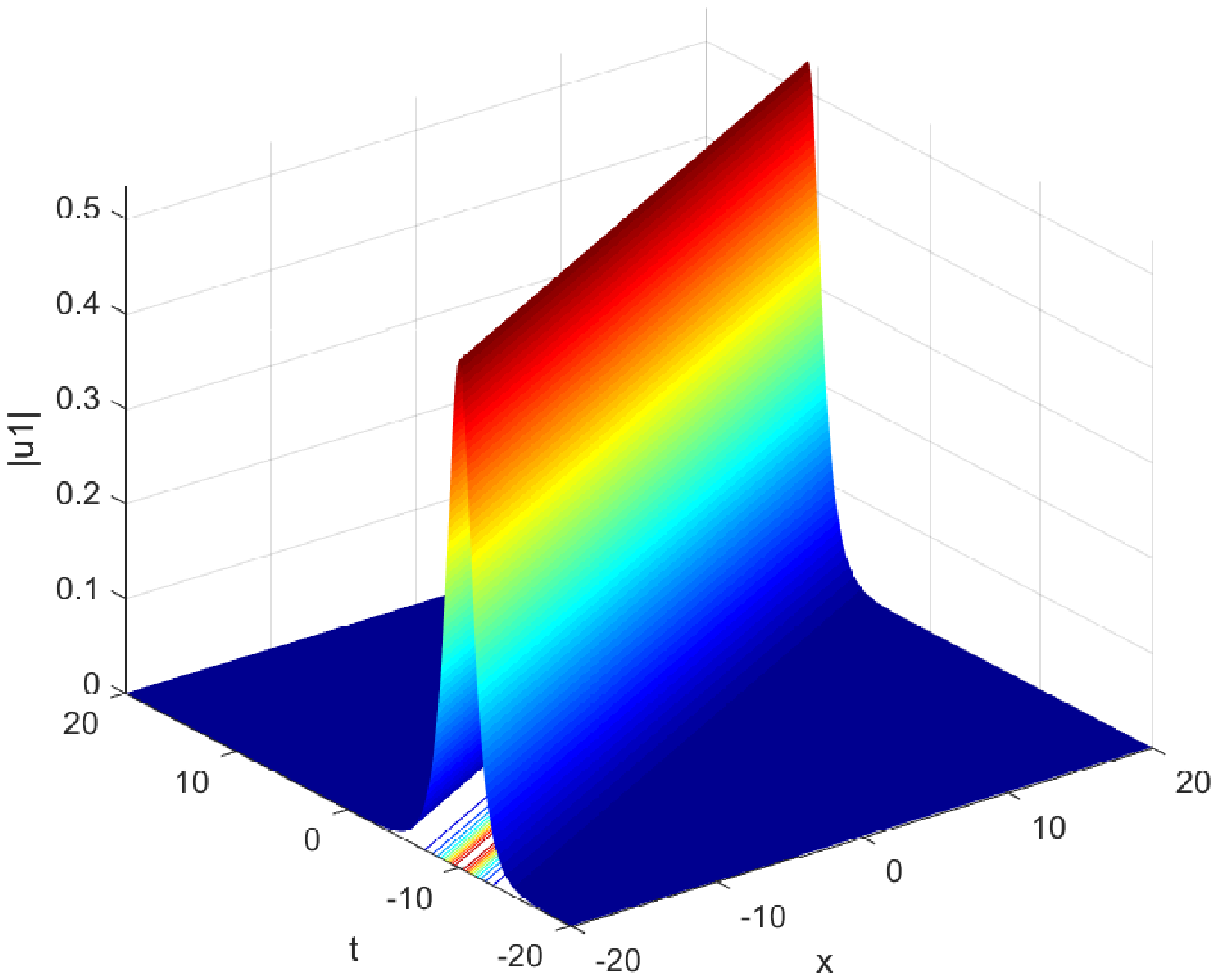}}}
~~~~
{\rotatebox{0}{\includegraphics[width=3.6cm,height=3.0cm,angle=0]{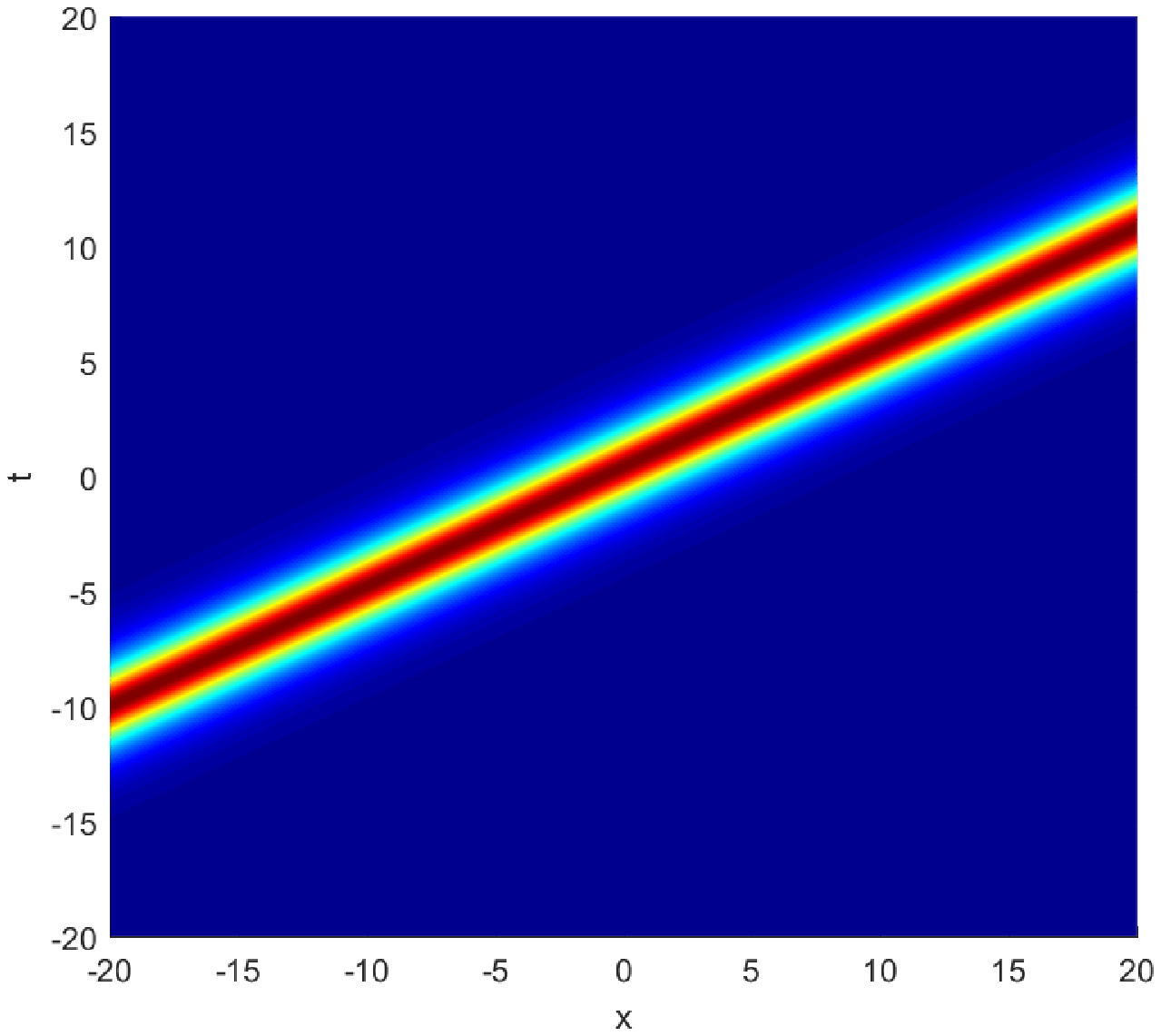}}}
~~~~
{\rotatebox{0}{\includegraphics[width=3.6cm,height=3.0cm,angle=0]{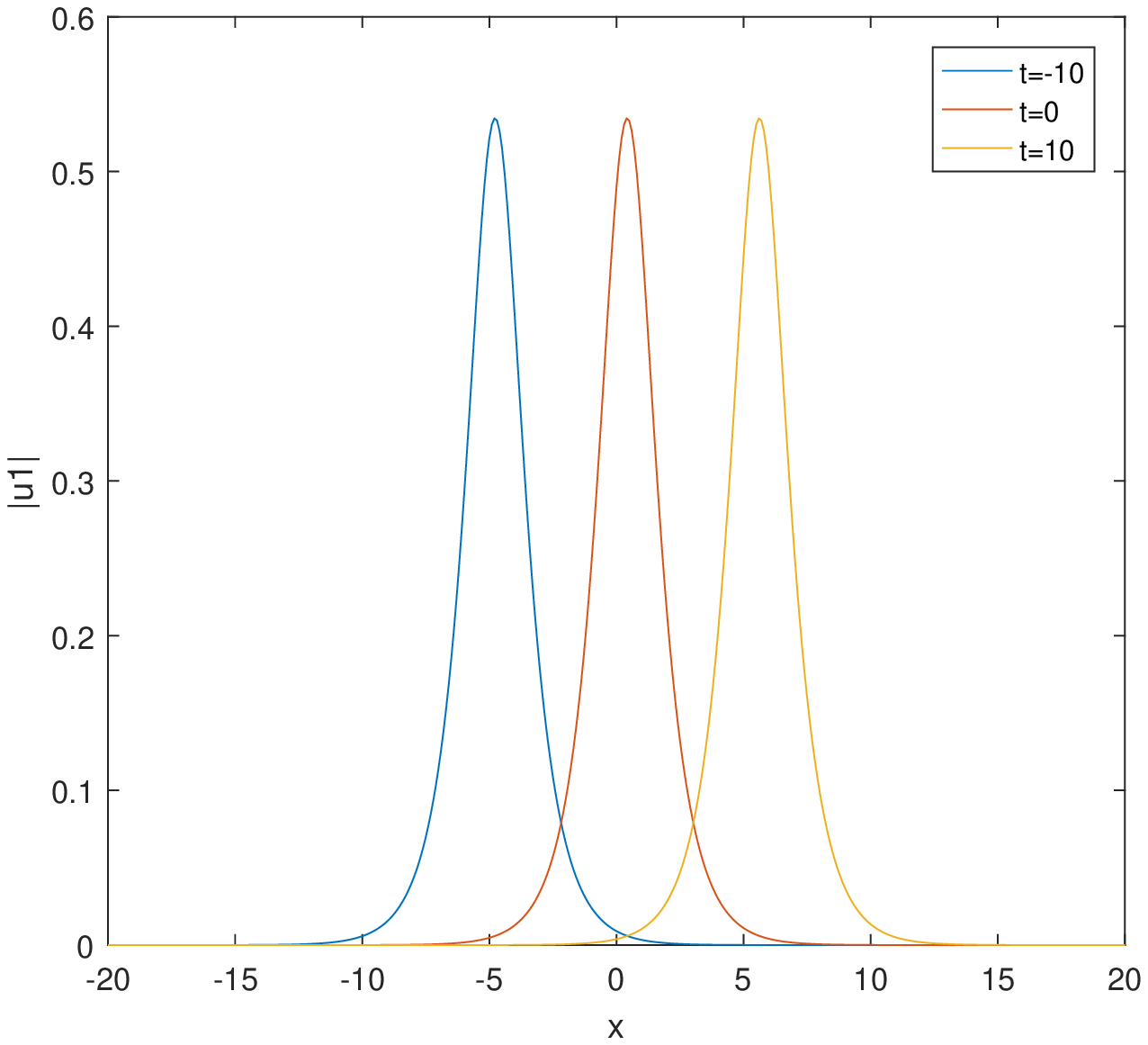}}}

$\ \qquad~~~~~~(\textbf{a})\qquad \ \qquad\qquad\qquad\qquad~(\textbf{b})
\ \qquad\qquad\qquad\qquad\qquad~(\textbf{c})$\\
\noindent
{\rotatebox{0}{\includegraphics[width=3.6cm,height=3.0cm,angle=0]{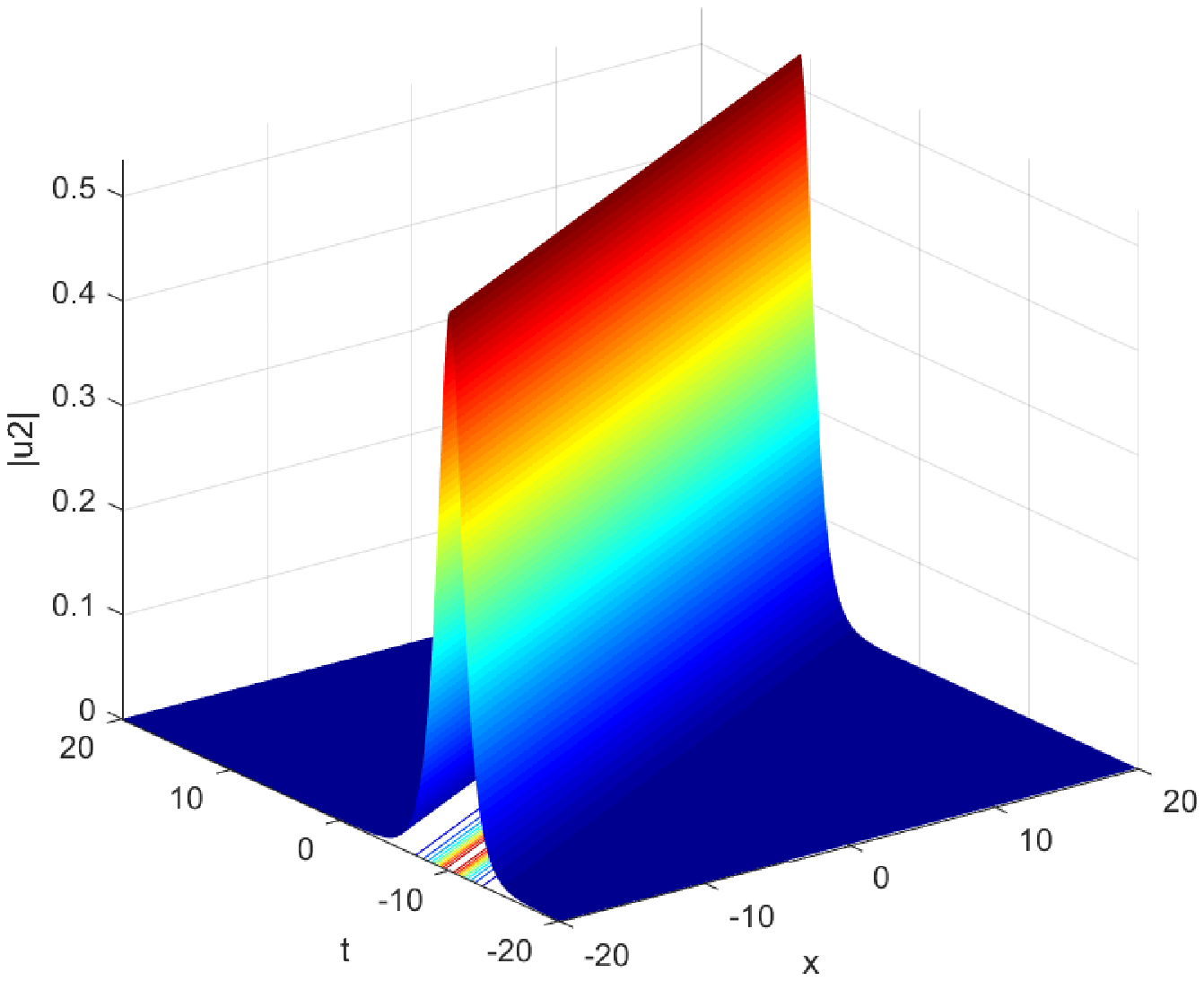}}}
~~~~
{\rotatebox{0}{\includegraphics[width=3.6cm,height=3.0cm,angle=0]{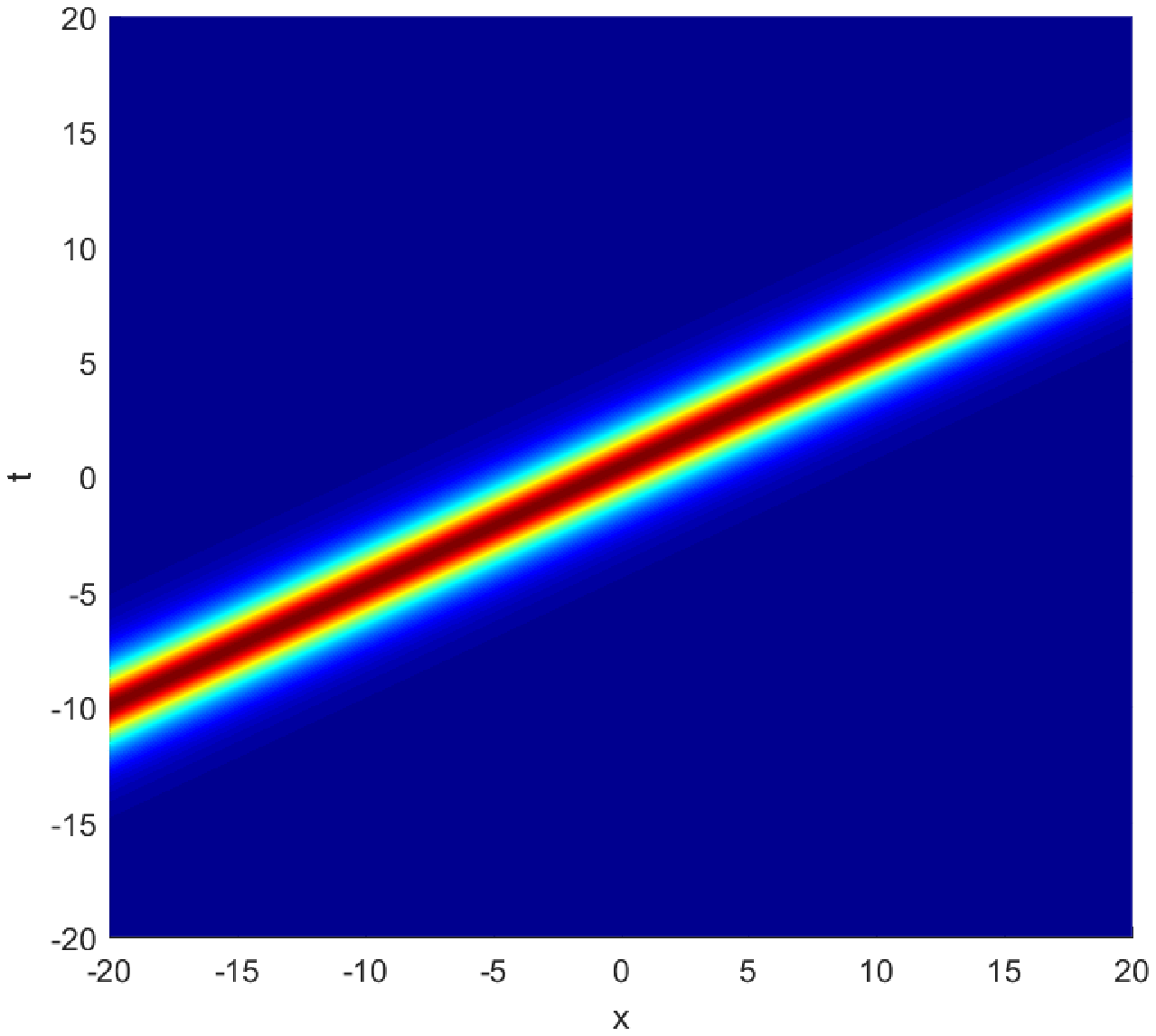}}}
~~~~
{\rotatebox{0}{\includegraphics[width=3.6cm,height=3.0cm,angle=0]{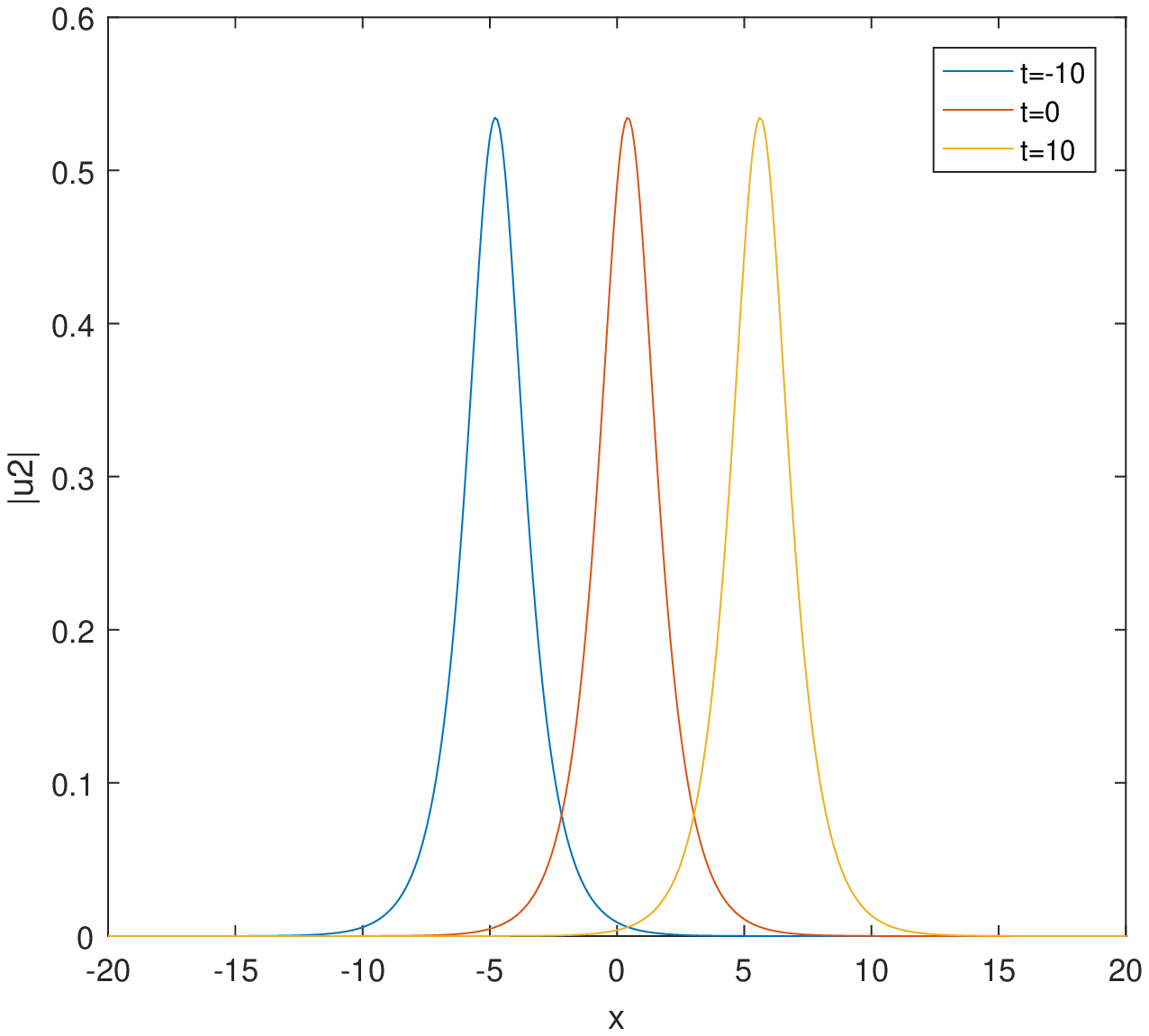}}}

$\ \qquad~~~~~~(\textbf{d})\qquad \ \qquad\qquad\qquad\qquad~(\textbf{e})
\ \qquad\qquad\qquad\qquad\qquad~(\textbf{f})$\\
\noindent
{\rotatebox{0}{\includegraphics[width=3.6cm,height=3.0cm,angle=0]{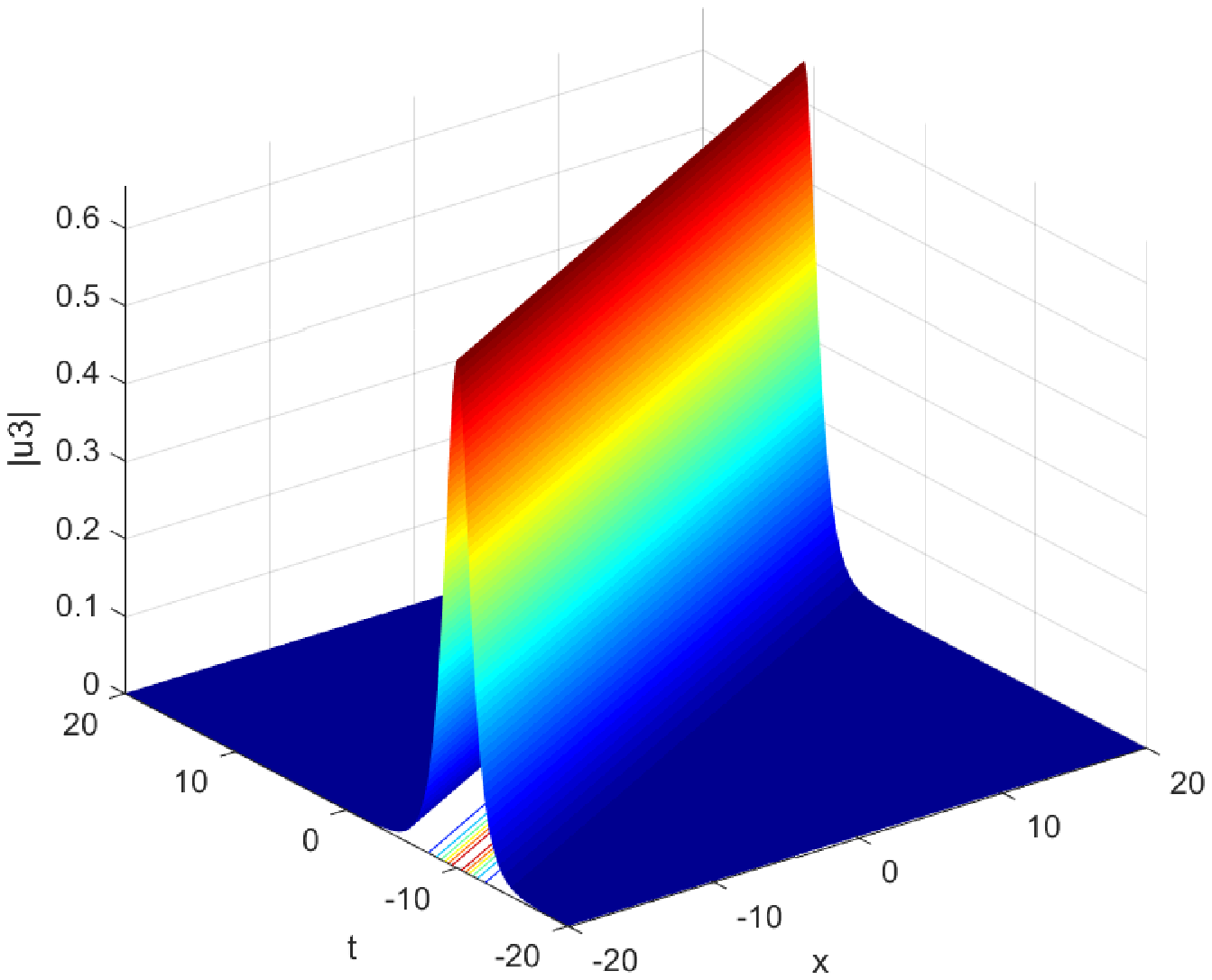}}}
~~~~
{\rotatebox{0}{\includegraphics[width=3.6cm,height=3.0cm,angle=0]{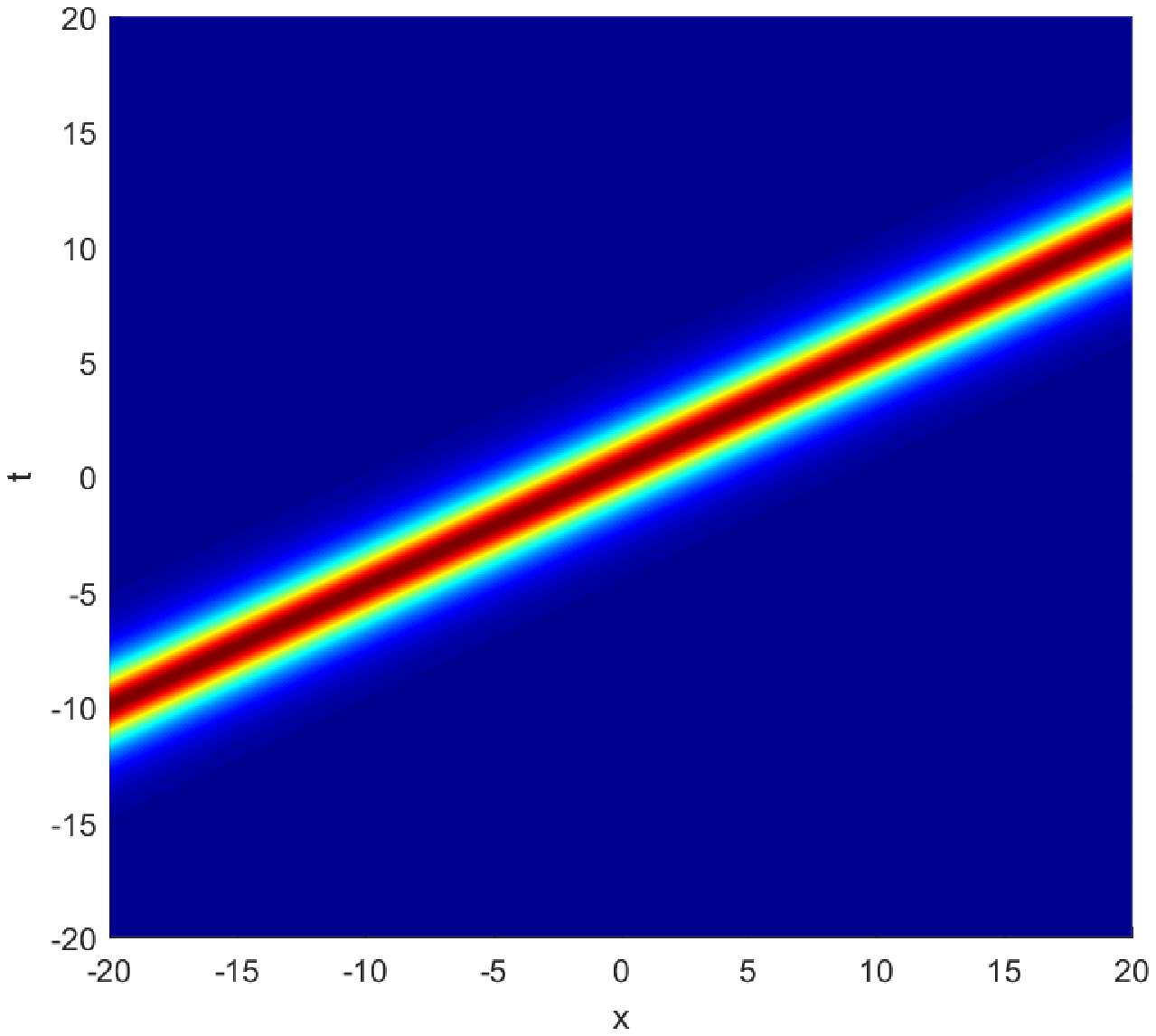}}}
~~~~
{\rotatebox{0}{\includegraphics[width=3.6cm,height=3.0cm,angle=0]{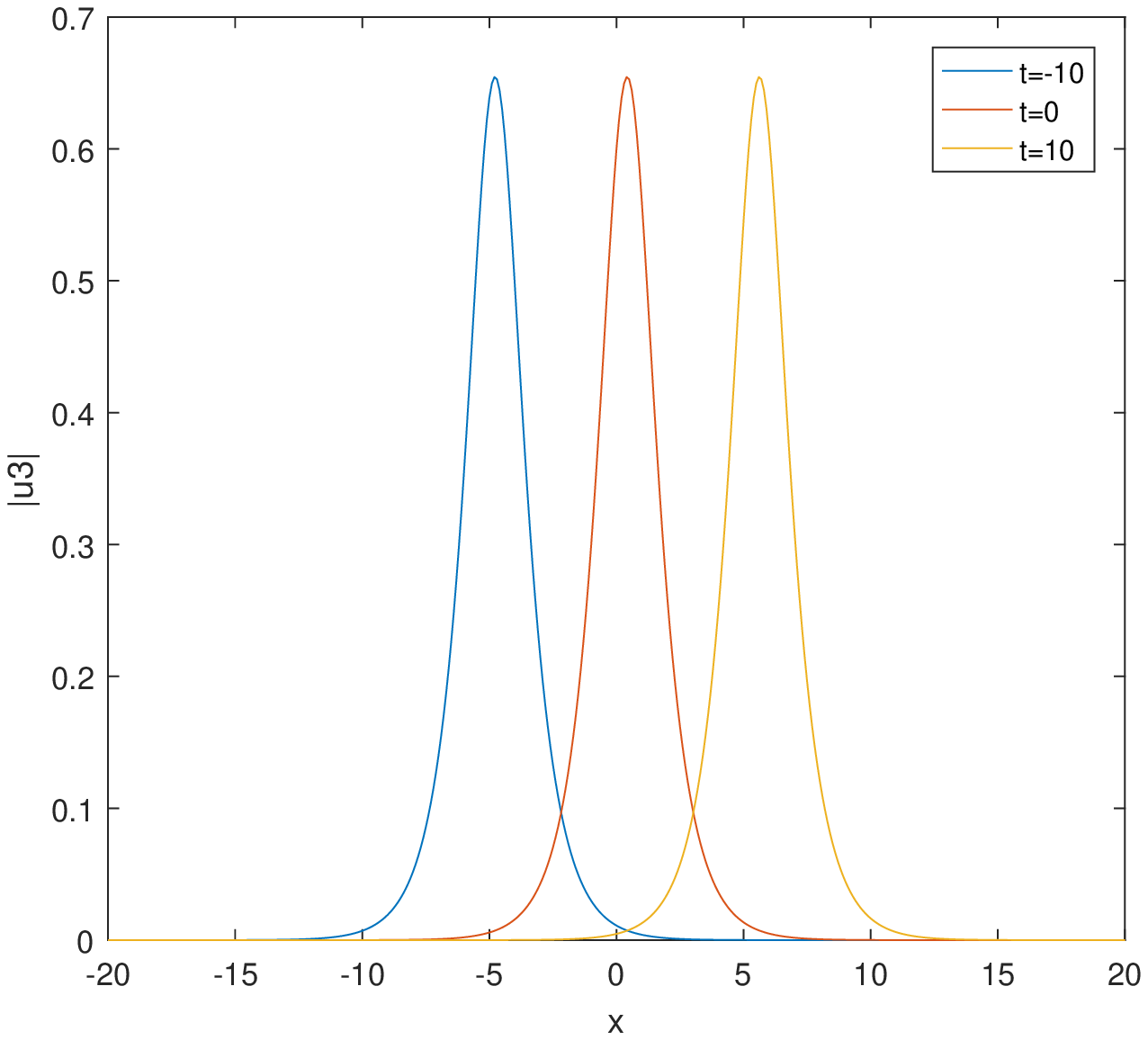}}}

$\ \qquad~~~~~~(\textbf{g})\qquad \ \qquad\qquad\qquad\qquad~(\textbf{h})
\ \qquad\qquad\qquad\qquad\qquad~(\textbf{i})$\\
\noindent { \small \textbf{Figure 4.} (Color online) One-soliton solutions to Eq. \eqref{sf3.1} with the parameters $a_1 =0.2$, $b_1=0.5$, $\alpha_{1,1}=\frac{\sqrt{3}}{2}$, $\alpha_{2,1}=\frac{\sqrt{2}}{2}$, $\alpha_{3,1}=-\frac{\sqrt{2}}{2}$, $\alpha_{4,1}=-\frac{\sqrt{3}}{2}$.
\label{fig3.11}
$\textbf{(a)(d)(g)}$: the structures of the one-soliton solutions,
$\textbf{(b)(e)(h)}$: the density plot,
$\textbf{(c)(f)(i)}$: the wave propagation of the one-soliton solutions.}\\

 Fig. 4   shows  the localized structures and  dynamic propagation behaviors of one-soliton solutions  by choosing appropriate parameters.  As shown in Fig. 4, we can see that three one-soliton soultions have the same direction of propagation, at the same time, they have similar properties in amplitude, peak and so forth.

\noindent
{\rotatebox{0}{\includegraphics[width=3.6cm,height=3.0cm,angle=0]{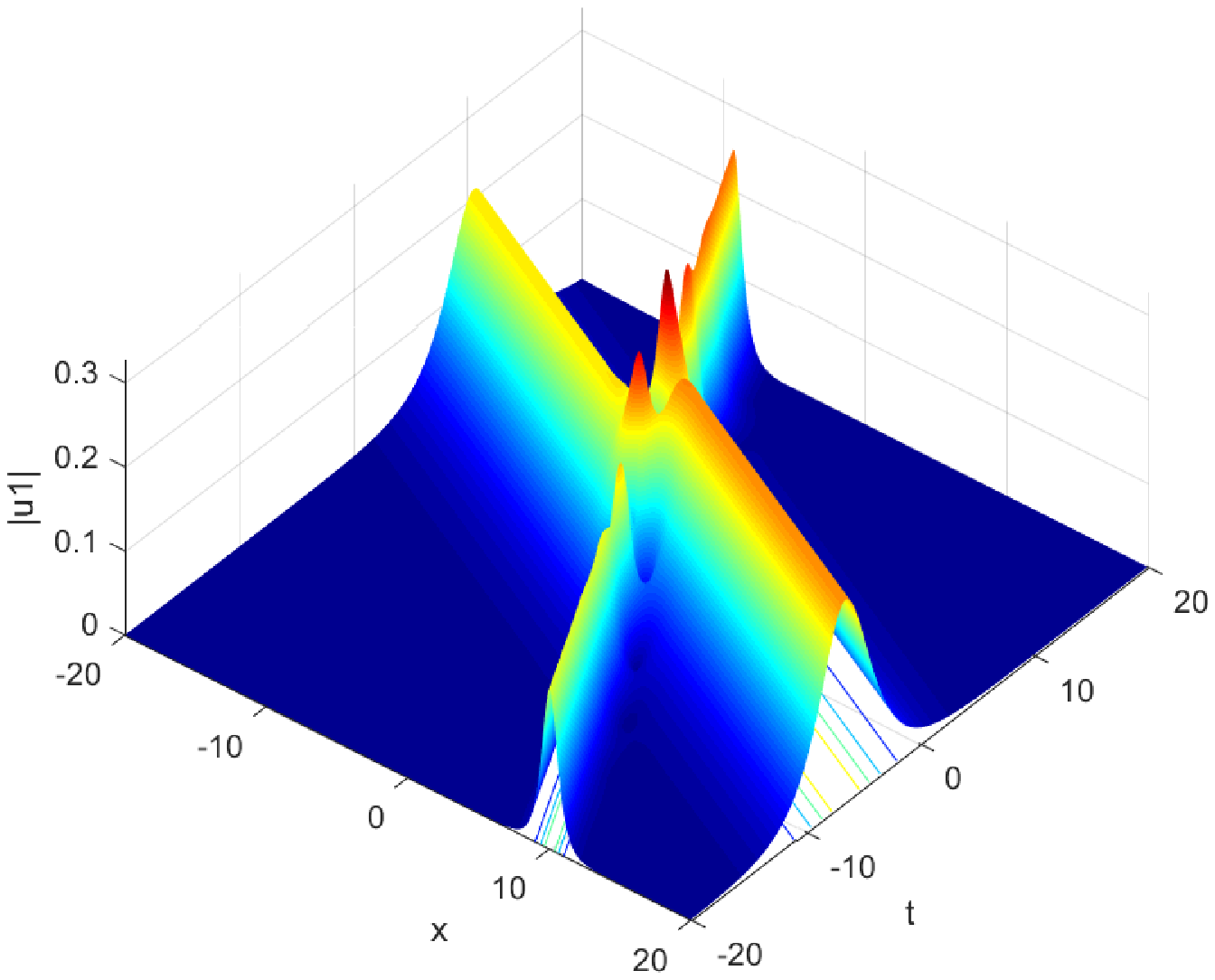}}}
~~~~
{\rotatebox{0}{\includegraphics[width=3.6cm,height=3.0cm,angle=0]{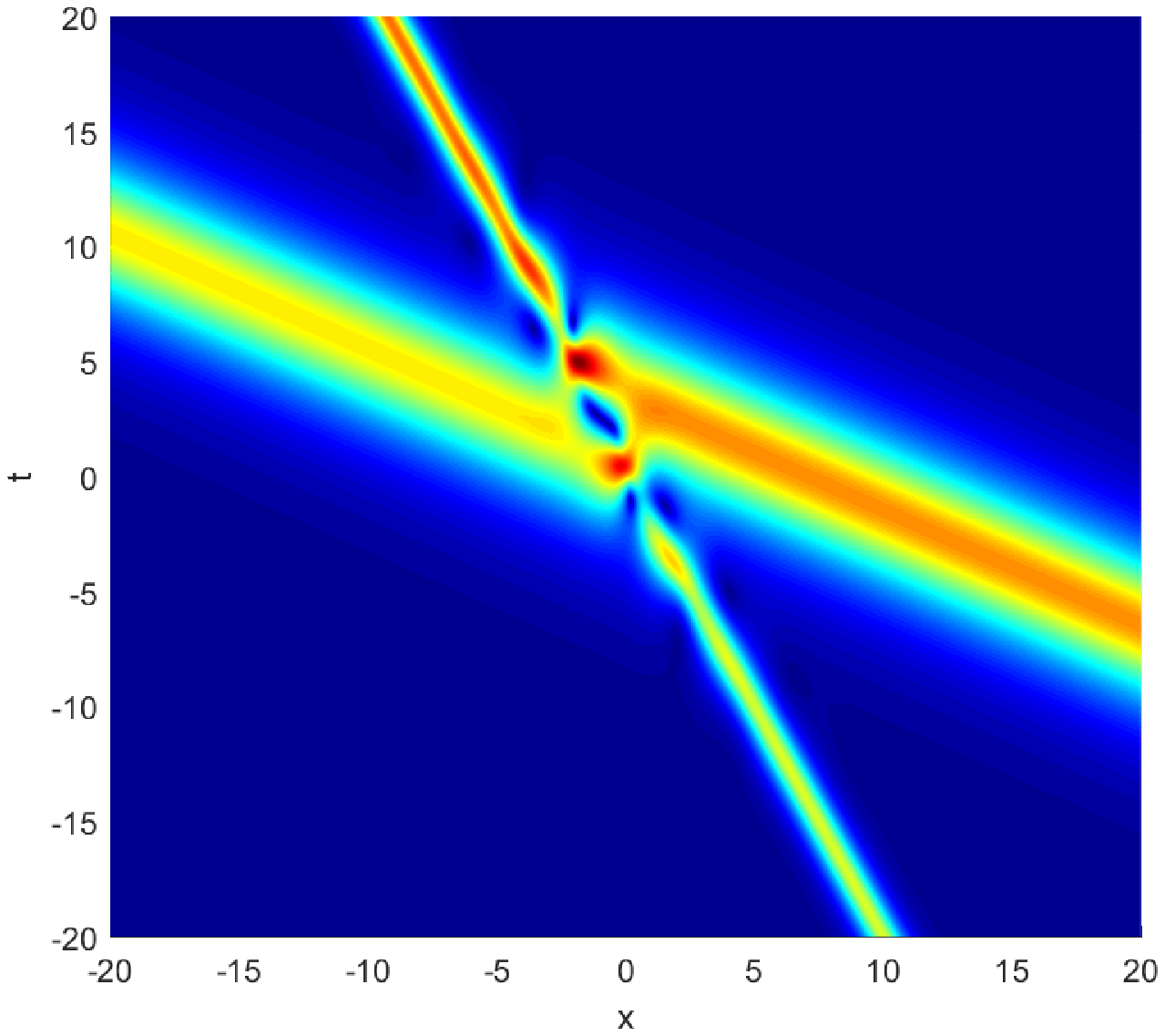}}}
~~~~
{\rotatebox{0}{\includegraphics[width=3.6cm,height=3.0cm,angle=0]{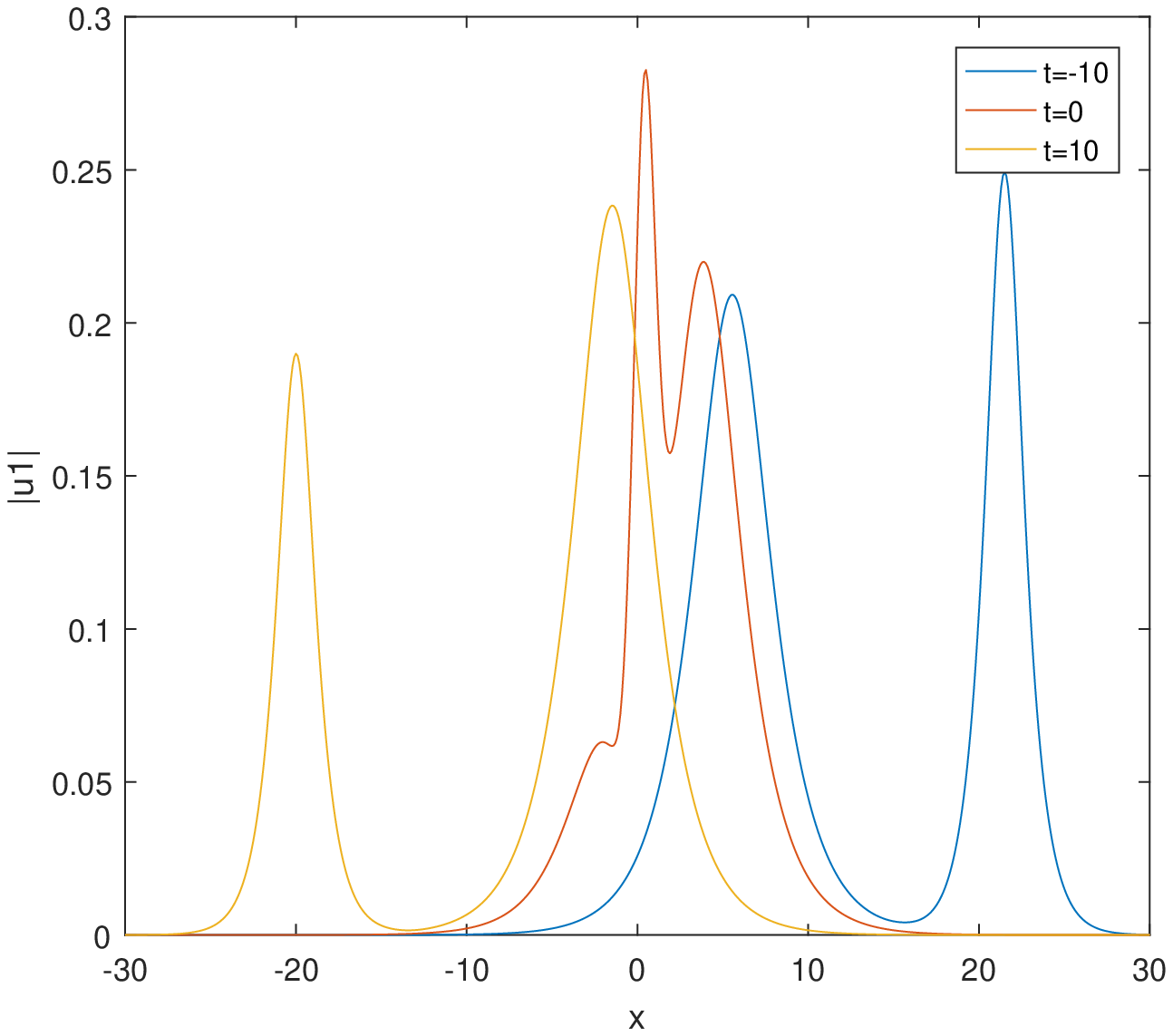}}}

$\ \qquad~~~~~~(\textbf{a})\qquad \ \qquad\qquad\qquad\qquad~(\textbf{b})
\ \qquad\qquad\qquad\qquad\qquad~(\textbf{c})$\\
\noindent
{\rotatebox{0}{\includegraphics[width=3.6cm,height=3.0cm,angle=0]{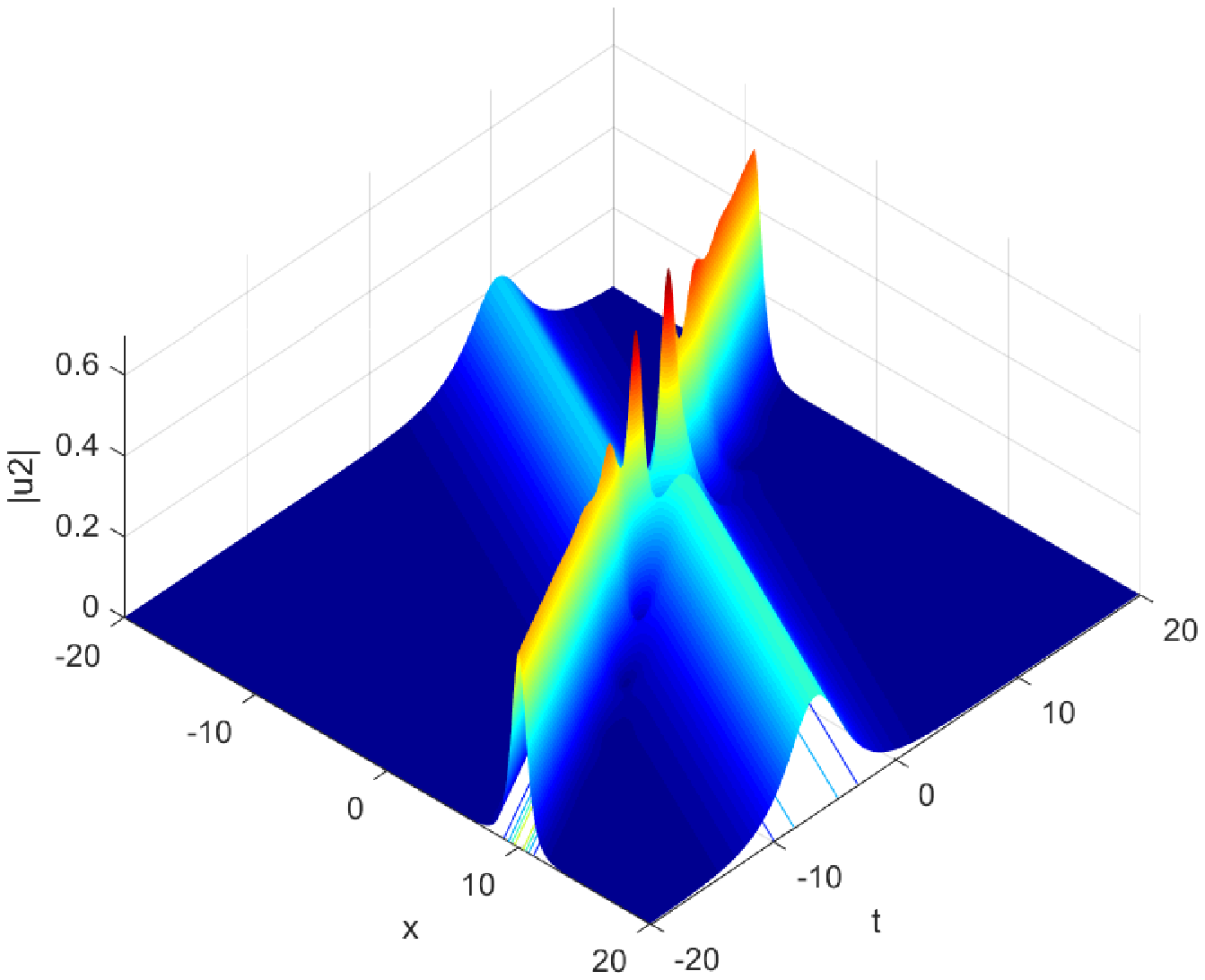}}}
~~~~
{\rotatebox{0}{\includegraphics[width=3.6cm,height=3.0cm,angle=0]{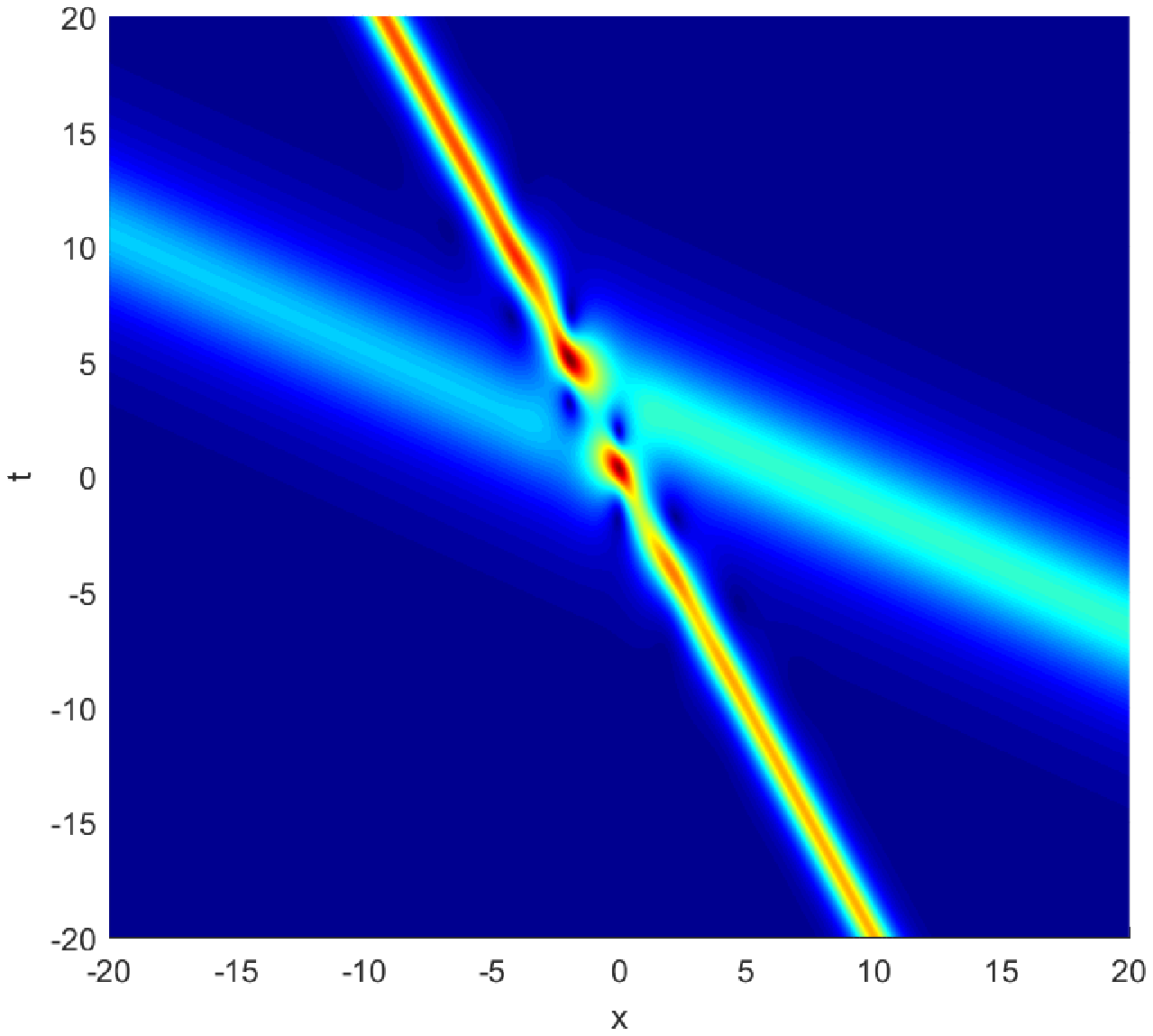}}}
~~~~
{\rotatebox{0}{\includegraphics[width=3.6cm,height=3.0cm,angle=0]{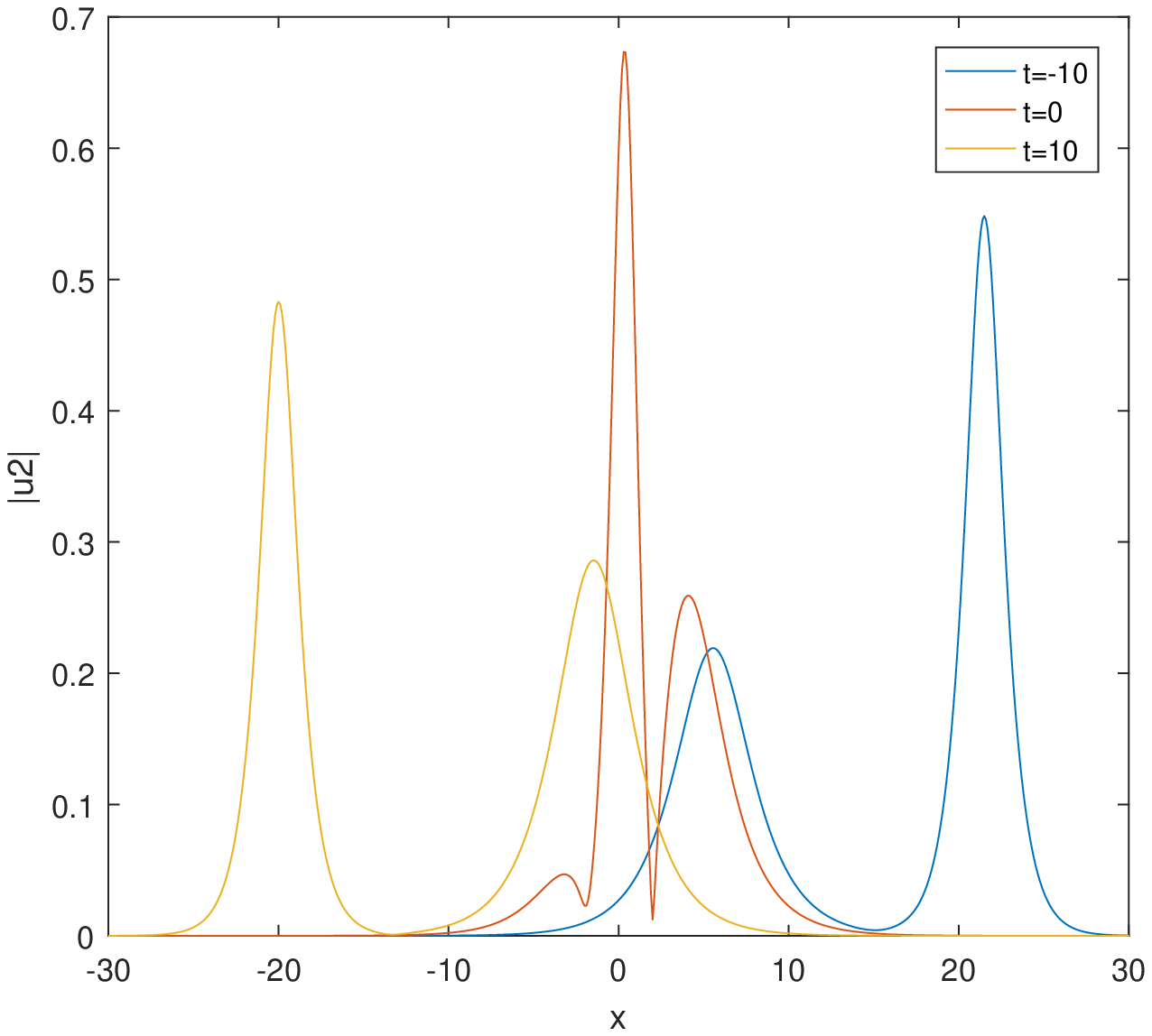}}}

$\ \qquad~~~~~~(\textbf{d})\qquad \ \qquad\qquad\qquad\qquad~(\textbf{e})
\ \qquad\qquad\qquad\qquad\qquad~(\textbf{f})$\\
\noindent
{\rotatebox{0}{\includegraphics[width=3.6cm,height=3.0cm,angle=0]{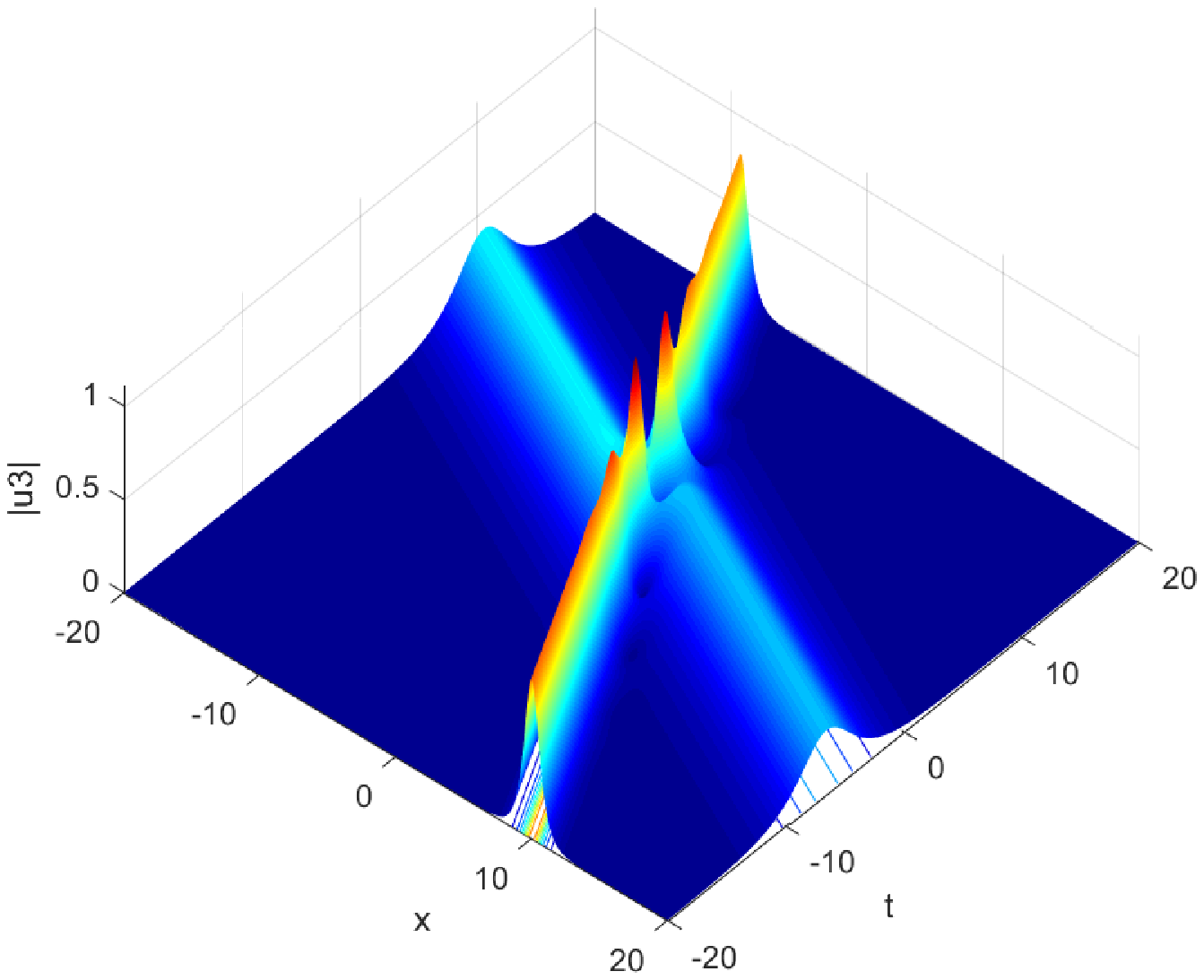}}}
~~~~
{\rotatebox{0}{\includegraphics[width=3.6cm,height=3.0cm,angle=0]{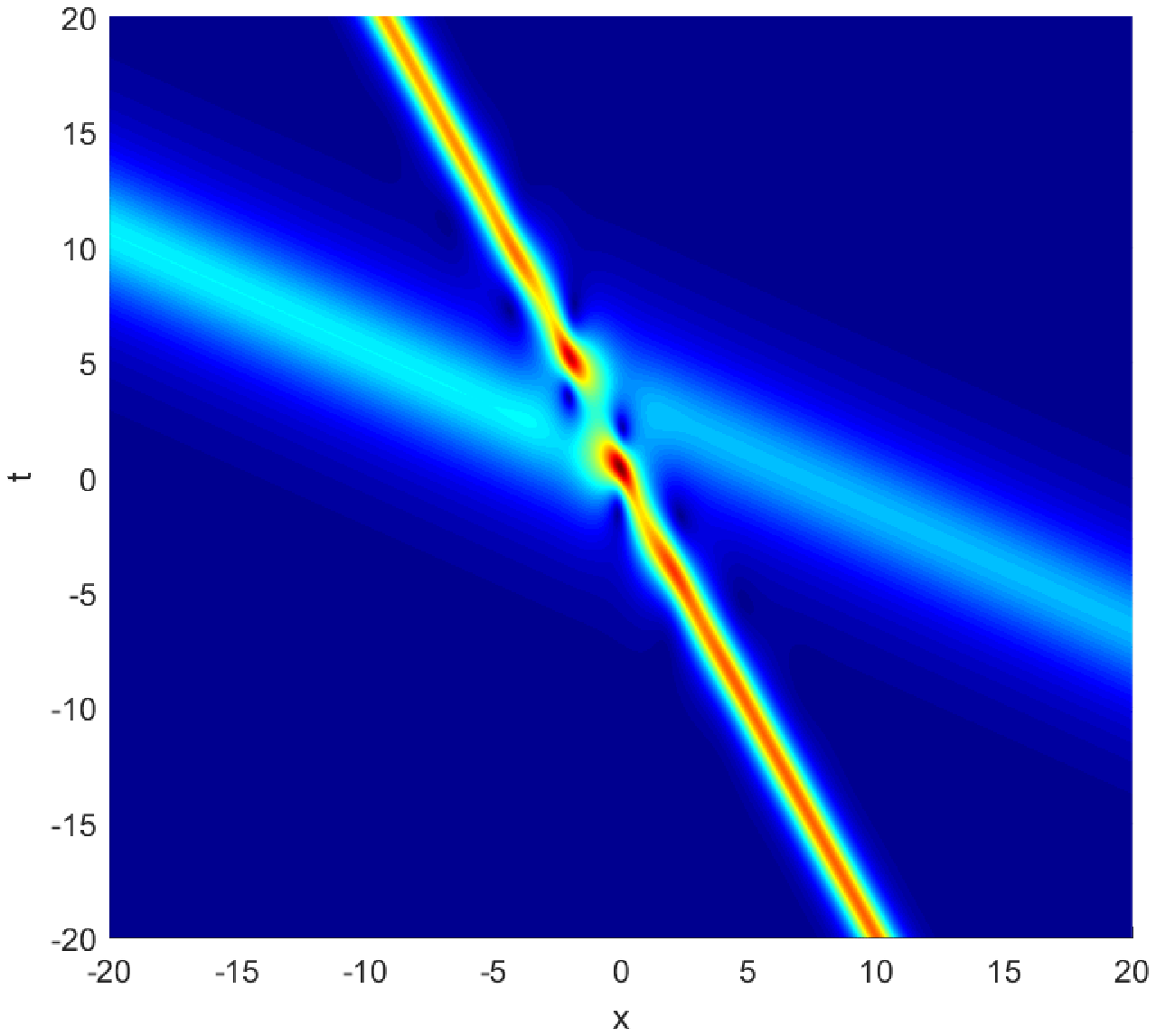}}}
~~~~
{\rotatebox{0}{\includegraphics[width=3.6cm,height=3.0cm,angle=0]{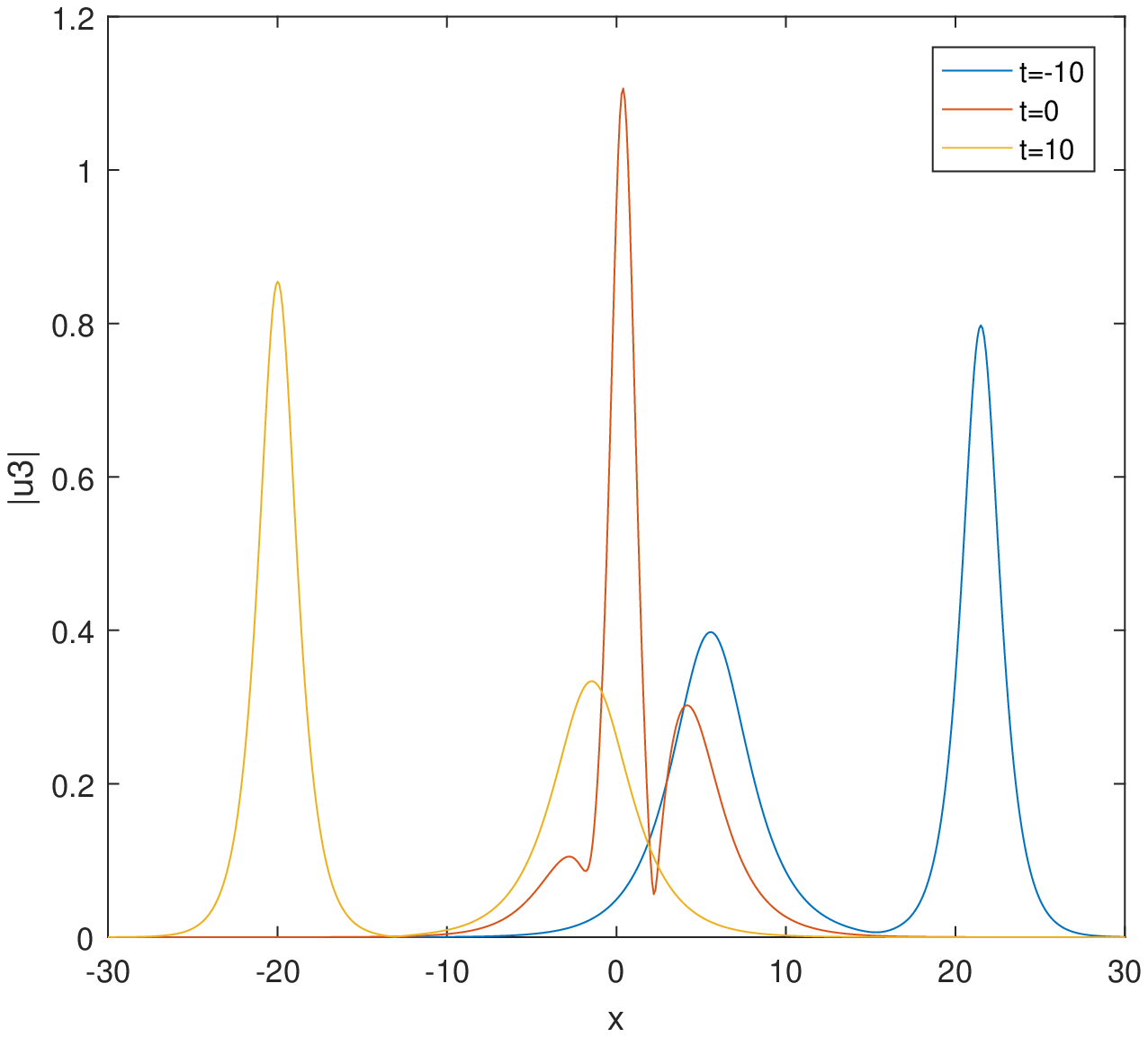}}}

$\ \qquad~~~~~~(\textbf{g})\qquad \ \qquad\qquad\qquad\qquad~(\textbf{h})
\ \qquad\qquad\qquad\qquad\qquad~(\textbf{i})$\\
\noindent { \small \textbf{Figure 5.} (Color online) Two-soliton solutions  to Eq. \eqref{sf3.1} with the parameters  $a_1 =b_1 =0.25$, $a_2=b_2=0.5$, $\alpha_{1,1}=0.4$, $\alpha_{2,1}=0.5$, $\alpha_{3,1}=0.6$, $\alpha_{4,1}=0.7$, $\alpha_{1,2}=0.5$, $\alpha_{2,2}=0.25$, $\alpha_{3,2}=0.55$, $\alpha_{4,2}=0.8$.
\label{fig3.21}
$\textbf{(a)(d)(g)}$: the structures of the two-soliton solutions,
$\textbf{(b)(e)(h)}$: the density plot,
$\textbf{(c)(f)(i)}$: the wave propagation of the two-soliton solutions.} \\

Fig. 5  presents  the localized structures and  dynamic propagation behaviors of  two-soliton solutions by choosing appropriate parameters. Through careful observation of $u_1$ in Figs. $5(a)-5(c)$, we can see that the propagtion direction of  lower left soliton is unexchanged before and after two solitons collide with each other, but its energy has been improved,  while nother soliton in the upper right has exchanged its position and decreased its energy. Similar results for $u_2$ and $u_3$ can be obtained by observing Figs. $5(d)-5(f)$ and Figs. $5(g)-5(i)$,  respectively.

Fig. 6   displays  the localized structures and  dynamic propagation behaviors  of  three-soliton solutions by choosing appropriate parameters.  Via analyzing the figures carefully, we can obtain the information about the energy, propagation direction and position of three solitons. For example, as for $u_1$, before three solitons collide with each other,  two of the three solitons travel in nearly parallel directions and the third soliton travel in another different direction. After collision,  the propagation directions of all three solitons are unexchanged, but their positions have shifted and their energies have been increased  or decreased.

\noindent
{\rotatebox{0}{\includegraphics[width=3.6cm,height=3.0cm,angle=0]{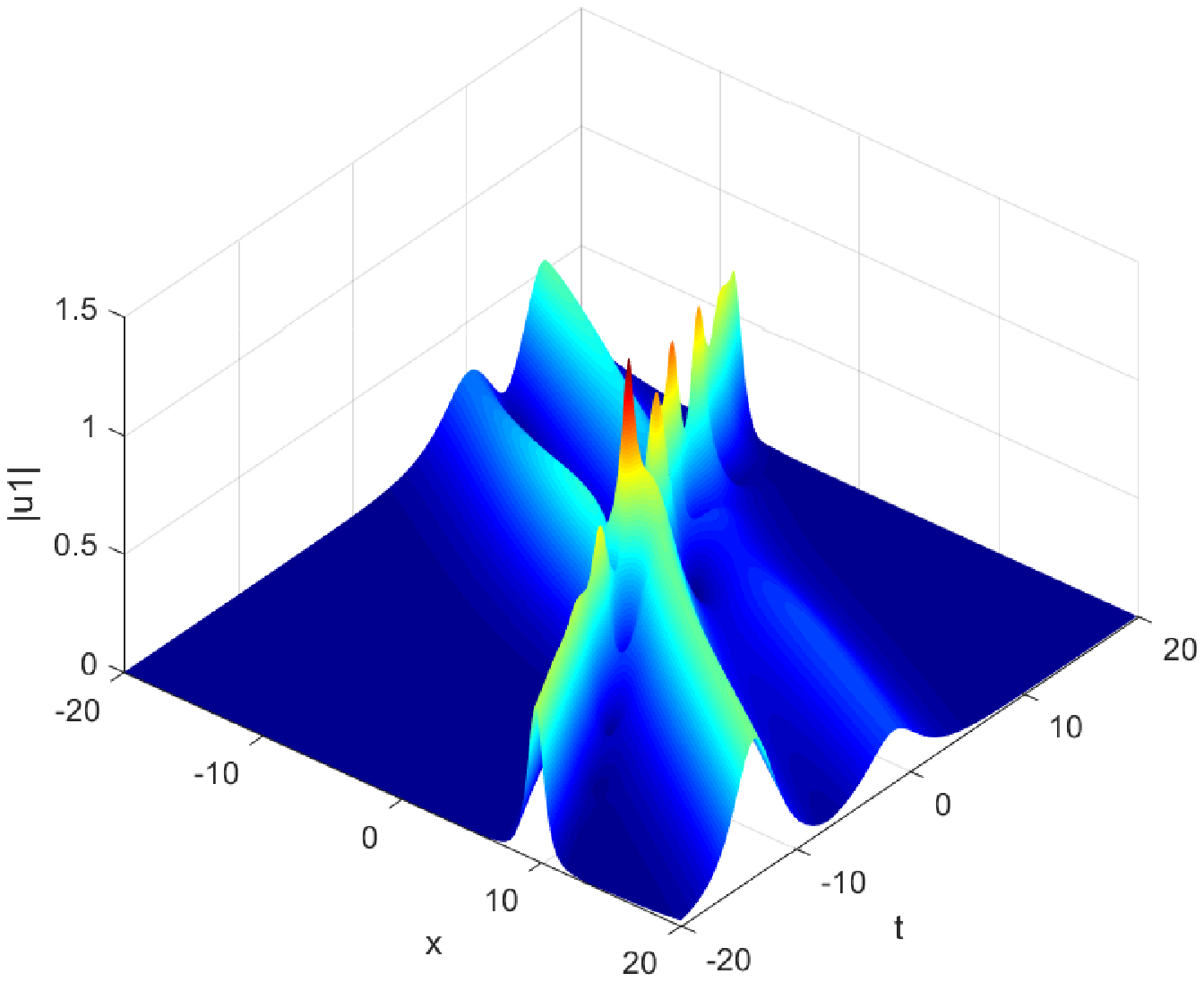}}}
~~~~
{\rotatebox{0}{\includegraphics[width=3.6cm,height=3.0cm,angle=0]{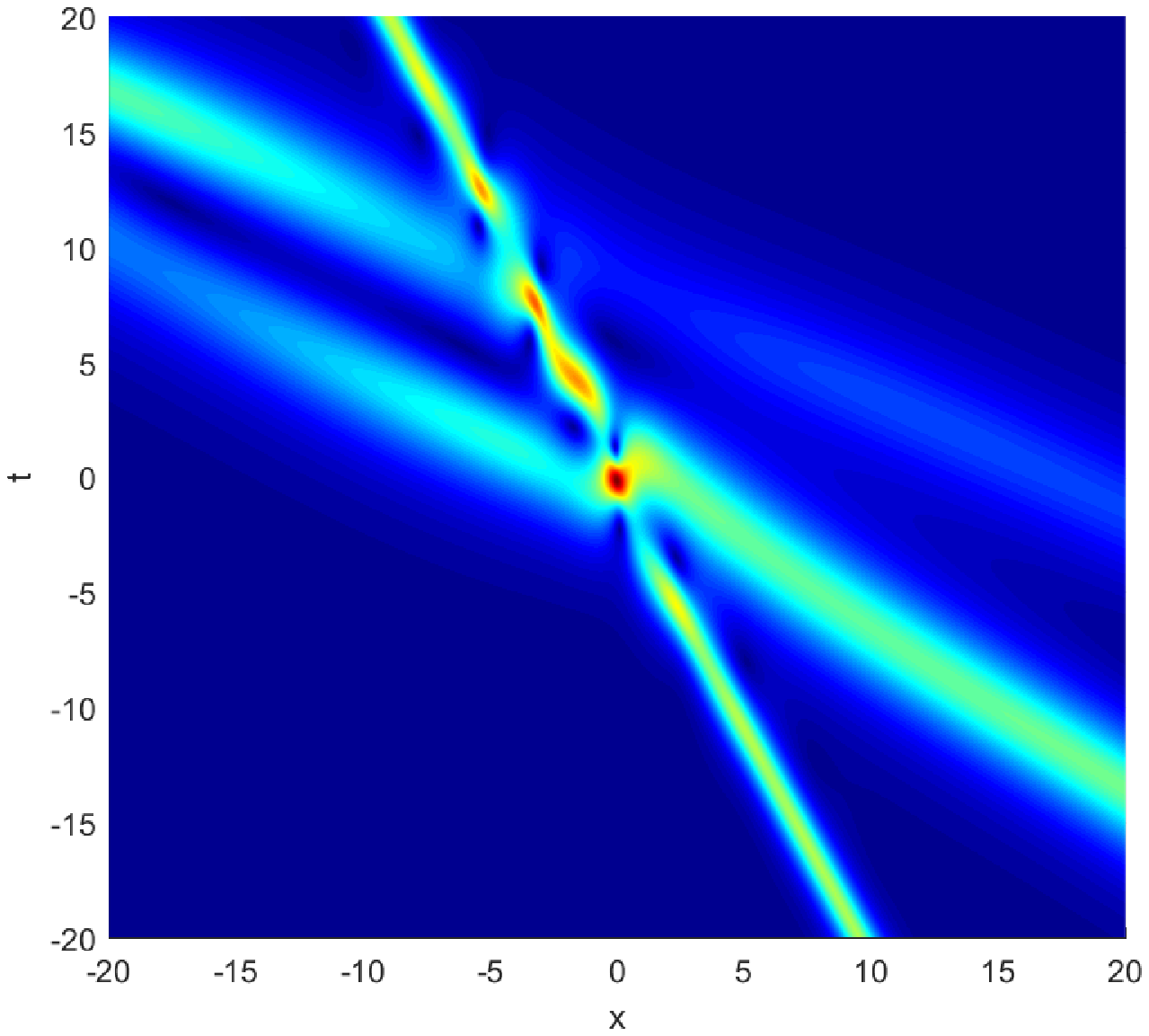}}}
~~~~
{\rotatebox{0}{\includegraphics[width=3.6cm,height=3.0cm,angle=0]{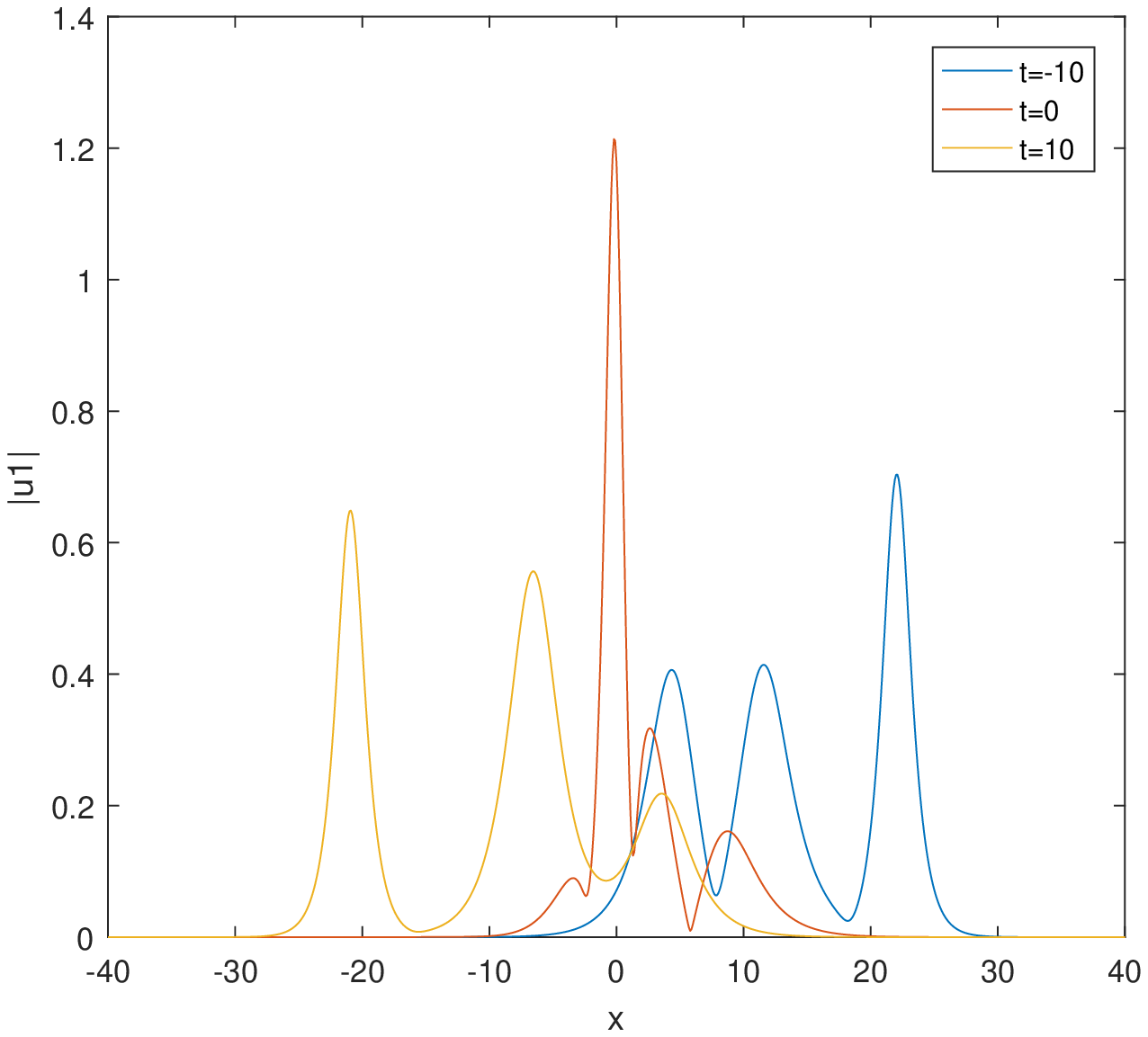}}}

$\ \qquad~~~~~~(\textbf{a})\qquad \ \qquad\qquad\qquad\qquad~(\textbf{b})
\ \qquad\qquad\qquad\qquad\qquad~(\textbf{c})$\\
\noindent
{\rotatebox{0}{\includegraphics[width=3.6cm,height=3.0cm,angle=0]{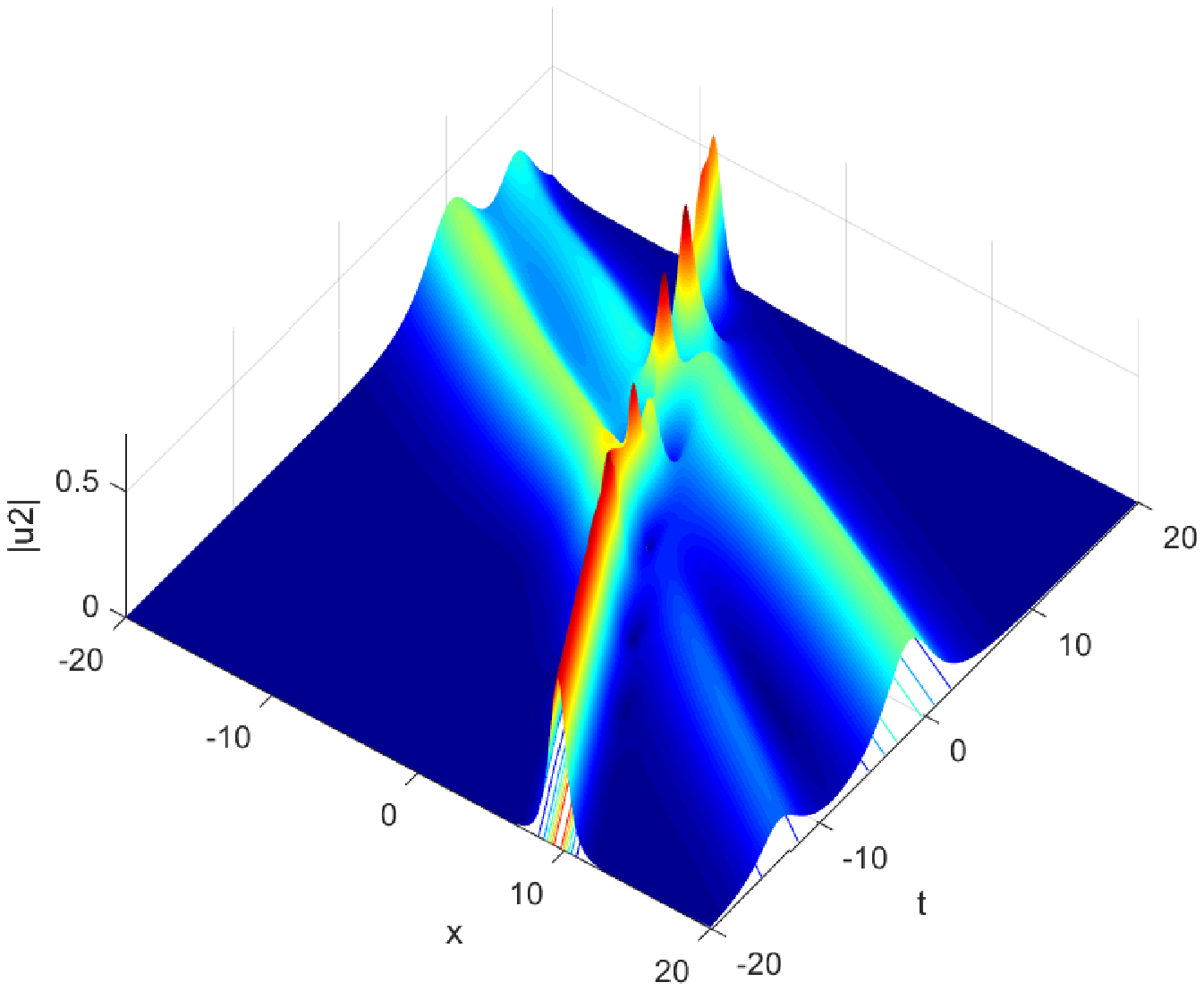}}}
~~~~
{\rotatebox{0}{\includegraphics[width=3.6cm,height=3.0cm,angle=0]{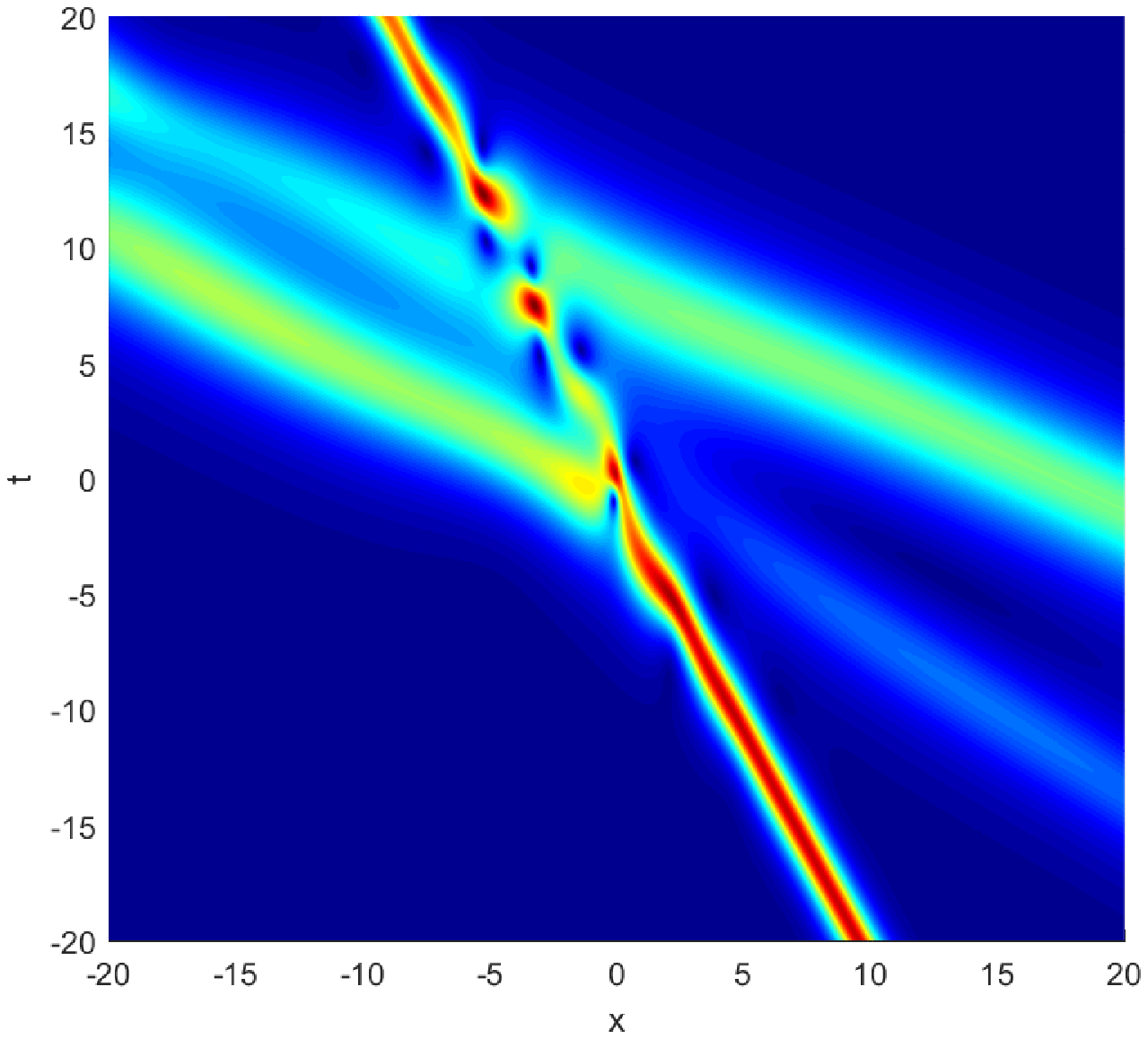}}}
~~~~
{\rotatebox{0}{\includegraphics[width=3.6cm,height=3.0cm,angle=0]{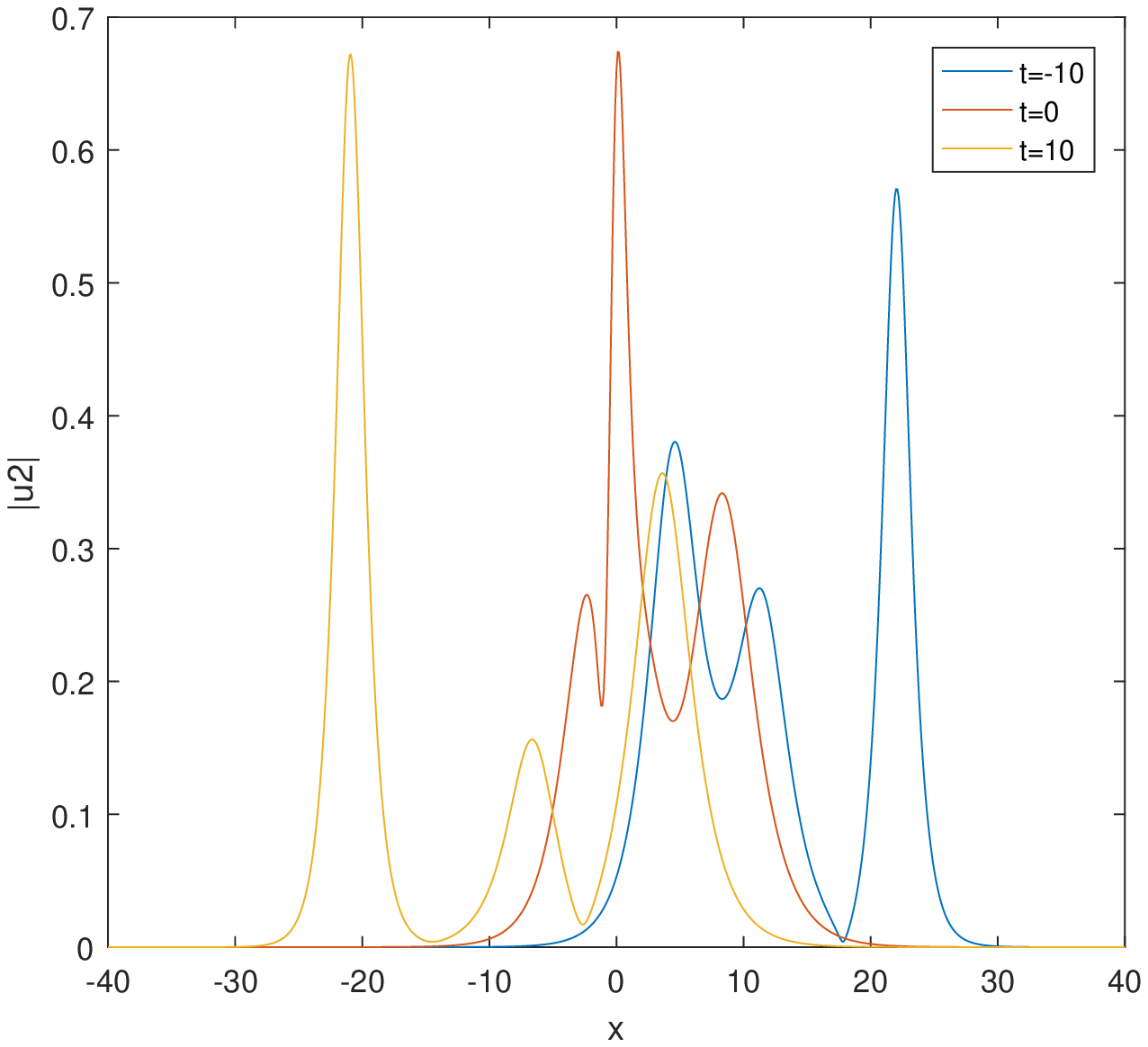}}}

$\ \qquad~~~~~~(\textbf{d})\qquad \ \qquad\qquad\qquad\qquad~(\textbf{e})
\ \qquad\qquad\qquad\qquad\qquad~(\textbf{f})$\\
\noindent
{\rotatebox{0}{\includegraphics[width=3.6cm,height=3.0cm,angle=0]{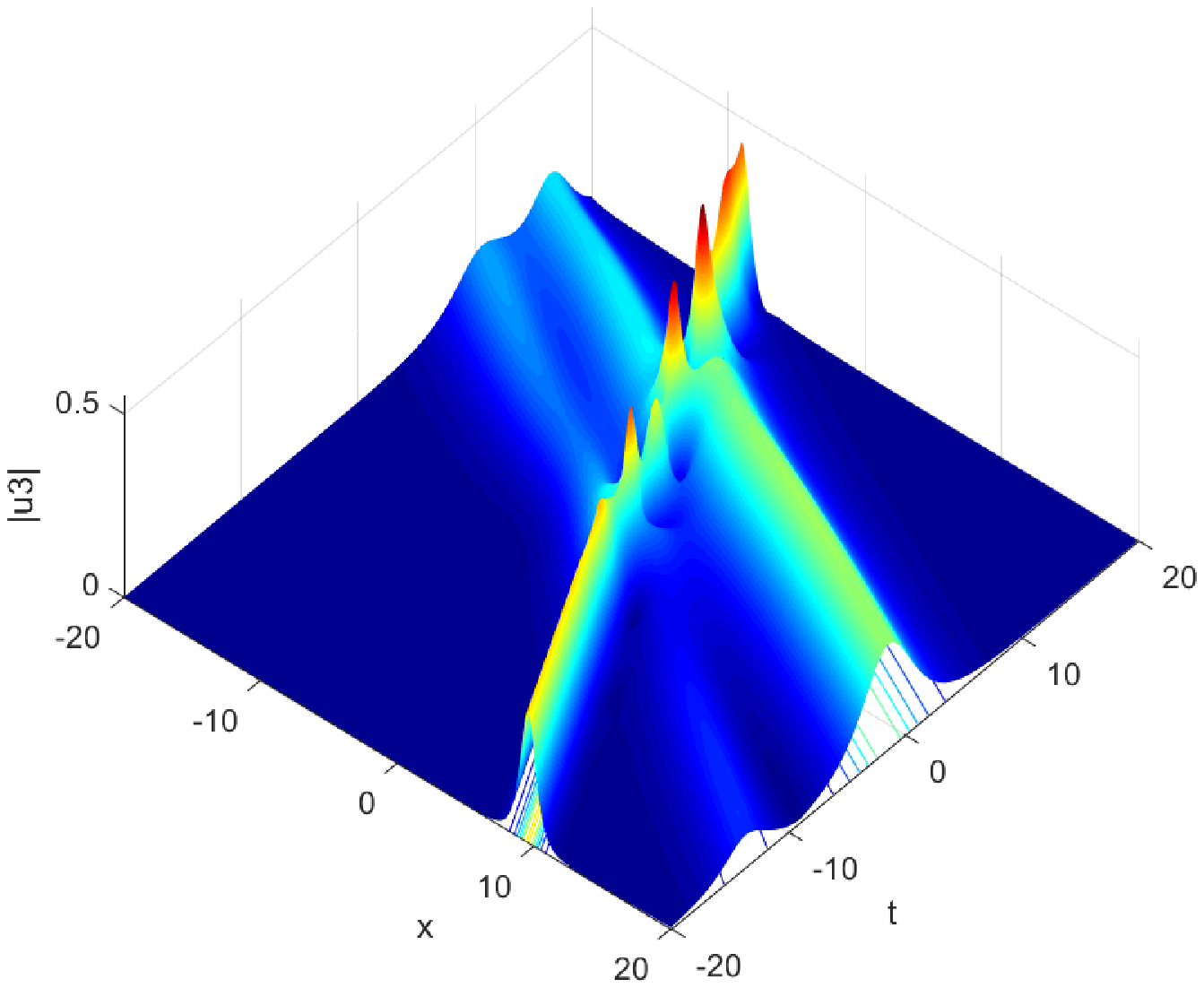}}}
~~~~
{\rotatebox{0}{\includegraphics[width=3.6cm,height=3.0cm,angle=0]{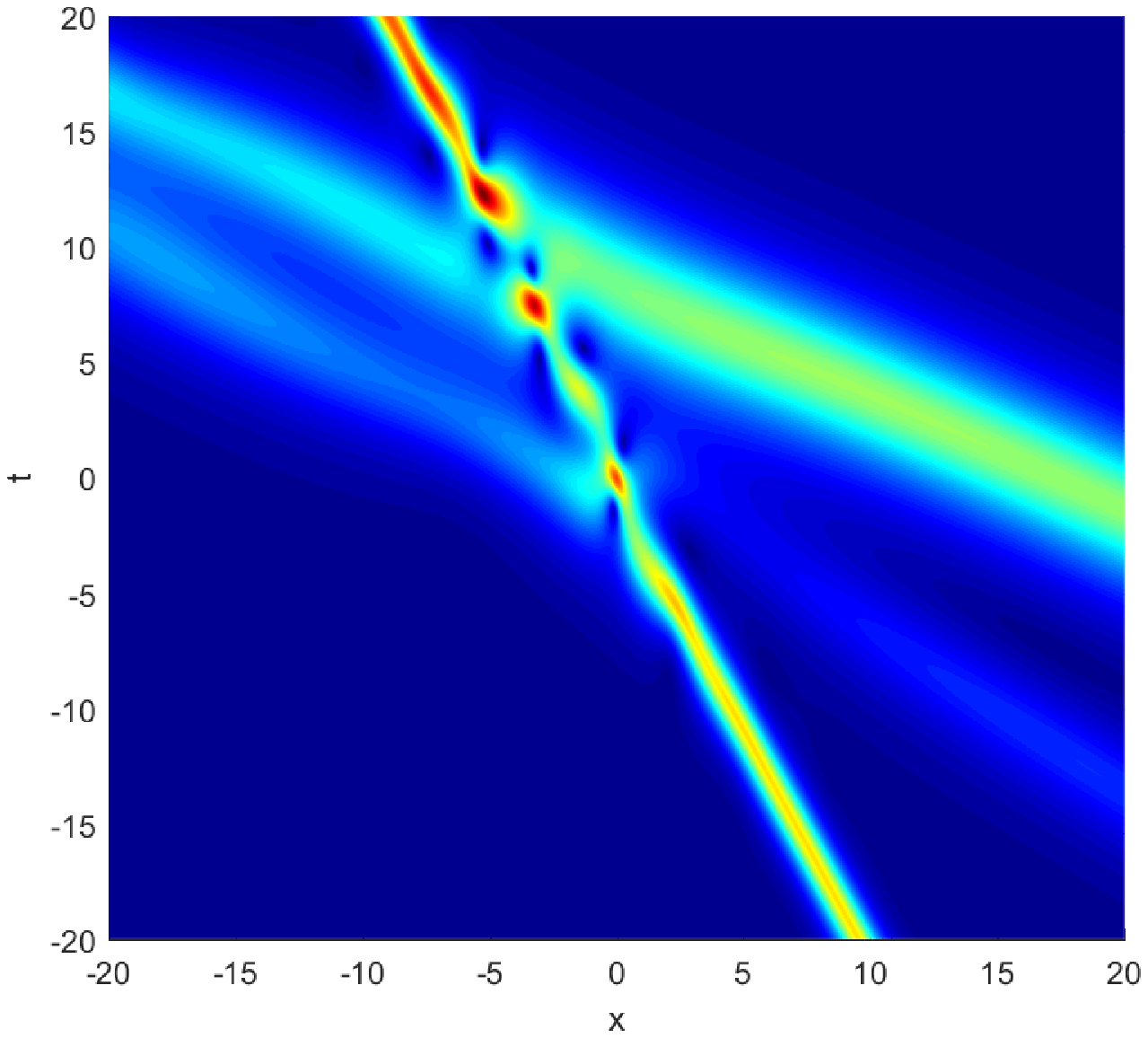}}}
~~~~
{\rotatebox{0}{\includegraphics[width=3.6cm,height=3.0cm,angle=0]{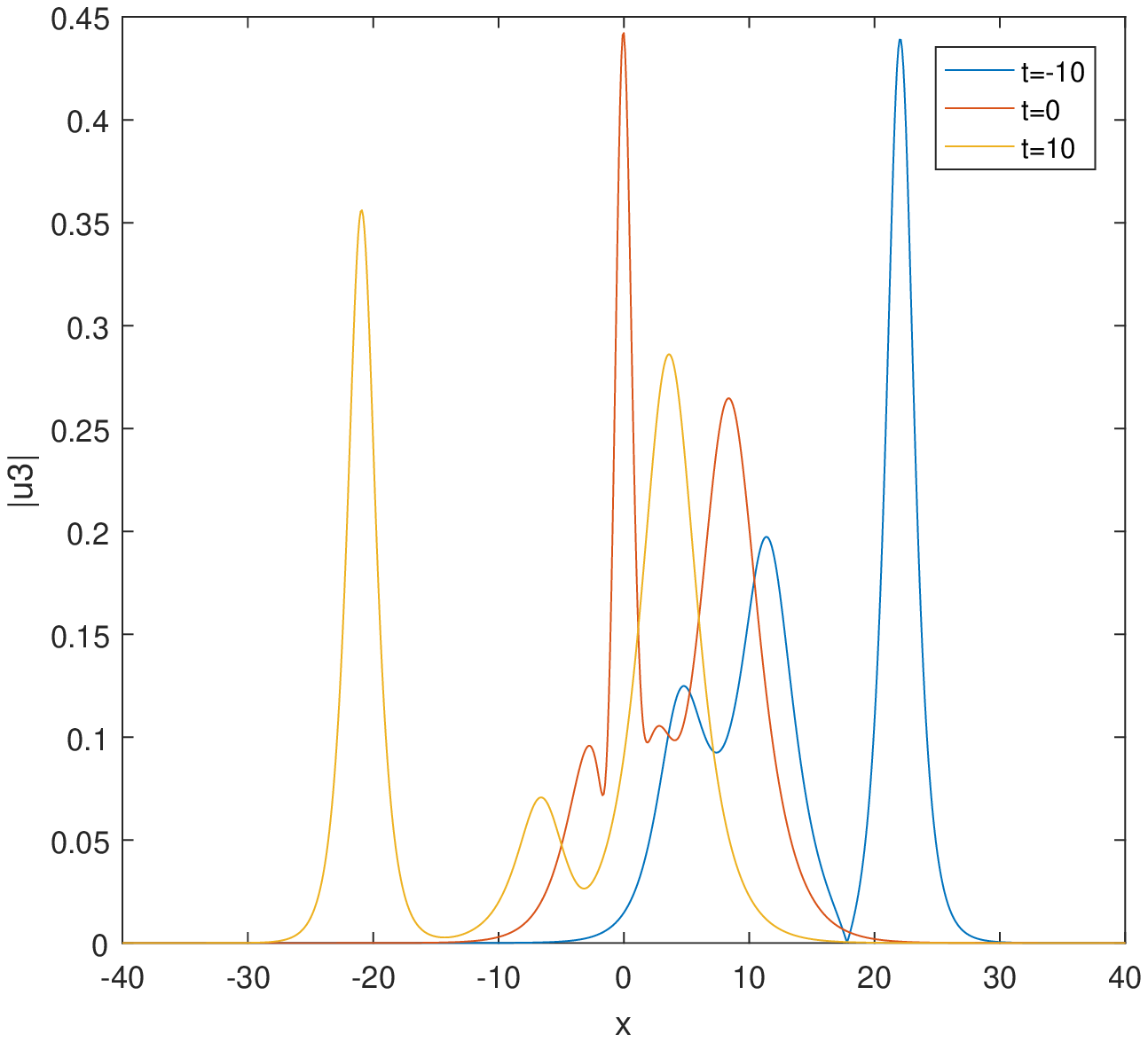}}}

$\ \qquad~~~~~~(\textbf{g})\qquad \ \qquad\qquad\qquad\qquad~(\textbf{h})
\ \qquad\qquad\qquad\qquad\qquad~(\textbf{i})$\\
\noindent { \small \textbf{Figure 6.} (Color online) Three-soliton solutions to Eq. \eqref{sf3.1} with the parameters   $a_1 =b_1 =0.25$, $a_2=b_2=0.5$, $a_3=b_3=0.3$, $\alpha_{1,1}=0.25$, $\alpha_{2,1}=0.45$, $\alpha_{3,1}=0.75$, $\alpha_{4,1}=0.6$, $\alpha_{1,2}=1.9$, $\alpha_{2,2}=1.6$, $\alpha_{3,2}=1.3$, $\alpha_{4,2}=1.0$, $\alpha_{1,3}=2.5$, $\alpha_{2,3}=2.9$, $\alpha_{3,3}=2.3$, $\alpha_{4,3}=2.1$.
\label{fig3.21}
$\textbf{(a)(d)(g)}$: the structures of the three-soliton solutions,
$\textbf{(b)(e)(h)}$: the density plot,
$\textbf{(c)(f)(i)}$: the wave propagation of the three-soliton solutions.} \\

\subsection{ $Q$ is taken as a special 6$\times$1 form}
In last subsection, we take $Q$ as special case as follows
\begin{equation}
    Q=(u_1, u_1^{*}, u_2, u_2^{*}, u_3, u_3^{*})^{T},
\end{equation}
then
\begin{equation}
    U= \begin{bmatrix}
        0  &0&0&0& 0 &0 & u_1\\
        0  &0&0&0& 0 &0 & u_1^{*}\\
         0  &0&0&0& 0 &0 & u_2\\
         0  &0&0&0& 0 &0 &u_2^{*}\\
         0  &0&0&0& 0 &0 & u_3\\
         0  &0&0&0& 0 &0 &u_3^{*}\\
         -u_1^{*} & -u_1  &-u_2^{*} &-u_2 &-u_3^{*} & -u_3^{*} & 0
        \end{bmatrix},
\end{equation}
where $p=6$,  $q=1$. Specially, besides $U = - U^{\dagger}$,  we can derive another symmetry for $U$,
\begin{equation} \label{symme}
    U^{*} = \Lambda U \Lambda,
\end{equation}
where
\begin{equation}
   \Lambda = \begin{bmatrix}
     0  & 1  & 0  & 0 & 0  & 0  & 0 \\
     1  & 0  & 0  & 0 & 0  & 0  & 0 \\
     0  & 0  & 0  & 1 & 0  & 0  & 0 \\
     0  & 0  & 1  & 0 & 0  & 0  & 0 \\
     0  & 0  & 0  & 0 & 0  & 1  & 0 \\
     0  & 0 & 0  & 0  & 1  & 0  & 0 \\
     0  & 0  & 0  & 0 & 0  & 0  & 1
   \end{bmatrix}.
\end{equation}
With the aid of Eq. \eqref{symme}, we can derive
\begin{equation} \label{q}
    \Lambda \Psi_{1}^{*}(-\zeta^{*}) \Lambda =\Psi_{1}(\zeta), \quad  \Lambda \Psi_{2}^{*}(-\zeta^{*}) \Lambda =\Psi_{2}(\zeta),
\end{equation}
which leads to
\begin{equation}
  \Lambda S^{*}(-\zeta^{*}) \Lambda = S(\zeta).
\end{equation}
Moreover, Eq. \eqref{q} also yield a property,
\begin{equation}
   \Lambda \Gamma_{1}^{*}(-\zeta^{*})\Lambda = \Gamma_{1}(\zeta),
\end{equation}
which indicates that if $\zeta_{j}$ is one zero of $\det(\Gamma_{1})$,  $-\zeta_{j}^{*}$ is also one zero of $\det(\Gamma_1)$. Therefore, we consider the zeros of $\det(\Gamma_{1})$ in the following three cases:
\begin{itemize}
  \item Suppose that $\det(\Gamma_{1})$  has $2N_{1}$ simple zeros satisfying  $\mathrm{Re} (\zeta_j)\neq 0 $,  $\zeta_j = - \zeta_{j-N_1}^{*}$, $N_1+1 \leq j \leq 2 N_1$, which are all in $\mathbb{C}^{+}$;
  \item Suppose that $\det(\Gamma_{1})$  has $N_{2}$ simple zeros $\zeta_{j}$  which are all pure imaginary in $\mathbb{C}^{+}$;
  \item Suppose that $\det(\Gamma_{1})$  has $2N_{1}+N_{2}$ simple zeros $\zeta_{j}$,  where the first $2N_1$ zeros satisfy $\mathrm{Re} (\zeta_j)\neq 0 $ and $\zeta_j = - \zeta_{j-N_1}^{*}$, $N_1+1\leq j \leq 2N_1 $,  the last $N_2$ zeros are  pure imaginary, and all zeros $\zeta_j (1 \leq j \leq 2N_{1}+N_{2})$ are all in $\mathbb{C}^{+}$.
\end{itemize}

Similar  with the last subsection, we can derive the similar results. First, suppose that the discrete scattering  data consists of $\{ \zeta_{j}, \hat{\zeta_{j}}, \vartheta_{j},  \hat{\vartheta_{j}}   \} $ satisfy
\begin{equation}
  \Gamma_{1}(\zeta_{j}) \vartheta_{j}= 0,
\end{equation}
\begin{equation}
  \hat{\vartheta}_{j} \Gamma_{2}(\hat{\zeta}_{j})= 0.
\end{equation}

For case 1,  we can obtain
\begin{equation}
  \begin{aligned}
      \hat{\vartheta}_{j}= \vartheta_{j}^{\dagger}, & \quad 1 \leq j \leq 2N_1, \\
      \vartheta_{j} = \Lambda \vartheta_{j-N_1}^{*}, & \quad N_1+1 \leq j \leq 2N_1.
  \end{aligned}
\end{equation}
More specificially,

\begin{equation}
 \vartheta_{j}=
\left\{
\begin{array}{cc}
     e^{\theta_{j}\sigma} \vartheta_{j,0},  & \quad 1 \leq j \leq N_1,  \\
     \Lambda e^{\theta_{j-N_1}^{*} \sigma} \vartheta_{j-N_1,0}^{*},  & \quad N_1+1 \leq j \leq 2N_1,
\end{array} \right.
 \end{equation}

\begin{equation}
 \hat{\vartheta}_{j}=
\left\{
\begin{array}{cc}
    \vartheta_{j,0}^{\dagger} e^{\theta_{j}^{*}\sigma},  & \quad 1 \leq j \leq N_1,  \\
   \vartheta_{j-N_1,0}^{T} e^{\theta_{j-N_1} \sigma}  \Lambda  ,  & \quad N_1+1 \leq j \leq 2N_1.
\end{array} \right.
 \end{equation}
Moreover, the solutions of RH problem in Eq. \eqref{solu} can be rewritten as follows,
\begin{equation}
\left\{
  \begin{aligned}
       \Gamma_{1}(\zeta) = \mathbb{I} -\sum_{k=1}^{2 N_1 } \sum_{j=1}^{2 N_1} \frac{\vartheta_{k}\widehat{\vartheta}_{j}(M^{-1})_{k,j}}{\zeta-\zeta_{j}^{*}},  \\
       \Gamma_{2}(\zeta) = \mathbb{I} + \sum_{k=1}^{2 N_1} \sum_{j=1}^{2 N_1} \frac{\vartheta_{k}\widehat{\vartheta}_{j}(M^{-1})_{k,j}}{\zeta-\zeta_{k}},
  \end{aligned}
\right.
\end{equation}
then we have
\begin{equation}
       \Gamma_{1}^{(1)}(\zeta) = \sum_{k=1}^{2 N_1 } \sum_{j=1}^{2 N_1}\vartheta_{k}\widehat{\vartheta}_{j}(M^{-1})_{k,j},
\end{equation}
finally, the solutions are as follows
\begin{equation}
\left\{
  \begin{aligned}
           u_1(x,t)=-2 i (\Gamma_{1}^{(1)})_{1,7},  \\
           u_2(x,t)=-2 i (\Gamma_{1}^{(1)})_{3,7},  \\
           u_3(x,t)=-2 i (\Gamma_{1}^{(1)})_{5,7}.
  \end{aligned}
\right.
\end{equation}

Similarly, for case 2, we have the following process
\begin{equation}
\left\{
  \begin{aligned}
       \vartheta_{j}=e^{\theta_{j}\sigma} \vartheta_{j,0}, & \quad 1\leq j \leq N_2, \\
       \hat{\vartheta}_{j}= \vartheta_{j,0}^{\dagger} e^{\theta_{j}^{*}\sigma},   & \quad 1\leq j \leq N_2.
  \end{aligned}
\right.
\end{equation}
Moreover, the solutions of RH problem in Eq.(\ref{solu}) can be rewritten as follows
\begin{equation}
\left\{
  \begin{aligned}
       \Gamma_{1}(\zeta) = \mathbb{I} -\sum_{k=1}^{N_2 } \sum_{j=1}^{N_2} \frac{\vartheta_{k}\widehat{\vartheta}_{j}(M^{-1})_{k,j}}{\zeta-\zeta_{j}^{*}},   \\
       \Gamma_{2}(\zeta) = \mathbb{I} + \sum_{k=1}^{N_2} \sum_{j=1}^{N_2} \frac{\vartheta_{k}\widehat{\vartheta}_{j}(M^{-1})_{k,j}}{\zeta-\zeta_{k}},
  \end{aligned}
\right.
\end{equation}
then we have
\begin{equation}
       \Gamma_{1}^{(1)}(\zeta) = \sum_{k=1}^{N_2} \sum_{j=1}^{N_2}\vartheta_{k}\widehat{\vartheta}_{j}(M^{-1})_{k,j}.
\end{equation}
Finally, the solutions are as follows
\begin{equation}
\left\{
  \begin{aligned}
           u_1(x,t)=-2 i (\Gamma_{1}^{(1)})_{1,7}, \\
           u_2(x,t)=-2 i (\Gamma_{1}^{(1)})_{3,7},  \\
           u_3(x,t)=-2 i (\Gamma_{1}^{(1)})_{5,7}.
  \end{aligned}
\right.
\end{equation}

Equally, for case 3, we have the following process:

\begin{equation}
 \vartheta_{j}=
\left\{
\begin{array}{cc}
     e^{\theta_{j}\sigma} \vartheta_{j,0},  & \quad 1 \leq j \leq N_1,  \\
     \Lambda e^{\theta_{j-N_1}^{*} \sigma} \vartheta_{j-N_1,0}^{*},  & \quad N_1+1 \leq j \leq 2N_1, \\
      e^{\theta_{j}\sigma} \vartheta_{j,0},         &  \quad 2 N_1 + 1 \leq j \leq 2N_1 +N_2,
\end{array} \right.
 \end{equation}

\begin{equation}
 \hat{\vartheta}_{j}=
\left\{
\begin{array}{cc}
    \vartheta_{j,0}^{\dagger} e^{\theta_{j}^{*}\sigma},  & \quad 1 \leq j \leq N_1,  \\
   \vartheta_{j-N_1,0}^{T} e^{\theta_{j-N_1} \sigma}  \Lambda  ,  & \quad N_1+1 \leq j \leq 2N_1, \\
   \vartheta_{j,0}^{\dagger} e^{\theta_{j}^{*}\sigma},  & \quad 2 N_1 + 1 \leq j \leq 2 N_1 +N_2.
\end{array} \right.
 \end{equation}

Moreover, the solutions of RH problem in Eq.(\ref{solu}) can be rewritten as follows
\begin{equation}
\left\{
  \begin{aligned}
       \Gamma_{1}(\zeta) = \mathbb{I} -\sum_{k=1}^{2 N_1 +N_2 } \sum_{j=1}^{2 N_1 +N_2 } \frac{\vartheta_{k}\widehat{\vartheta}_{j}(M^{-1})_{k,j}}{\zeta-\zeta_{j}^{*}},   \\
       \Gamma_{2}(\zeta) = \mathbb{I} + \sum_{k=1}^{2 N_1 +N_2 } \sum_{j=1}^{2 N_1 +N_2 } \frac{\vartheta_{k}\widehat{\vartheta}_{j}(M^{-1})_{k,j}}{\zeta-\zeta_{k}},
  \end{aligned}
\right.
\end{equation}
then we have
\begin{equation}
       \Gamma_{1}^{(1)}(\zeta) = \sum_{k=1}^{2 N_1 +N_2 } \sum_{j=1}^{2 N_1 +N_2 }\vartheta_{k}\widehat{\vartheta}_{j}(M^{-1})_{k,j}.
\end{equation}
Finally, the solutions are as follows:
\begin{equation}
\left\{
  \begin{aligned}
           u_1(x,t)=-2 i (\Gamma_{1}^{(1)})_{1,7}, \\
           u_2(x,t)=-2 i (\Gamma_{1}^{(1)})_{3,7},  \\
           u_3(x,t)=-2 i (\Gamma_{1}^{(1)})_{5,7}.
  \end{aligned}
\right.
\end{equation}

\subsubsection{Multi-soliton solutions}
For case 1, we set $\vartheta_{j,0}=(\alpha_{1,j}, \alpha_{2,j},\alpha_{3,j},\alpha_{4,j},\alpha_{5,j},\alpha_{6,j},1)^{T}$,
 then
 \begin{equation}
\left\{
 \begin{aligned}
   u_1(x,t)=& 2i\sum_{k=1}^{N_1}\sum_{j=1}^{N_1} \alpha_{1,k} e^{\theta_k -\theta_j^{*}}(M^{-1})_{k,j} + 2i\sum_{k=1}^{N_1}\sum_{j=N_1+1}^{2 N_1} \alpha_{1,k} e^{\theta_k -\theta_{j-N_1}}(M^{-1})_{k,j}   \\
      & +2i\sum_{k=N_1+1}^{2 N_1}\sum_{j=1}^{  N_1} \alpha_{2,k-N_1}^{*} e^{\theta_{k-N_1}^{*} -\theta_{j}^{*}}(M^{-1})_{k,j} \\
        &+2i\sum_{k=N_1+1}^{2 N_1}\sum_{j=N_1+1}^{ 2 N_1} \alpha_{2,k-N_1}^{*} e^{\theta_{k-N_1}^{*} -\theta_{j-N_1}}(M^{-1})_{k,j},  \\
        u_2(x,t)=& 2i\sum_{k=1}^{N_1}\sum_{j=1}^{N_1} \alpha_{3,k} e^{\theta_k -\theta_j^{*}}(M^{-1})_{k,j} + 2i\sum_{k=1}^{N_1}\sum_{j=N_1+1}^{2 N_1} \alpha_{3,k} e^{\theta_k -\theta_{j-N_1}}(M^{-1})_{k,j}   \\
      & +2i\sum_{k=N_1+1}^{2 N_1}\sum_{j=1}^{  N_1} \alpha_{4,k-N_1}^{*} e^{\theta_{k-N_1}^{*} -\theta_{j}^{*}}(M^{-1})_{k,j}\\
                &+2i\sum_{k=N_1+1}^{2 N_1}\sum_{j=N_1+1}^{ 2 N_1} \alpha_{4,k-N_1}^{*} e^{\theta_{k-N_1}^{*} -\theta_{j-N_1}}(M^{-1})_{k,j},   \\
        u_3(x,t) =& 2i\sum_{k=1}^{N_1}\sum_{j=1}^{N_1} \alpha_{5,k} e^{\theta_k -\theta_j^{*}}(M^{-1})_{k,j} + 2i \sum_{k=1}^{N_1}\sum_{j=N_1+1}^{2 N_1} \alpha_{5,k} e^{\theta_k -\theta_{j-N_1}}(M^{-1})_{k,j}   \\
      & + 2i\sum_{k=N_1+1}^{2 N_1}\sum_{j=1}^{  N_1} \alpha_{6,k-N_1}^{*} e^{\theta_{k-N_1}^{*} -\theta_{j}^{*}}(M^{-1})_{k,j}\\
   & +2i\sum_{k=N_1+1}^{2 N_1}\sum_{j=N_1+1}^{ 2 N_1} \alpha_{6,k-N_1}^{*} e^{\theta_{k-N_1}^{*} -\theta_{j-N_1}}(M^{-1})_{k,j},   \\
 \end{aligned}
\right.
 \end{equation}
where $M$ is an $2N_1  \times 2N_1$ matrix. For simplicity, we define
\begin{equation}
\left\{
\begin{aligned}
   M^{(1)}_{k,j}= &  \alpha_{1,k}^{*} \alpha_{1,j}+\alpha_{2,k}^{*} \alpha_{2,j}+ \dots +\alpha_{6,k}^{*} \alpha_{6,j},  \\
   M^{(2)}_{k,j}= &  \alpha_{1,k}^{*} \alpha_{2,j-N_1}^{*}+\alpha_{2,k}^{*} \alpha_{1,j-N_1}^{*}+ \dots +\alpha_{6,k}^{*} \alpha_{5,j-N_1}^{*},   \\
    M^{(3)}_{k,j}= &  \alpha_{2,k-N_1}  \alpha_{1,j}+\alpha_{1,k-N_1}  \alpha_{2,j}+ \dots +\alpha_{5,k-N_1} \alpha_{6,j},   \\
    M^{(4)}_{k,j}= &  \alpha_{1,k-N_1}  \alpha_{1,j-N_1}^{*} + \alpha_{2,k-N_1} \alpha_{2,j-N_1}^{*}+ \dots +\alpha_{6,k-N_1} \alpha_{6,j-N_1}^{*},
\end{aligned}
\right.
\end{equation}
then
\begin{equation}
 m_{k,j}=
\left\{
\begin{array}{l r}
     \frac{M^{(1)}_{k,j}e^{\theta_{k}^{*}+\theta_{j}}+ e^{-\theta_{k}^{*}-\theta_{j}} }{\zeta_{j}-\zeta_{k}^{*}},  &     \quad 1 \leq k,j \leq N_1          \\
     \frac{M^{(2)}_{k,j}e^{\theta_{k}^{*}+\theta_{j-N_1}^{*}}+ e^{-\theta_{k}^{*}- \theta_{j-N_1}^{*}} }{\zeta_{j}-\zeta_{k}^{*}},  &     \quad 1 \leq k \leq N_1 , N_1+1 \leq  j \leq 2 N_1   \\
      \frac{M^{(3)}_{k,j}e^{\theta_{k-N_1} +\theta_{j }}+ e^{-\theta_{k-N_1} -\theta_{j }} }{\zeta_{j}-\zeta_{k}^{*}},  &     \quad N_1+1 \leq  k \leq 2 N_1, 1 \leq j \leq N_1 ,  \\
       \frac{M^{(4)}_{k,j}e^{\theta_{k-N_1} +\theta_{j-N_1 }^{*}}+ e^{-\theta_{k-N_1} -\theta_{j-N_1 }^{*}} }{\zeta_{j}-\zeta_{k}^{*}},  &     \quad N_1+1 \leq k,j \leq  2 N_1.
\end{array} \right.
 \end{equation}

For case 2, we set $\vartheta_{j,0}=(\alpha_{1,j}, \alpha_{2,j},\alpha_{3,j},\alpha_{4,j},\alpha_{5,j},\alpha_{6,j},1)^{T}$,
 then
 \begin{equation}
\left\{
 \begin{aligned}
   u_1(x,t)&= 2i\sum_{k=1}^{N_2}\sum_{j=1}^{N_2} \alpha_{1,k} e^{\theta_k -\theta_j^{*}}(M^{-1})_{k,j}, \\
   u_2(x,t)&= 2i\sum_{k=1}^{N_2}\sum_{j=1}^{N_2} \alpha_{3,k} e^{\theta_k -\theta_j^{*}}(M^{-1})_{k,j}, \\
   u_3(x,t)&= 2i\sum_{k=1}^{N_2}\sum_{j=1}^{N_2} \alpha_{5,k} e^{\theta_k -\theta_j^{*}}(M^{-1})_{k,j}, \\
 \end{aligned}
\right.
 \end{equation}
where
\begin{equation}
 m_{k,j}= \frac{M^{(1)}_{k,j}e^{\theta_{k}^{*}+\theta_{j}} + e^{-\theta_{k}^{*}  - \theta_{j}}}{\zeta_j-\zeta_k^{*}}.
\end{equation}

For case 3, combine the results of case 1 and case 2 reasonably,  we can easily obtain the final soliton solutions.

\subsubsection{A variety of Rational Solutions and Physical Visions}
 For case 1, we usually are interested in the simple situation in $N_1=1$,
  \begin{equation} \label{sf4.1}
\left\{
 \begin{aligned}
   u_1(x,t)&=  2i \alpha_{1,1} e^{\theta_1 -\theta_1^{*}}(M^{-1})_{1,1} +   2i \alpha_{1,1} e^{\theta_1 -\theta_{1}}(M^{-1})_{1,2}   \\
     &  +  2i\alpha_{2,1}^{*} e^{\theta_{1}^{*} -\theta_{1}^{*}}(M^{-1})_{2,1}+  2i \alpha_{2,1}^{*} e^{\theta_{1}^{*}-\theta_{1}}(M^{-1})_{2,2},   \\
        u_2(x,t)&=    2i\alpha_{3,1} e^{\theta_1 -\theta_1^{*}}(M^{-1})_{1,1} +  2i  \alpha_{3,1} e^{\theta_1 -\theta_{1}}(M^{-1})_{1,2} \\
     &  +  2i\alpha_{4,1}^{*} e^{\theta_{1}^{*} -\theta_{1}^{*}}(M^{-1})_{2,1}+  2i \alpha_{4,1}^{*} e^{\theta_{1}^{*}-\theta_{1}}(M^{-1})_{2,2},   \\
             u_3(x,t)&=   2i \alpha_{5,1} e^{\theta_1 -\theta_1^{*}}(M^{-1})_{1,1} +  2i  \alpha_{5,1} e^{\theta_1 -\theta_{1}}(M^{-1})_{1,2}                 \\
     &  + 2i \alpha_{6,1}^{*} e^{\theta_{1}^{*} -\theta_{1}^{*}}(M^{-1})_{2,1}+  2i \alpha_{6,1}^{*} e^{\theta_{1}^{*}-\theta_{1}}(M^{-1})_{2,2},   \\
 \end{aligned}
\right.
 \end{equation}
where $\zeta_{1}=a_1 + i b_1$$(a_1 \neq 0, b_1 >0)$,  and  $\theta_1=-i (\zeta_1 x +4 \zeta_1^3 t)$.    If we take the parameters as $\alpha_{2,1}=\alpha_{1,1}^{*}$, $\alpha_{4,1}=\alpha_{3,1}^{*}$, $\alpha_{6,1}=\alpha_{5,1}^{*}$, then the breather-type  soliton solutions  are obtained
 \begin{equation}
\left\{
    \begin{aligned}
     u_1(x,t)= \frac{-2\sqrt{2}\alpha_{1,1}a_1  b_1}{\sqrt{|\alpha_{1,1}|^2+|\alpha_{3,1}|^2+|\alpha_{5,1}|^2}}\frac{a_1 \cosh X_1 \cos Y_1+ b_1 \sinh X_1 \sin Y_1}{a_1^2 \cosh^2 X_1 + b_1^2 \sin^2 Y_1},  \\
        u_2(x,t)= \frac{-2\sqrt{2}\alpha_{3,1}a_1  b_1}{\sqrt{|\alpha_{1,1}|^2+|\alpha_{3,1}|^2+|\alpha_{5,1}|^2}}\frac{a_1 \cosh X_1 \cos Y_1+ b_1 \sinh X_1 \sin Y_1}{a_1^2 \cosh^2 X_1 + b_1^2 \sin^2 Y_1}, \\
           u_3(x,t)= \frac{-2\sqrt{2}\alpha_{5,1}a_1  b_1}{\sqrt{|\alpha_{1,1}|^2+|\alpha_{3,1}|^2+|\alpha_{5,1}|^2}}\frac{a_1 \cosh X_1 \cos Y_1+ b_1 \sinh X_1 \sin Y_1}{a_1^2 \cosh^2 X_1 + b_1^2 \sin^2 Y_1},
     \end{aligned}
\right.
 \end{equation}
 where $X_1 =  2 b_1 [x+4(3 a_1^2- b_1^2)t] - \ln \sqrt{|\alpha_{1,1}|^2+|\alpha_{3,1}|^2+|\alpha_{5,1}|^2},   Y_1 =  2 a_1 [x+4(a_1^2-3 b_1^2)t]$. The breather-type soliton solutions  are plotted in Fig. 7.

\noindent
{\rotatebox{0}{\includegraphics[width=3.6cm,height=3.0cm,angle=0]{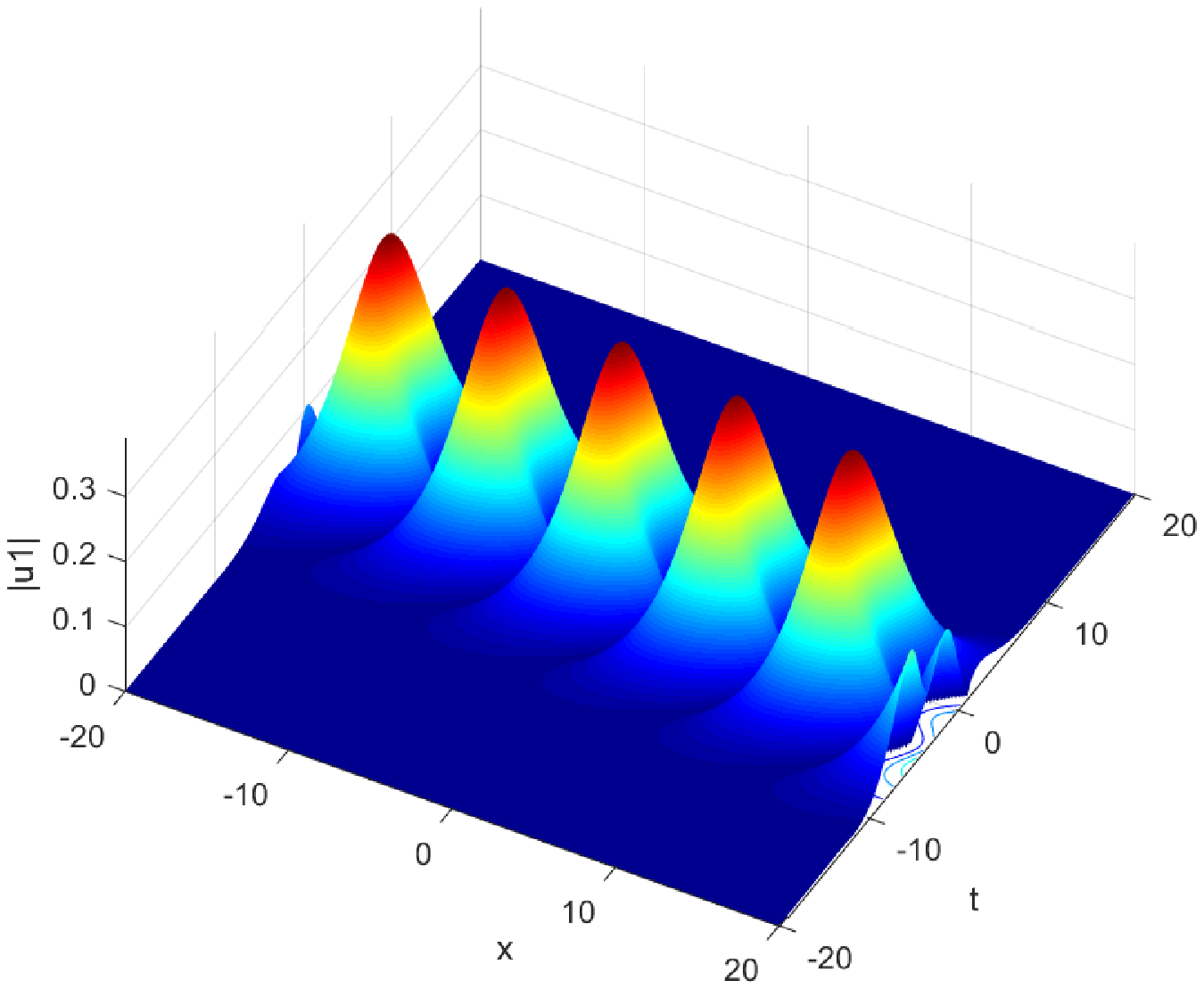}}}
~~~~
{\rotatebox{0}{\includegraphics[width=3.6cm,height=3.0cm,angle=0]{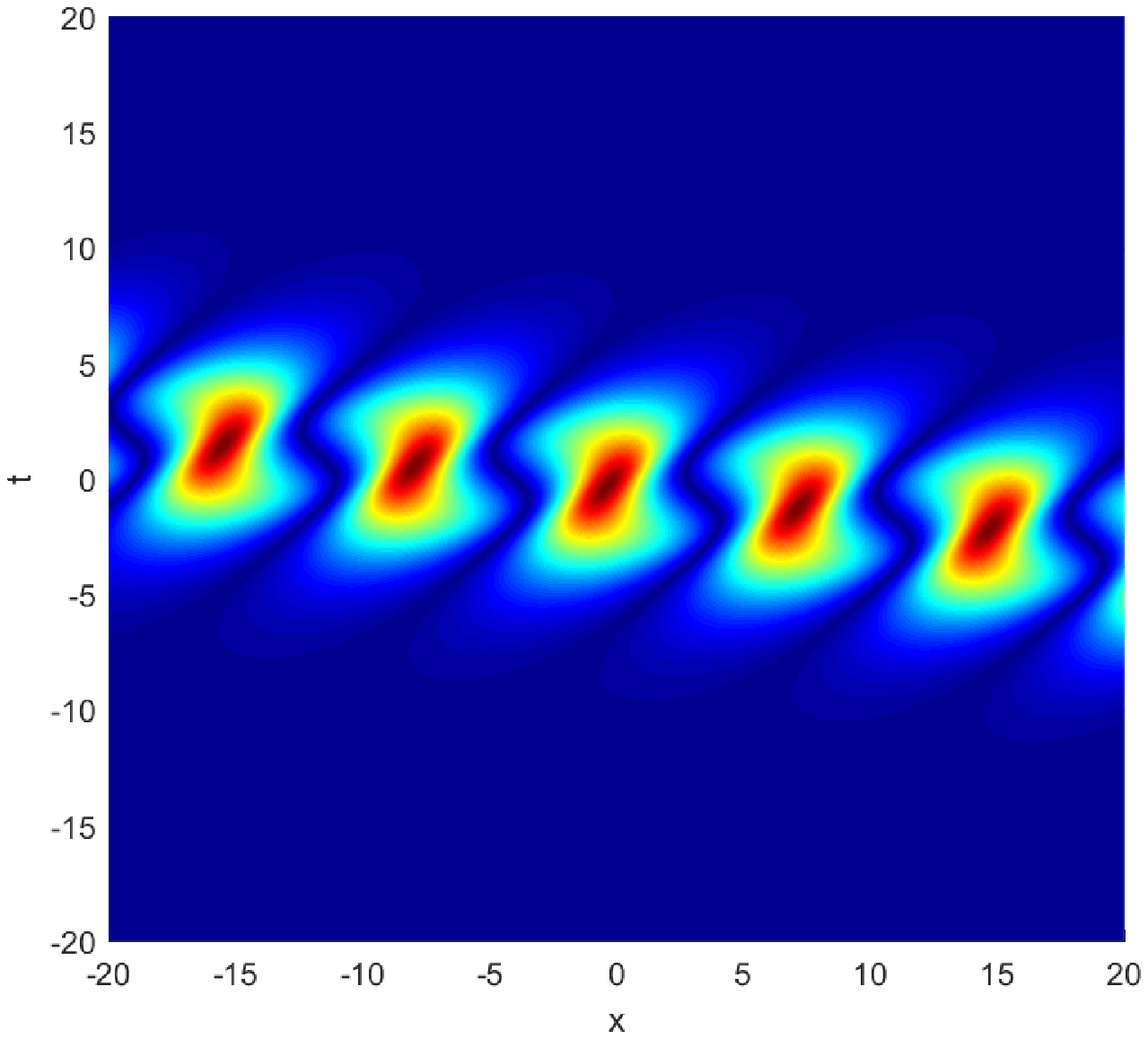}}}
~~~~
{\rotatebox{0}{\includegraphics[width=3.6cm,height=3.0cm,angle=0]{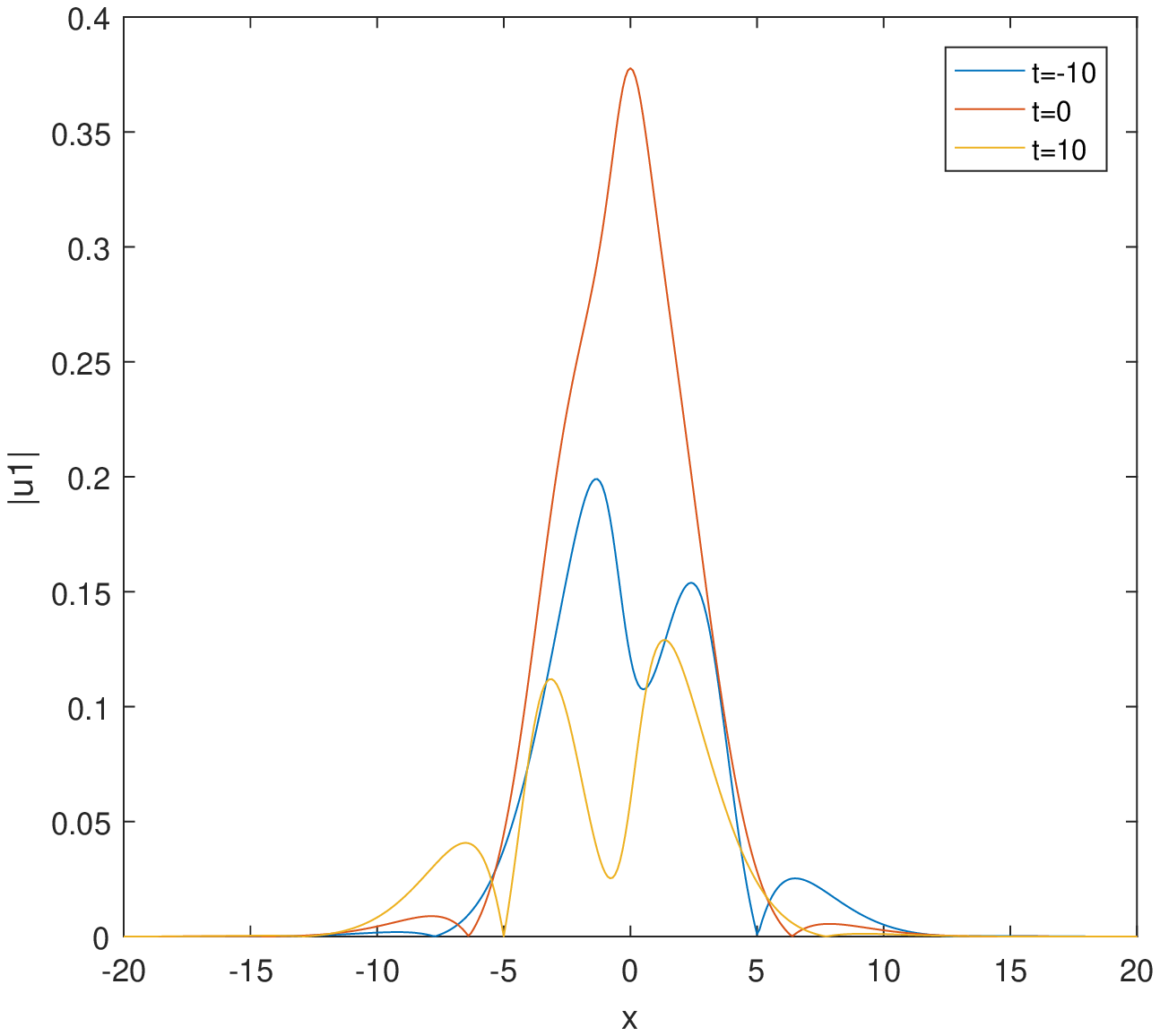}}}

$\ \qquad~~~~~~(\textbf{a})\qquad \ \qquad\qquad\qquad\qquad~(\textbf{b})
\ \qquad\qquad\qquad\qquad\qquad~(\textbf{c})$\\
\noindent
{\rotatebox{0}{\includegraphics[width=3.6cm,height=3.0cm,angle=0]{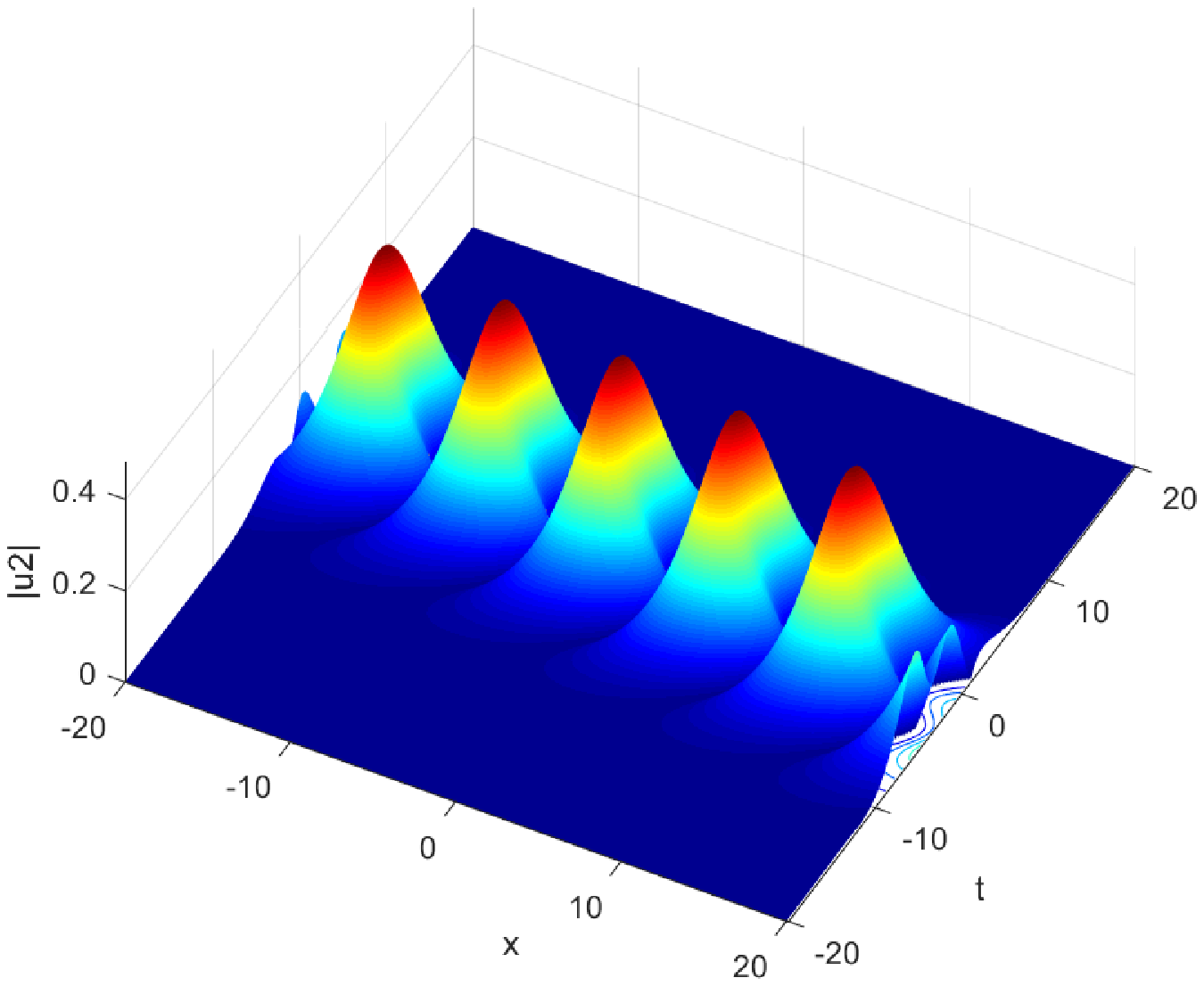}}}
~~~~
{\rotatebox{0}{\includegraphics[width=3.6cm,height=3.0cm,angle=0]{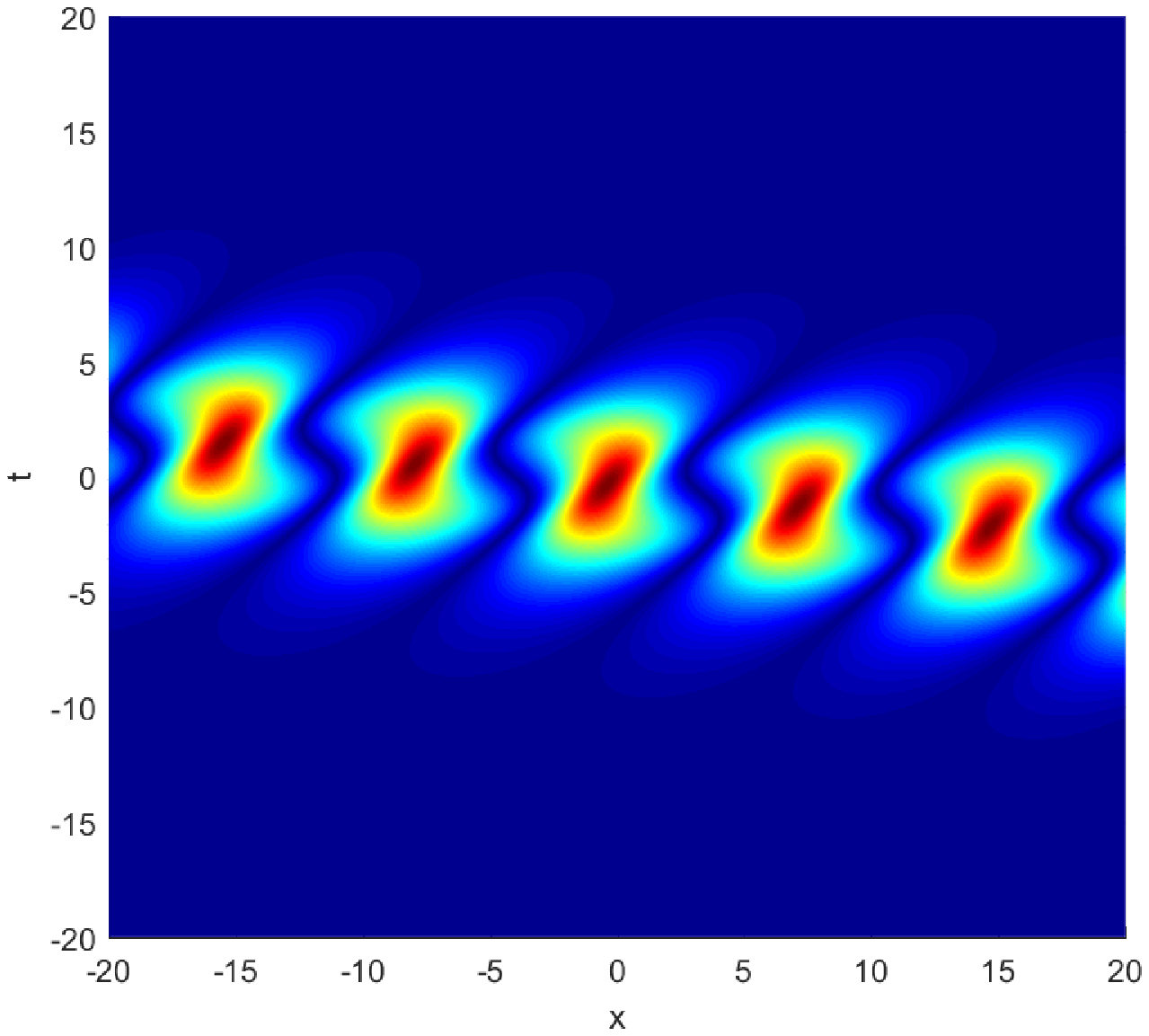}}}
~~~~
{\rotatebox{0}{\includegraphics[width=3.6cm,height=3.0cm,angle=0]{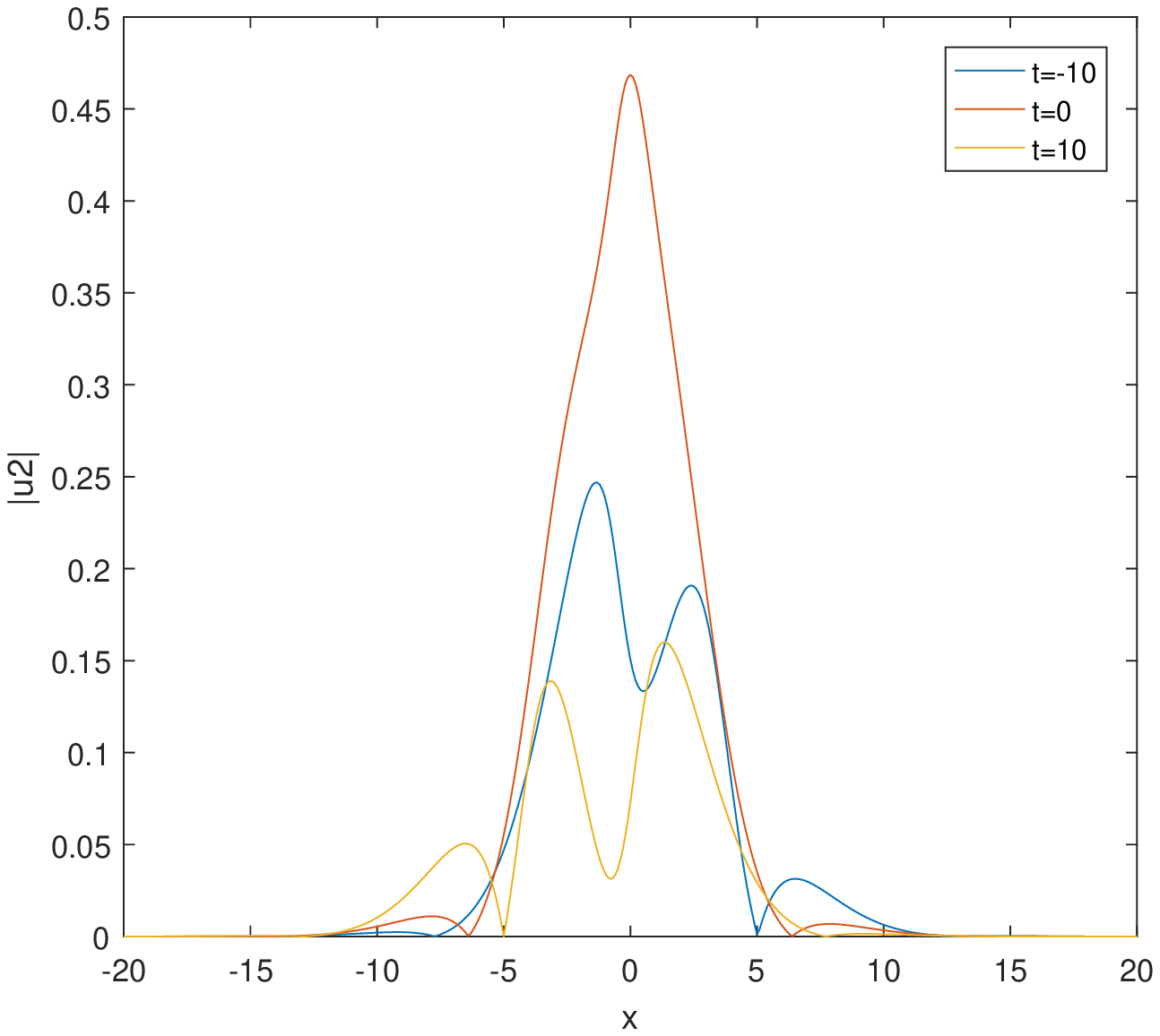}}}

$\ \qquad~~~~~~(\textbf{d})\qquad \ \qquad\qquad\qquad\qquad~(\textbf{e})
\ \qquad\qquad\qquad\qquad\qquad~(\textbf{f})$\\
\noindent
{\rotatebox{0}{\includegraphics[width=3.6cm,height=3.0cm,angle=0]{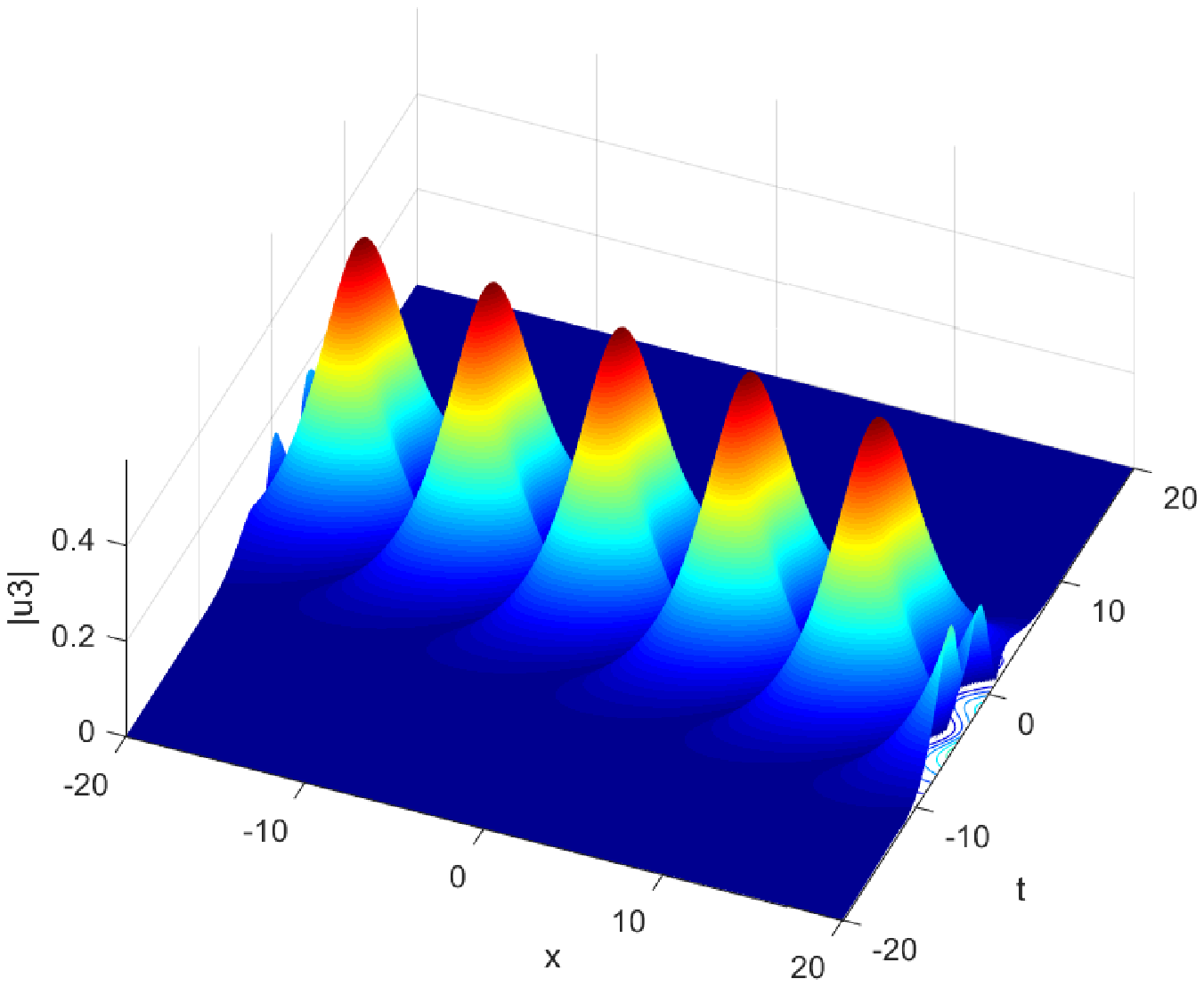}}}
~~~~
{\rotatebox{0}{\includegraphics[width=3.6cm,height=3.0cm,angle=0]{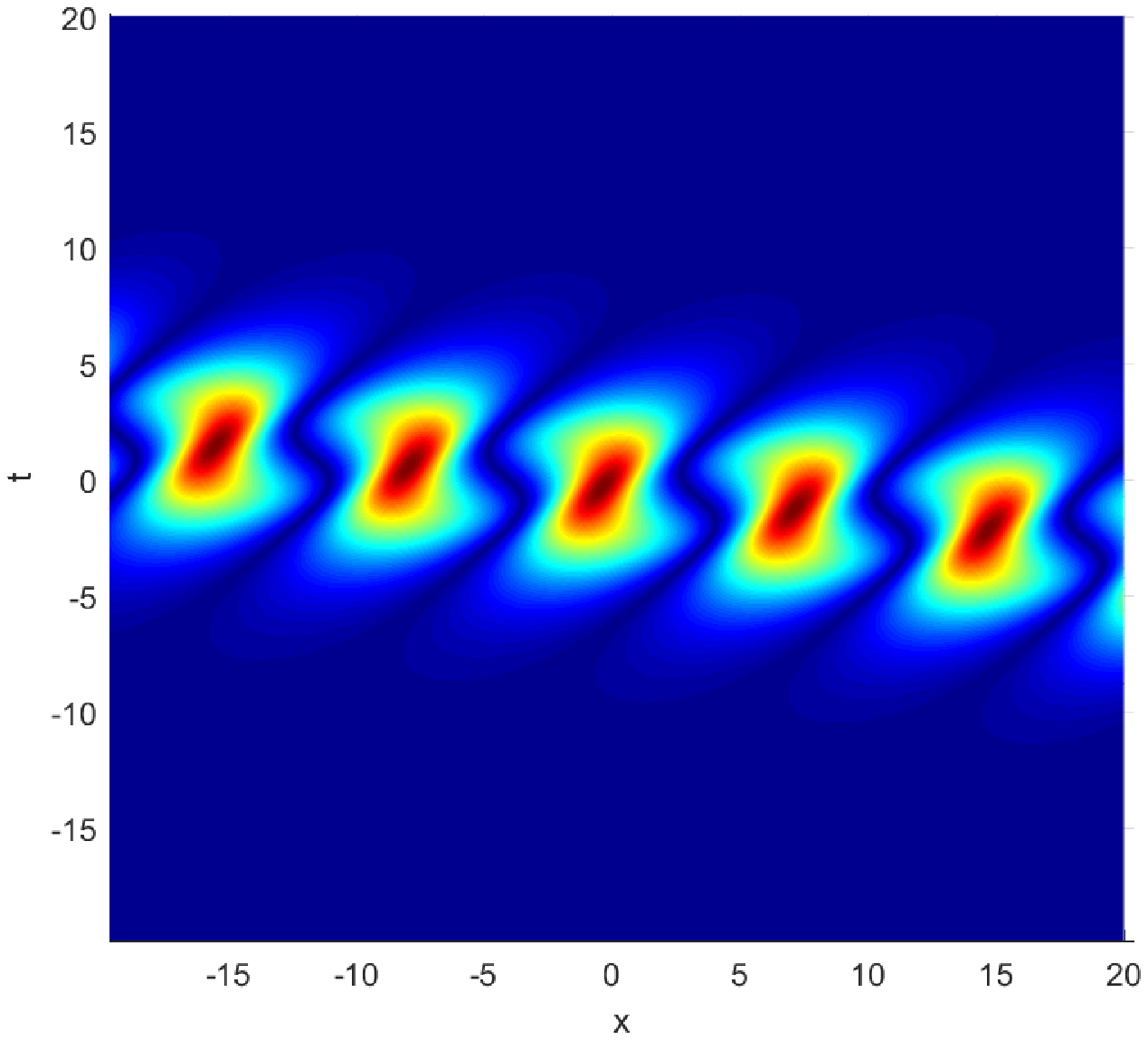}}}
~~~~
{\rotatebox{0}{\includegraphics[width=3.6cm,height=3.0cm,angle=0]{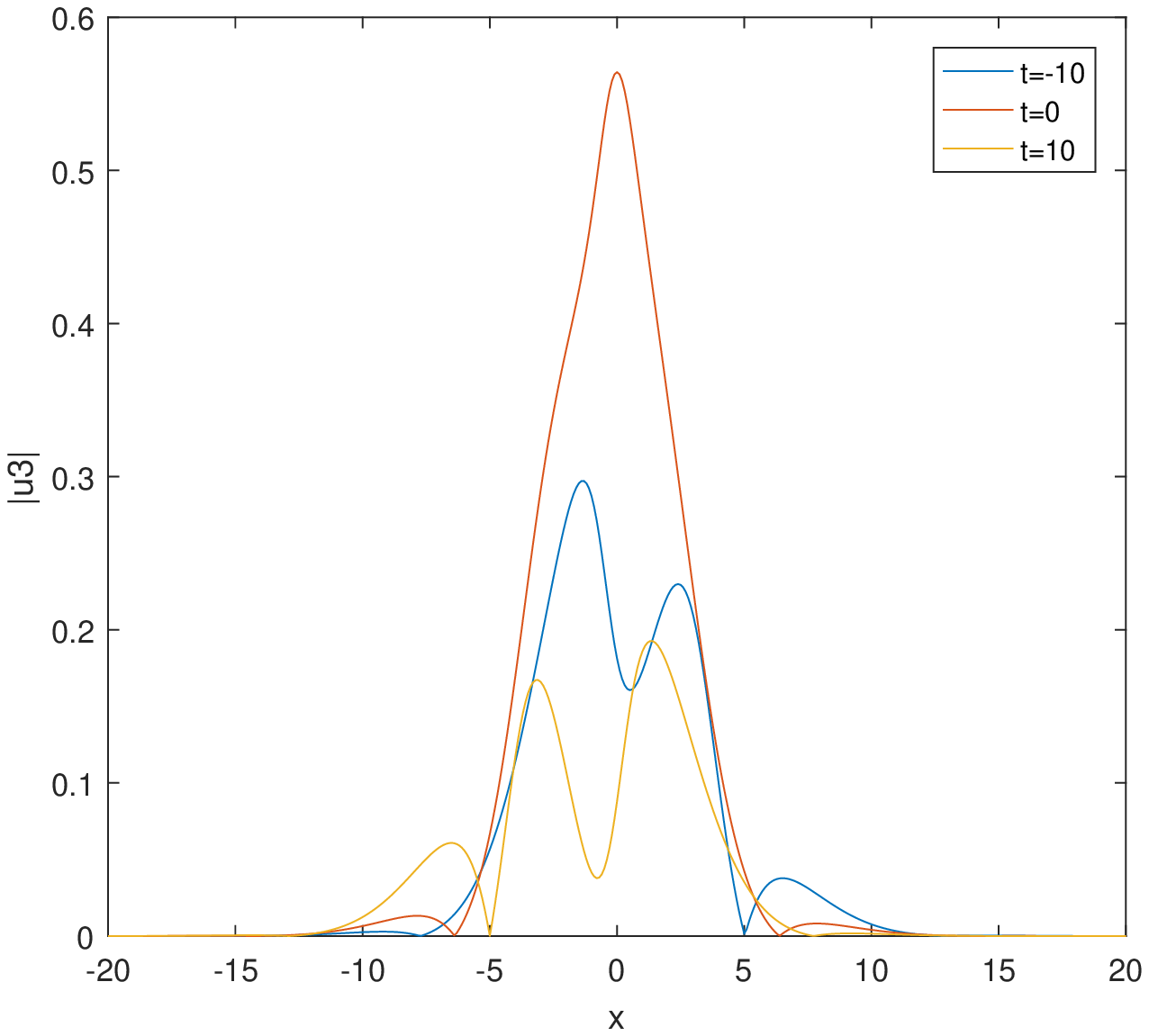}}}

$\ \qquad~~~~~~(\textbf{g})\qquad \ \qquad\qquad\qquad\qquad~(\textbf{h})
\ \qquad\qquad\qquad\qquad\qquad~(\textbf{i})$\\
\noindent { \small \textbf{Figure 7.} (Color online) Single-breather solutions to Eq. \eqref{sf4.1}   with the parameters   $\alpha_{1,1}=0.2+0.3 i$, $\alpha_{3,1}=0.2+0.4 i$, $\alpha_{5,1}=0.2+0.5 i$, $a_1=0.2$,  $b_1=0.3$.
\label{fig4.11}
$\textbf{(a)(d)(g)}$: the structures of the single-breather solutions,
$\textbf{(b)(e)(h)}$: the density plot,
$\textbf{(c)(f)(i)}$: the wave propagation of the single-breather solutions.}\\

For case 2, we usually are interested in the simple situation in $N_2=1$,
 \begin{equation} \label{solu4.2}
\left\{
 \begin{aligned}
   u_1(x,t)&= 2i  \frac{\alpha_{1,1} e^{\theta_1 -\theta_1^{*}}(\zeta_{1}-\zeta_{1}^{*}) }{M_{1,1}^{(1)}e^{\theta_{1}^{*}+\theta_1}+e^{-\theta_{1}^{*}-\theta_1}},\\
   u_2(x,t)&= 2i  \frac{\alpha_{3,1} e^{\theta_1 -\theta_1^{*}}(\zeta_{1}-\zeta_{1}^{*}) }{M_{1,1}^{(1)}e^{\theta_{1}^{*}+\theta_1}+e^{-\theta_{1}^{*}-\theta_1}},  \\
   u_3(x,t)&= 2i  \frac{\alpha_{5,1} e^{\theta_1 -\theta_1^{*}}(\zeta_{1}-\zeta_{1}^{*}) }{M_{1,1}^{(1)}e^{\theta_{1}^{*}+\theta_1}+e^{-\theta_{1}^{*}-\theta_1}},\\
 \end{aligned}
\right.
 \end{equation}
where $\zeta_{1}= i b_1(b_1>0)$, and $\theta_1= -i(\zeta_1 x +4 \zeta_1^3 t)$.  If we set $\alpha_{2,1}=\alpha_{1,1}^{*}$, $\alpha_{4,1}=\alpha_{3,1}^{*}$, and $\alpha_{6,1}=\alpha_{5,1}^{*}$, then Eq. \eqref{solu4.2} can be converted to the following form
\begin{equation}\label{sf4.2}
\left\{
   \begin{aligned}
          u_1(x,t)= \frac{-\sqrt{2}\alpha_{1,1}b_1}{\sqrt{|\alpha_{1,1}|^2+|\alpha_{3,1}|^2+|\alpha_{5,1}|^2}} \mathrm{sech} \left( 2 b_1 x -8 b_1^3 t + \ln \sqrt{2(|\alpha_{1,1}|^2+|\alpha_{3,1}|^2+|\alpha_{5,1}|^2)}   \right),  \\
          u_2(x,t)= \frac{-\sqrt{2}\alpha_{3,1}b_1}{\sqrt{|\alpha_{1,1}|^2+|\alpha_{3,1}|^2+|\alpha_{5,1}|^2}} \mathrm{sech} \left(2 b_1 x -8 b_1^3 t + \ln \sqrt{2(|\alpha_{1,1}|^2+|\alpha_{3,1}|^2+|\alpha_{5,1}|^2)}\right),  \\
          u_3(x,t)= \frac{-\sqrt{2}\alpha_{5,1}b_1}{\sqrt{|\alpha_{1,1}|^2+|\alpha_{3,1}|^2+|\alpha_{5,1}|^2}} \mathrm{sech} \left(2 b_1 x -8 b_1^3 t + \ln \sqrt{2(|\alpha_{1,1}|^2+|\alpha_{3,1}|^2+|\alpha_{5,1}|^2)}  \right).
   \end{aligned}
\right.
\end{equation}

In Fig. 8,  we display the localized structures and dynamic behaviors of one-bell solutions vividly.

\noindent
{\rotatebox{0}{\includegraphics[width=3.6cm,height=3.0cm,angle=0]{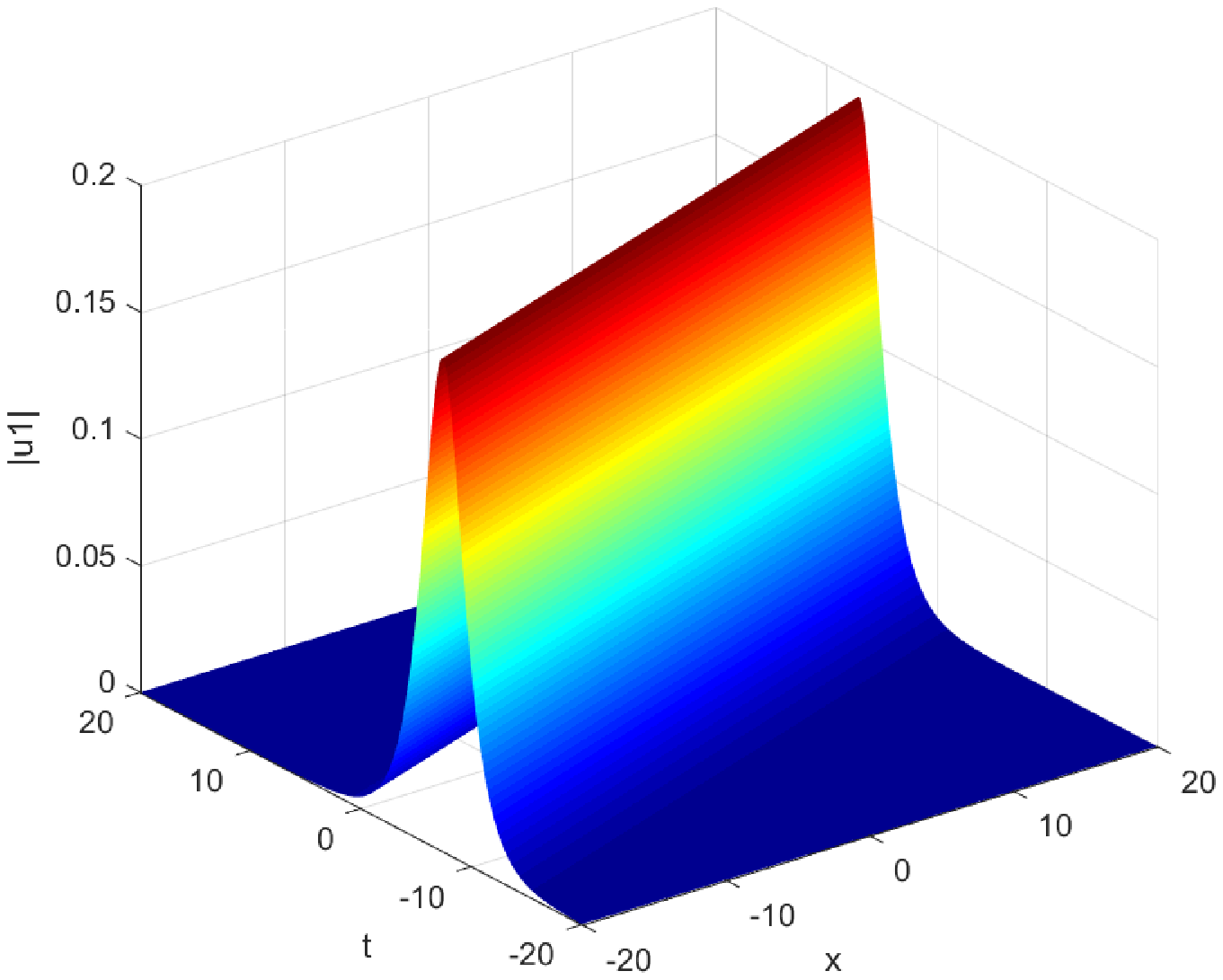}}}
~~~~
{\rotatebox{0}{\includegraphics[width=3.6cm,height=3.0cm,angle=0]{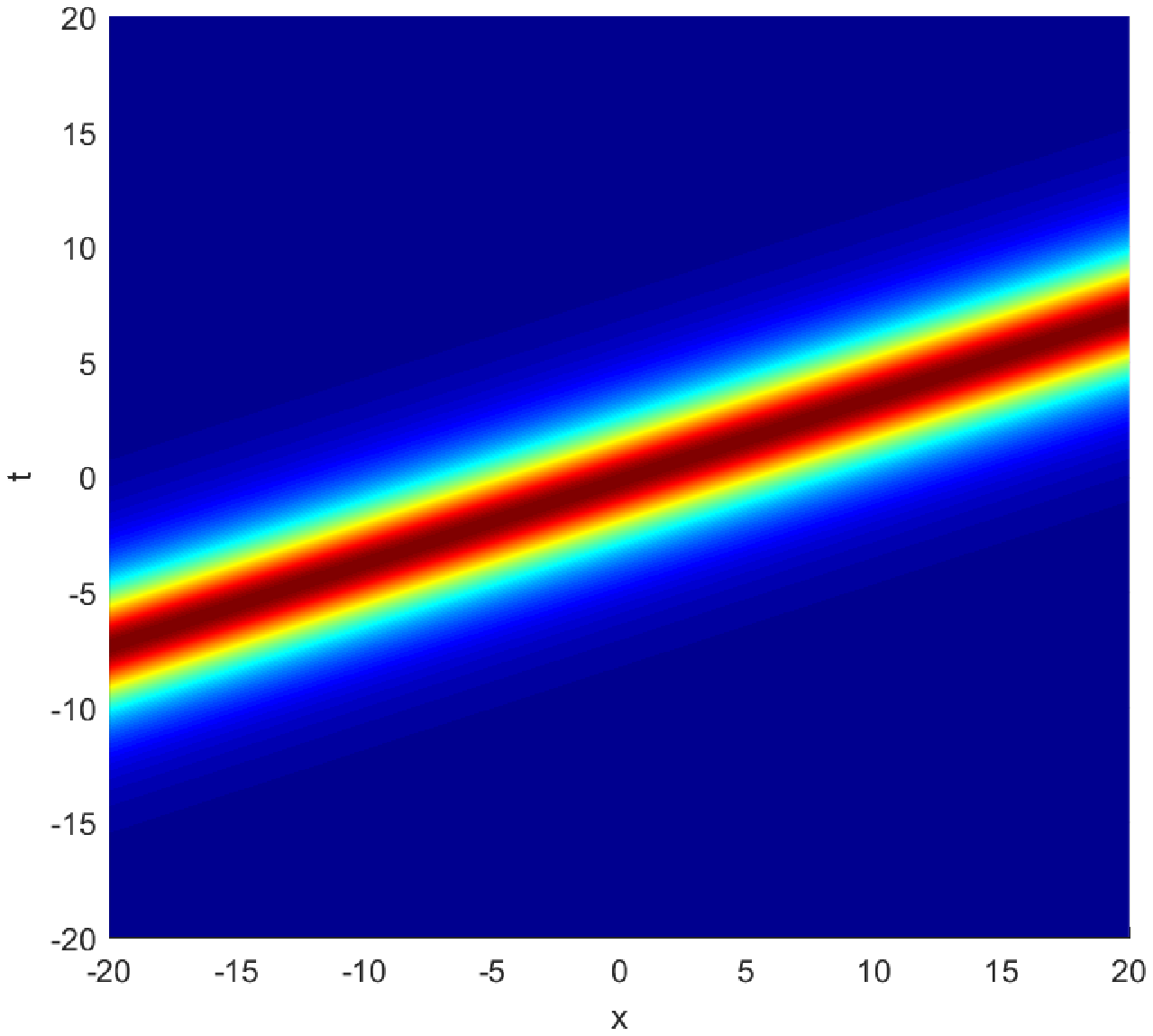}}}
~~~~
{\rotatebox{0}{\includegraphics[width=3.6cm,height=3.0cm,angle=0]{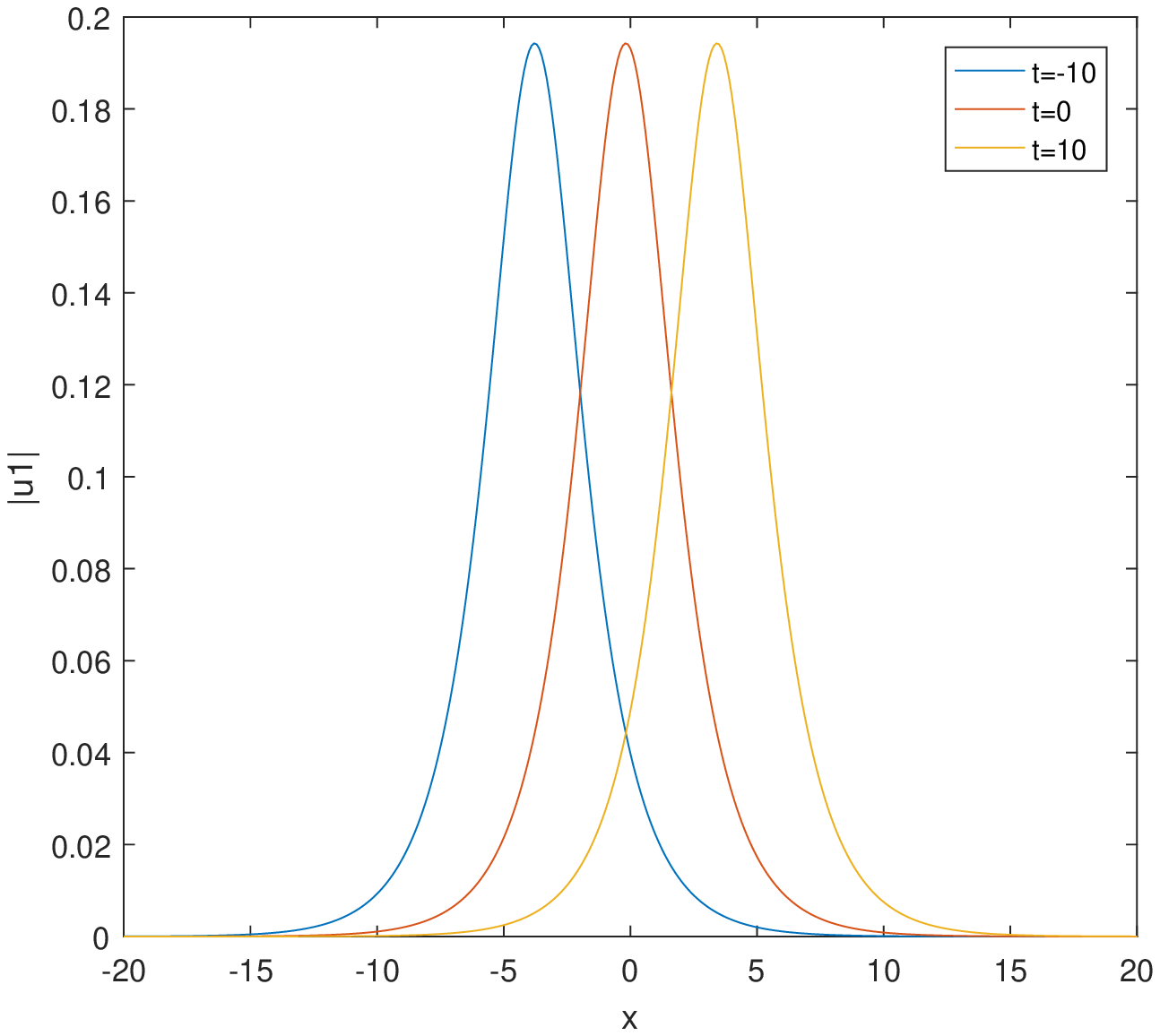}}}

$\ \qquad~~~~~~(\textbf{a})\qquad \ \qquad\qquad\qquad\qquad~(\textbf{b})
\ \qquad\qquad\qquad\qquad\qquad~(\textbf{c})$\\
\noindent
{\rotatebox{0}{\includegraphics[width=3.6cm,height=3.0cm,angle=0]{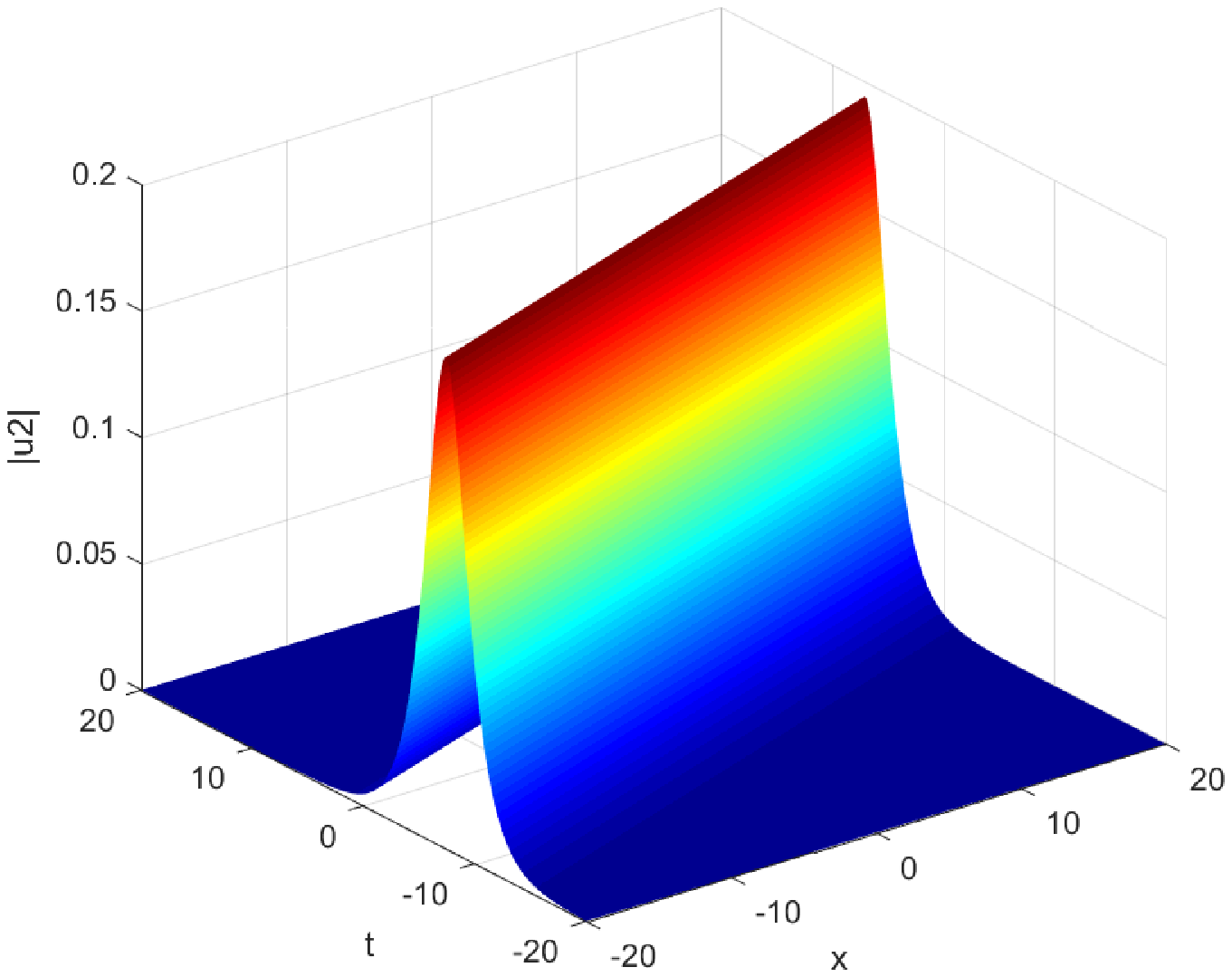}}}
~~~~
{\rotatebox{0}{\includegraphics[width=3.6cm,height=3.0cm,angle=0]{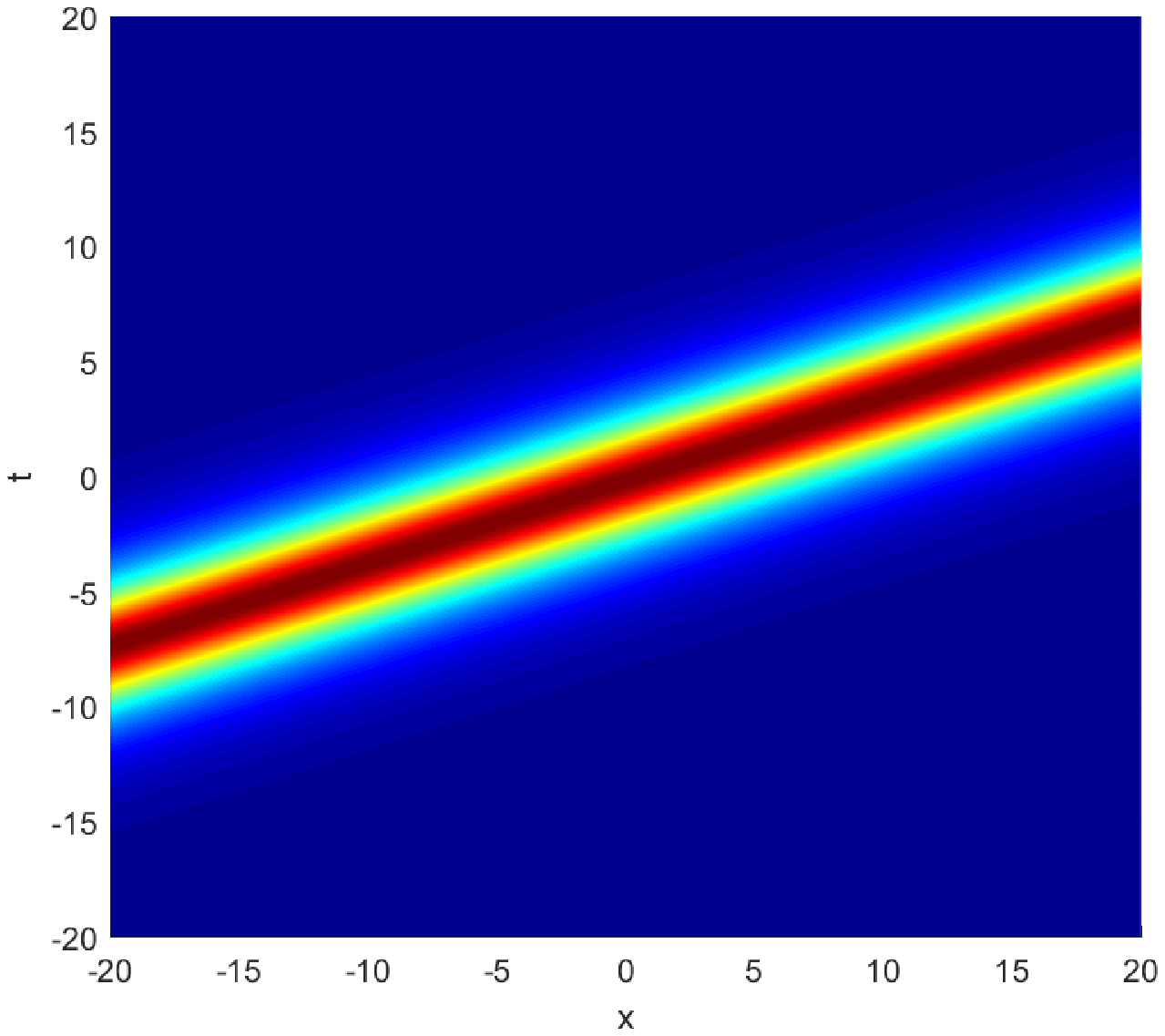}}}
~~~~
{\rotatebox{0}{\includegraphics[width=3.6cm,height=3.0cm,angle=0]{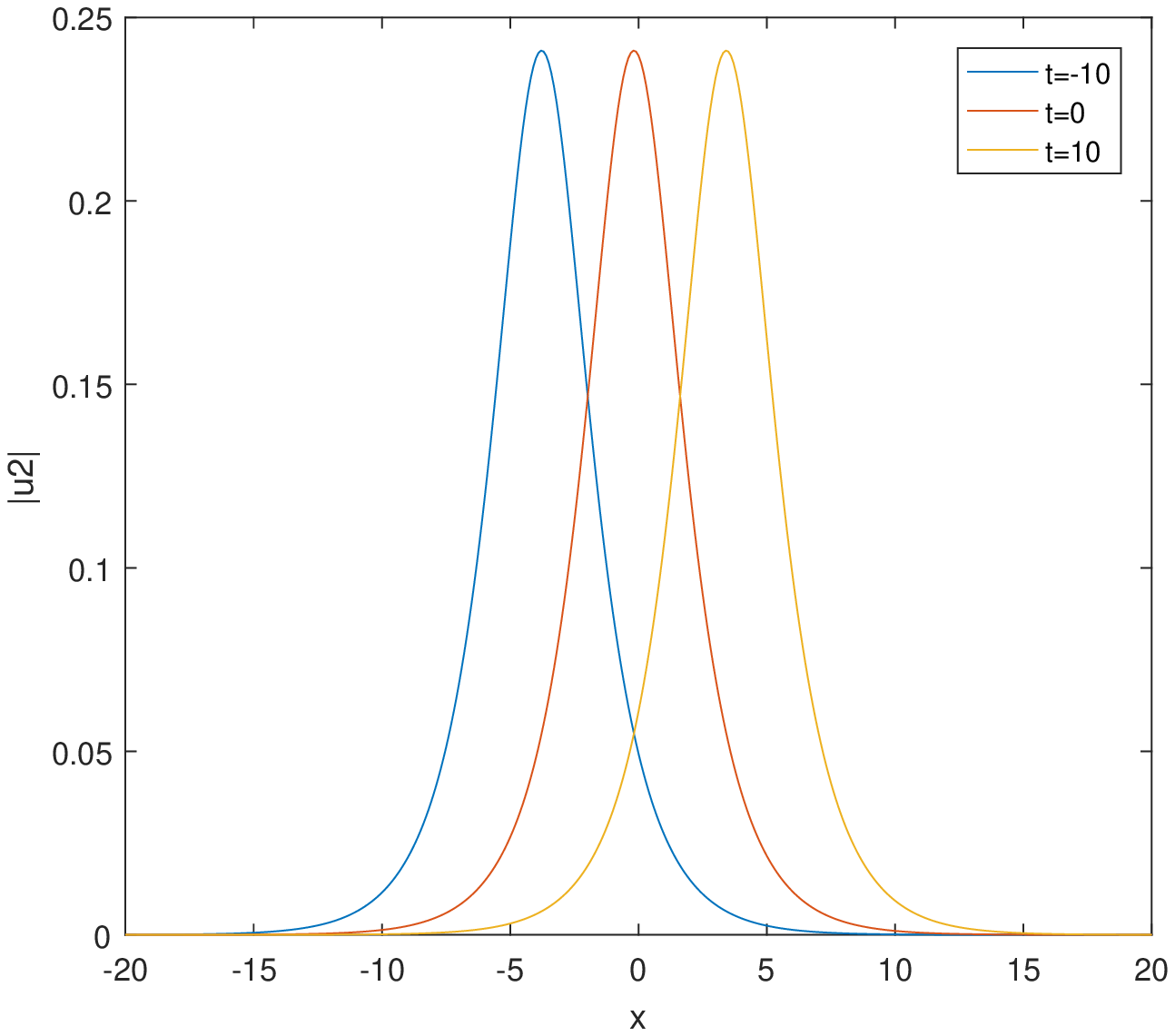}}}

$\ \qquad~~~~~~(\textbf{d})\qquad \ \qquad\qquad\qquad\qquad~(\textbf{e})
\ \qquad\qquad\qquad\qquad\qquad~(\textbf{f})$\\
\noindent
{\rotatebox{0}{\includegraphics[width=3.6cm,height=3.0cm,angle=0]{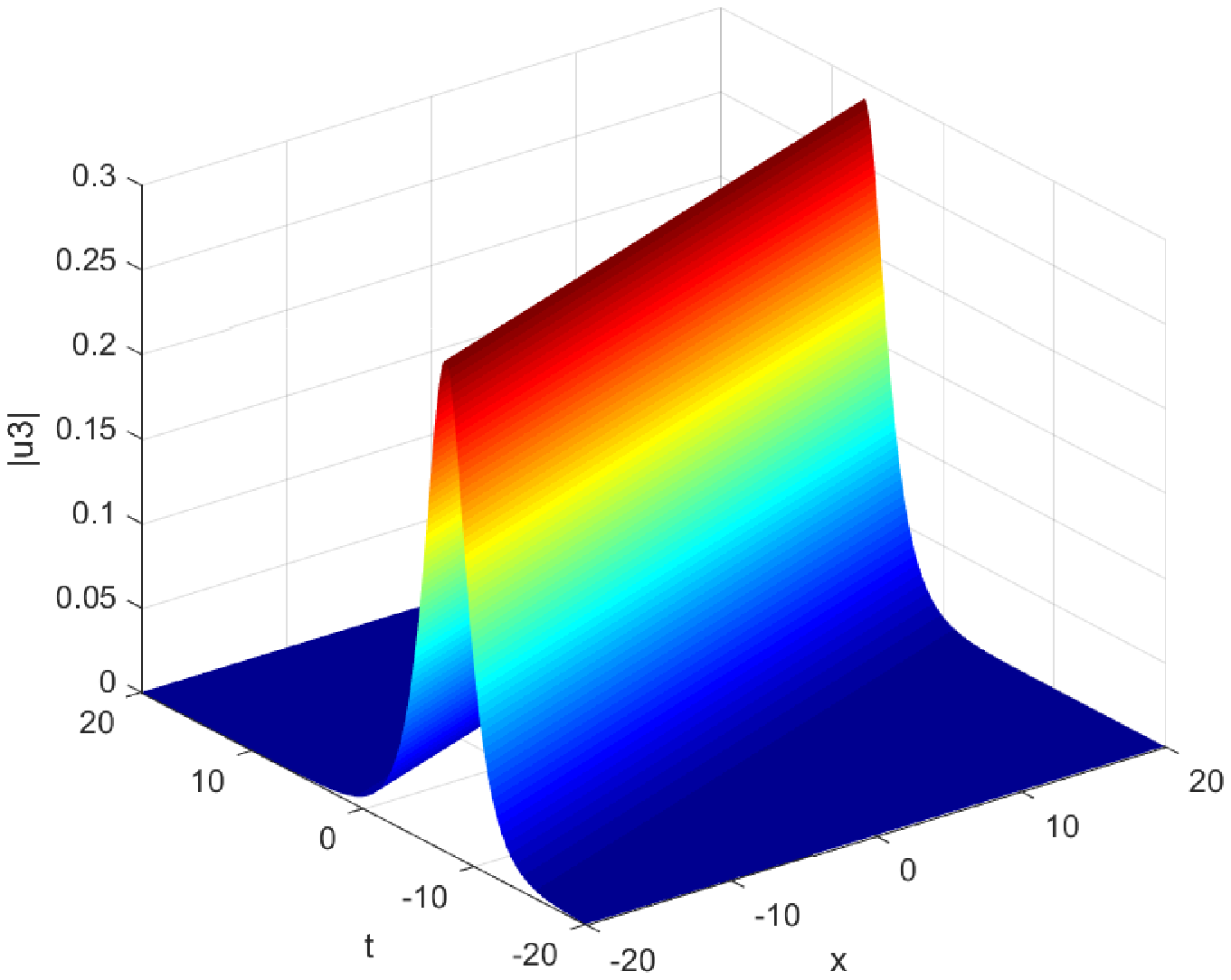}}}
~~~~
{\rotatebox{0}{\includegraphics[width=3.6cm,height=3.0cm,angle=0]{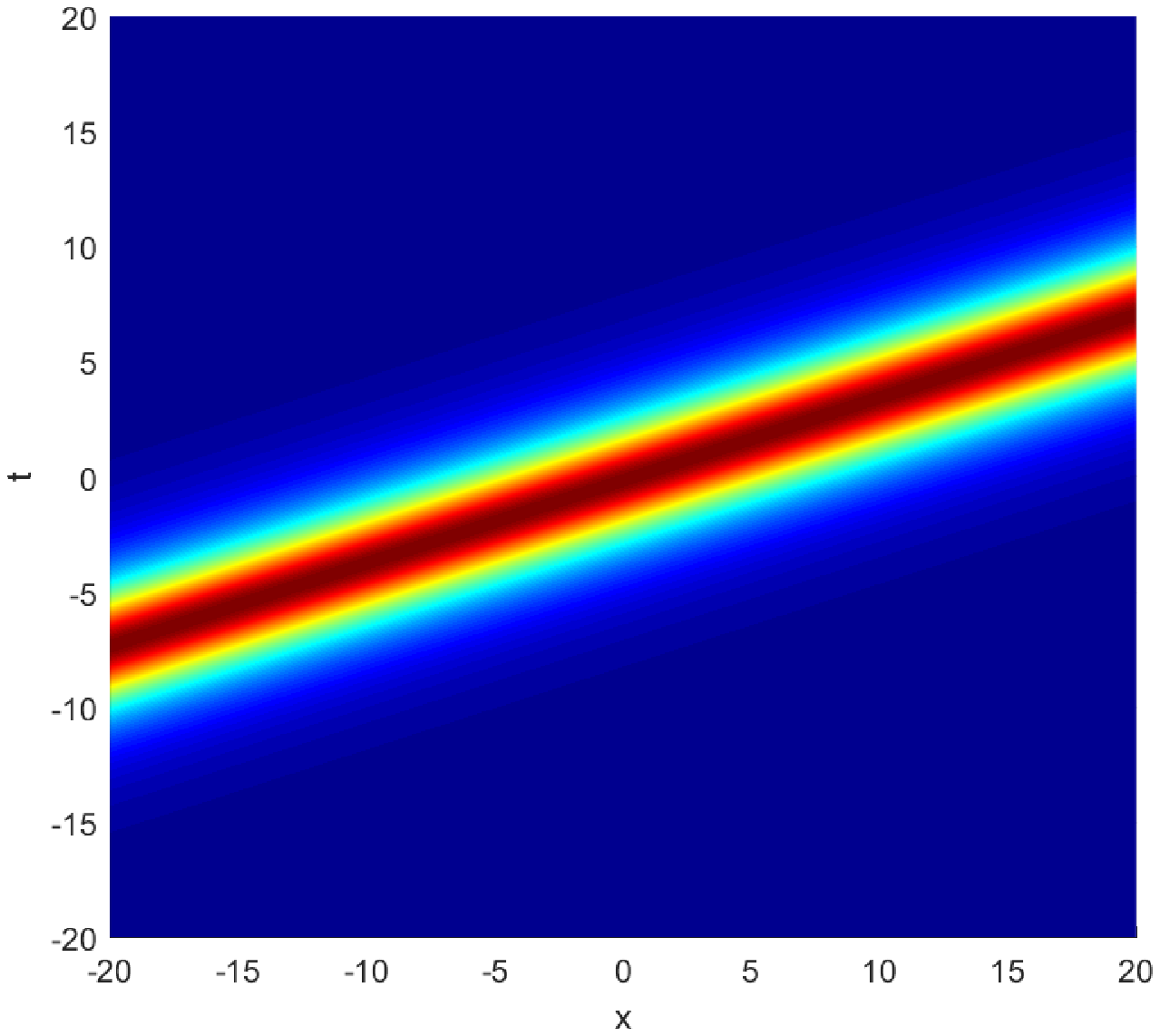}}}
~~~~
{\rotatebox{0}{\includegraphics[width=3.6cm,height=3.0cm,angle=0]{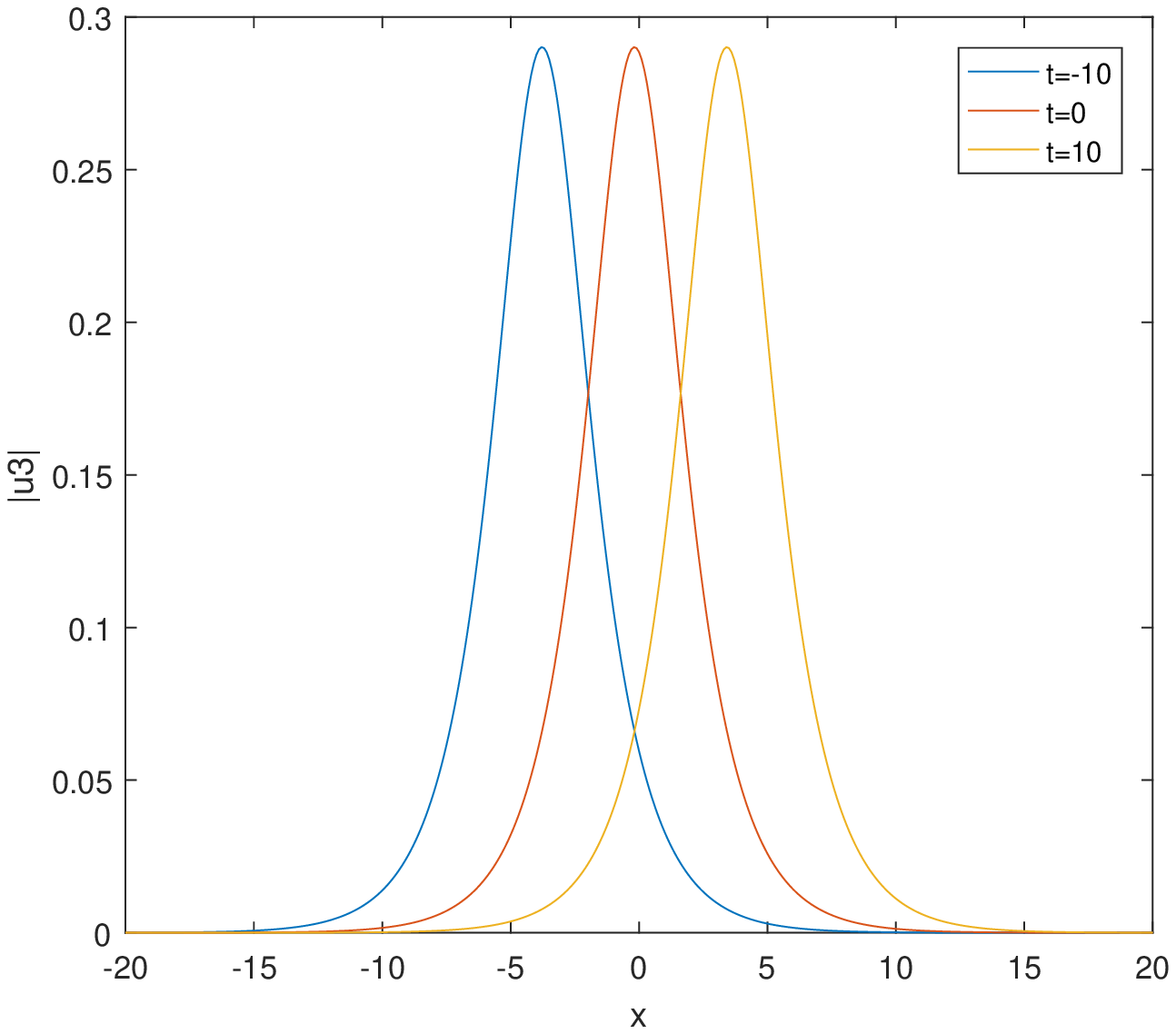}}}

$\ \qquad~~~~~~(\textbf{g})\qquad \ \qquad\qquad\qquad\qquad~(\textbf{h})
\ \qquad\qquad\qquad\qquad\qquad~(\textbf{i})$\\
\noindent { \small \textbf{Figure 8.} (Color online) One-bell solutions  to Eq. \eqref{sf4.2}  with the parameters   $\alpha_{1,1}=0.2+0.3 i$, $\alpha_{3,1}=0.2+0.4 i$, $\alpha_{5,1}=0.2+0.5 i$, $a_1=0$,  $b_1=0.3$.
\label{fig4.21}
$\textbf{(a)(d)(g)}$: the structures of the one-bell solutions,
$\textbf{(b)(e)(h)}$: the density plot,
$\textbf{(c)(f)(i)}$: the wave propagation of the one-bell solutions.} \\

When  taking $N_2=2$,  two-bell soliton solutions are as follows
 \begin{equation} \label{sf4.3}
\left\{
 \begin{aligned}
   u_1(x,t)&= 2i \alpha_{1,1}e^{\theta_{1}-\theta_{1}^{*}}(M^{-1})_{1,1}+2i \alpha_{1,1}e^{\theta_{1}-\theta_{2}^{*}}(M^{-1})_{1,2}\\
        & +2i \alpha_{1,2}e^{\theta_{2}-\theta_{1}^{*}}(M^{-1})_{2,1}+2i \alpha_{1,2}e^{\theta_{2}-\theta_{2}^{*}}(M^{-1})_{2,2},  \\
   u_2(x,t)&= 2i \alpha_{3,1}e^{\theta_{1}-\theta_{1}^{*}}(M^{-1})_{1,1}+2i \alpha_{3,1}e^{\theta_{1}-\theta_{2}^{*}}(M^{-1})_{1,2}\\
       &   +2i \alpha_{3,2}e^{\theta_{2}-\theta_{1}^{*}}(M^{-1})_{2,1}+2i \alpha_{3,2}e^{\theta_{2}-\theta_{2}^{*}}(M^{-1})_{2,2},  \\
   u_3(x,t)&= 2i \alpha_{5,1}e^{\theta_{1}-\theta_{1}^{*}}(M^{-1})_{1,1}+2i \alpha_{5,1}e^{\theta_{1}-\theta_{2}^{*}}(M^{-1})_{1,2} \\
       & +2i \alpha_{5,2}e^{\theta_{2}-\theta_{1}^{*}}(M^{-1})_{2,1}+2i \alpha_{5,2}e^{\theta_{2}-\theta_{2}^{*}}(M^{-1})_{2,2}.  \\
 \end{aligned}
\right.
 \end{equation}

\noindent
{\rotatebox{0}{\includegraphics[width=3.6cm,height=3.0cm,angle=0]{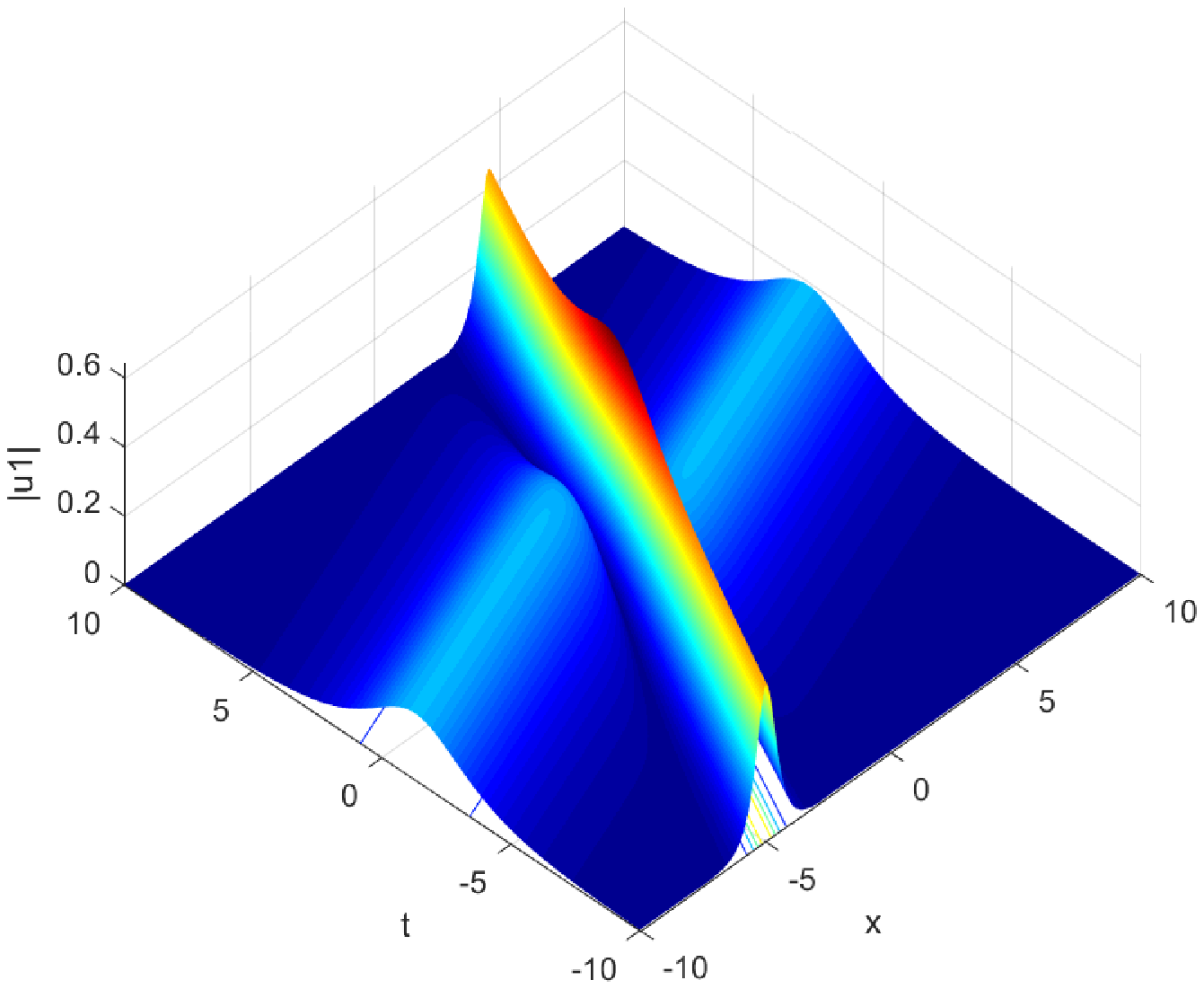}}}
~~~~
{\rotatebox{0}{\includegraphics[width=3.6cm,height=3.0cm,angle=0]{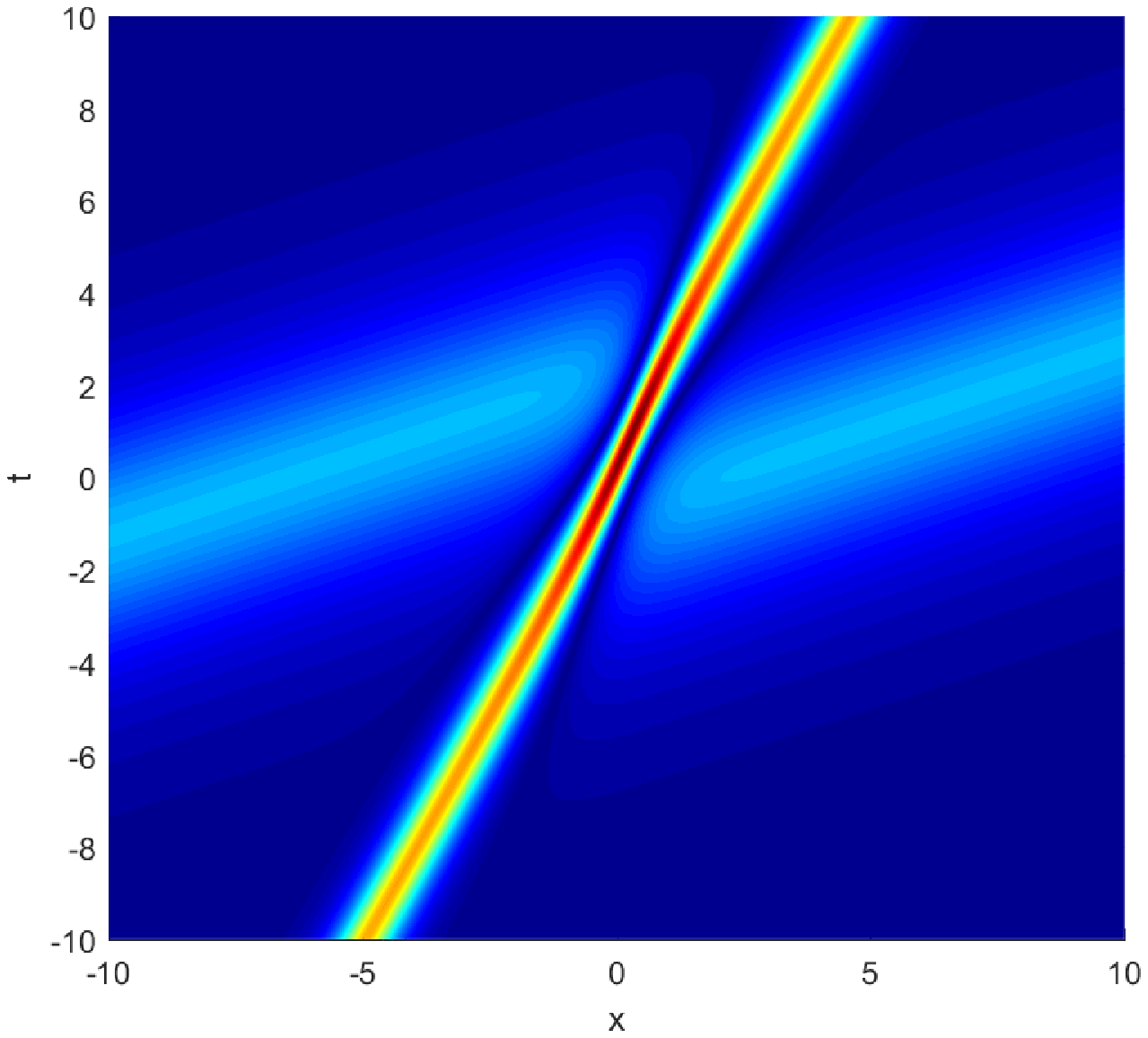}}}
~~~~
{\rotatebox{0}{\includegraphics[width=3.6cm,height=3.0cm,angle=0]{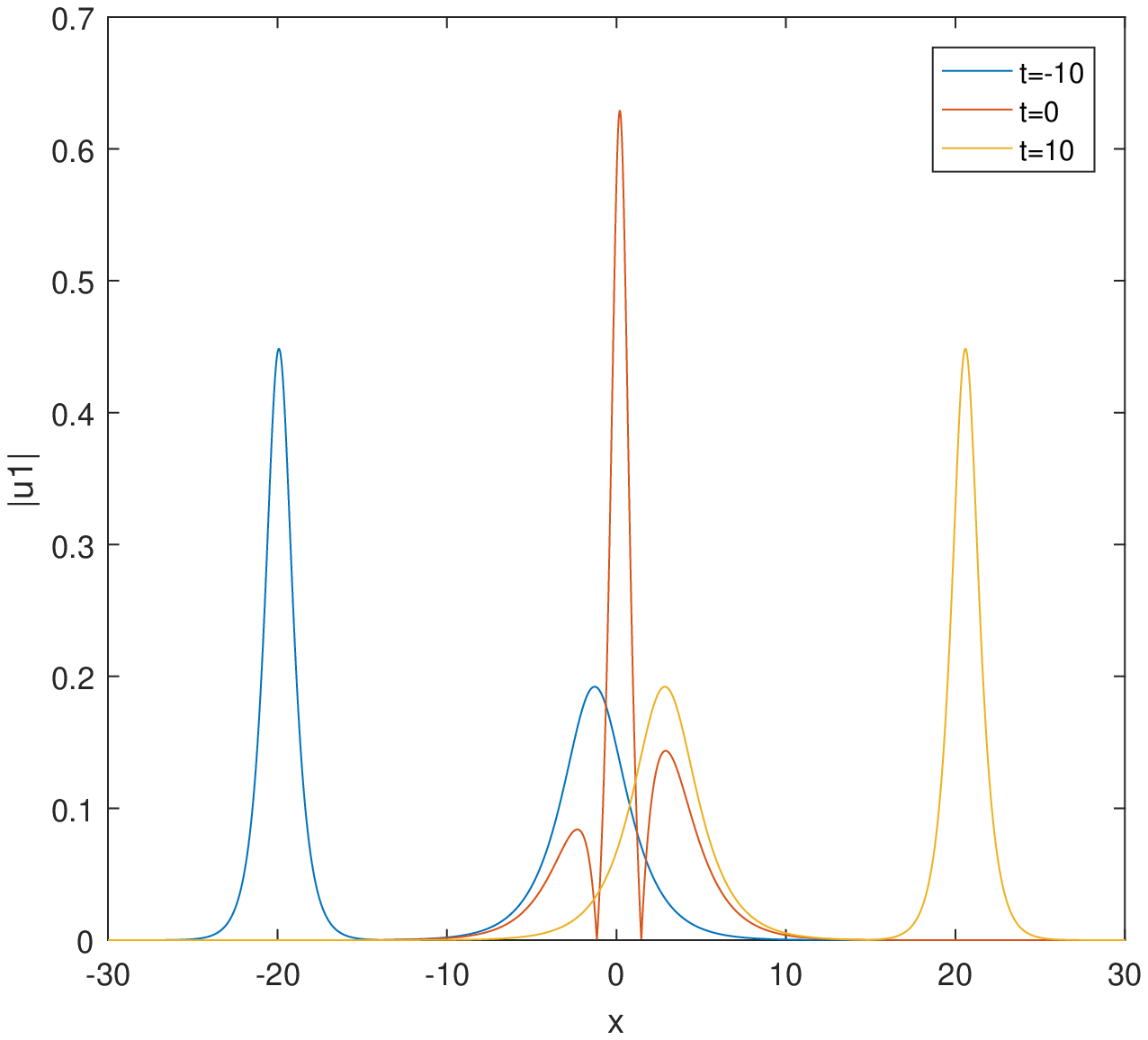}}}

$\ \qquad~~~~~~(\textbf{a})\qquad \ \qquad\qquad\qquad\qquad~(\textbf{b})
\ \qquad\qquad\qquad\qquad\qquad~(\textbf{c})$\\
\noindent
{\rotatebox{0}{\includegraphics[width=3.6cm,height=3.0cm,angle=0]{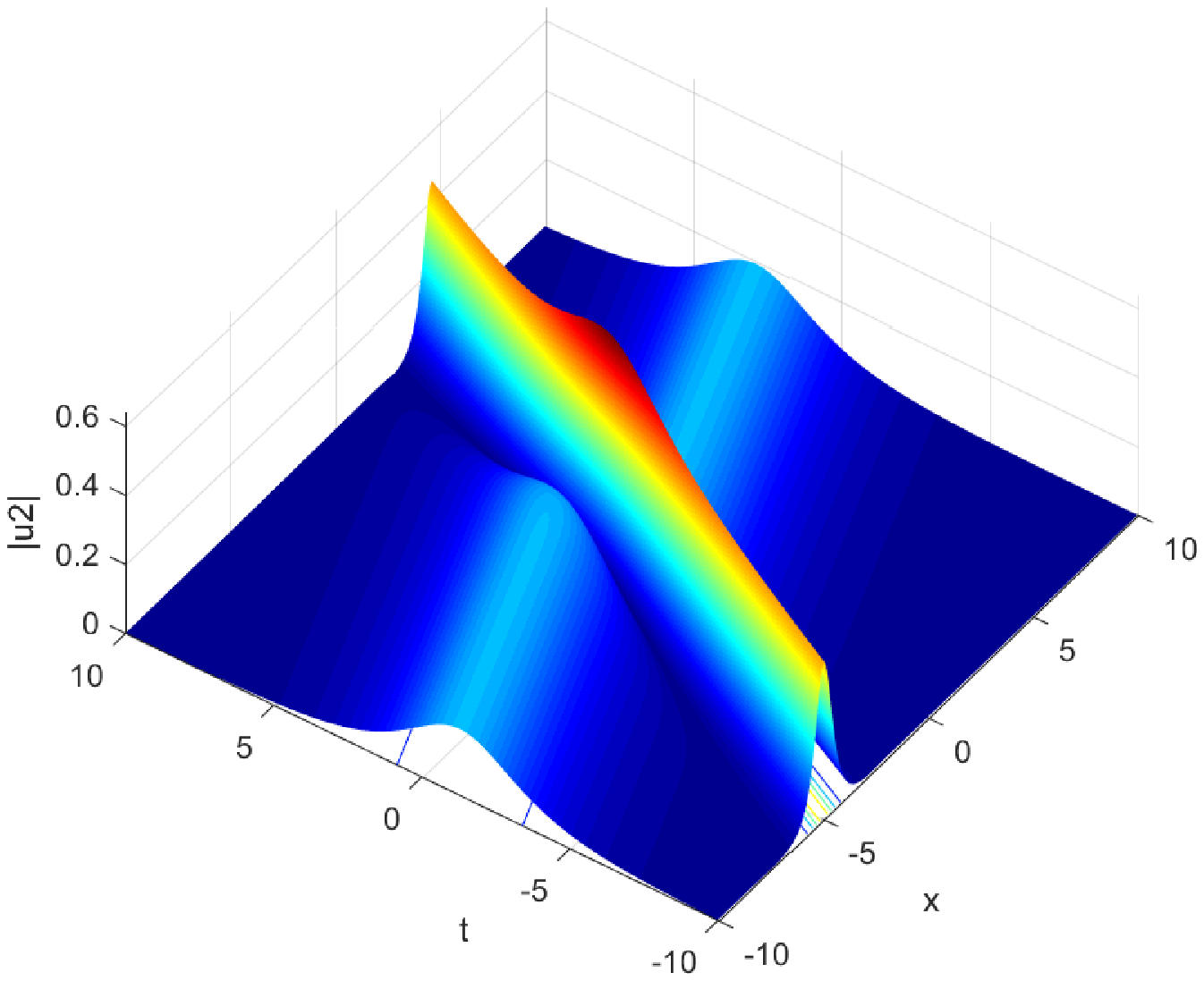}}}
~~~~
{\rotatebox{0}{\includegraphics[width=3.6cm,height=3.0cm,angle=0]{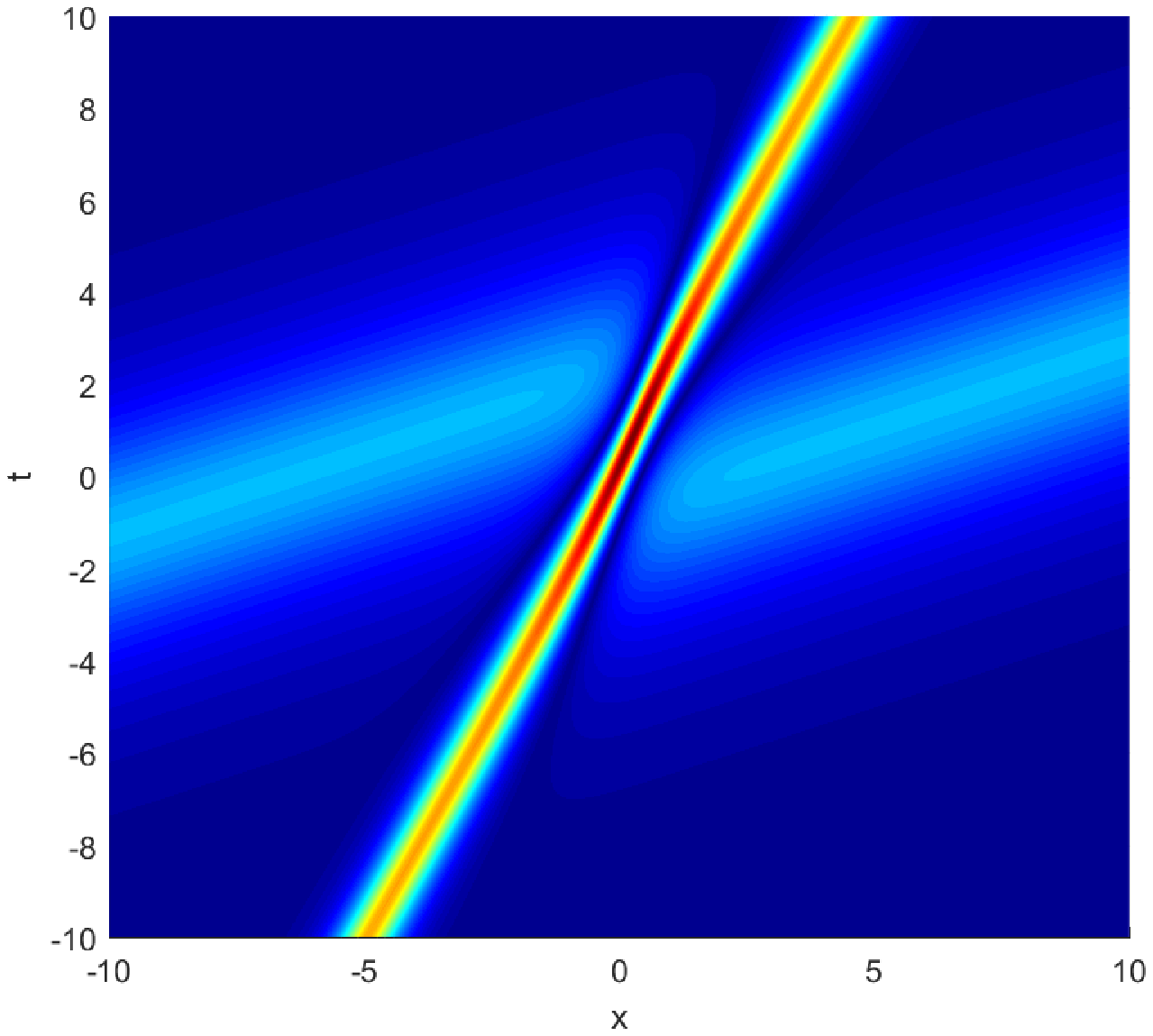}}}
~~~~
{\rotatebox{0}{\includegraphics[width=3.6cm,height=3.0cm,angle=0]{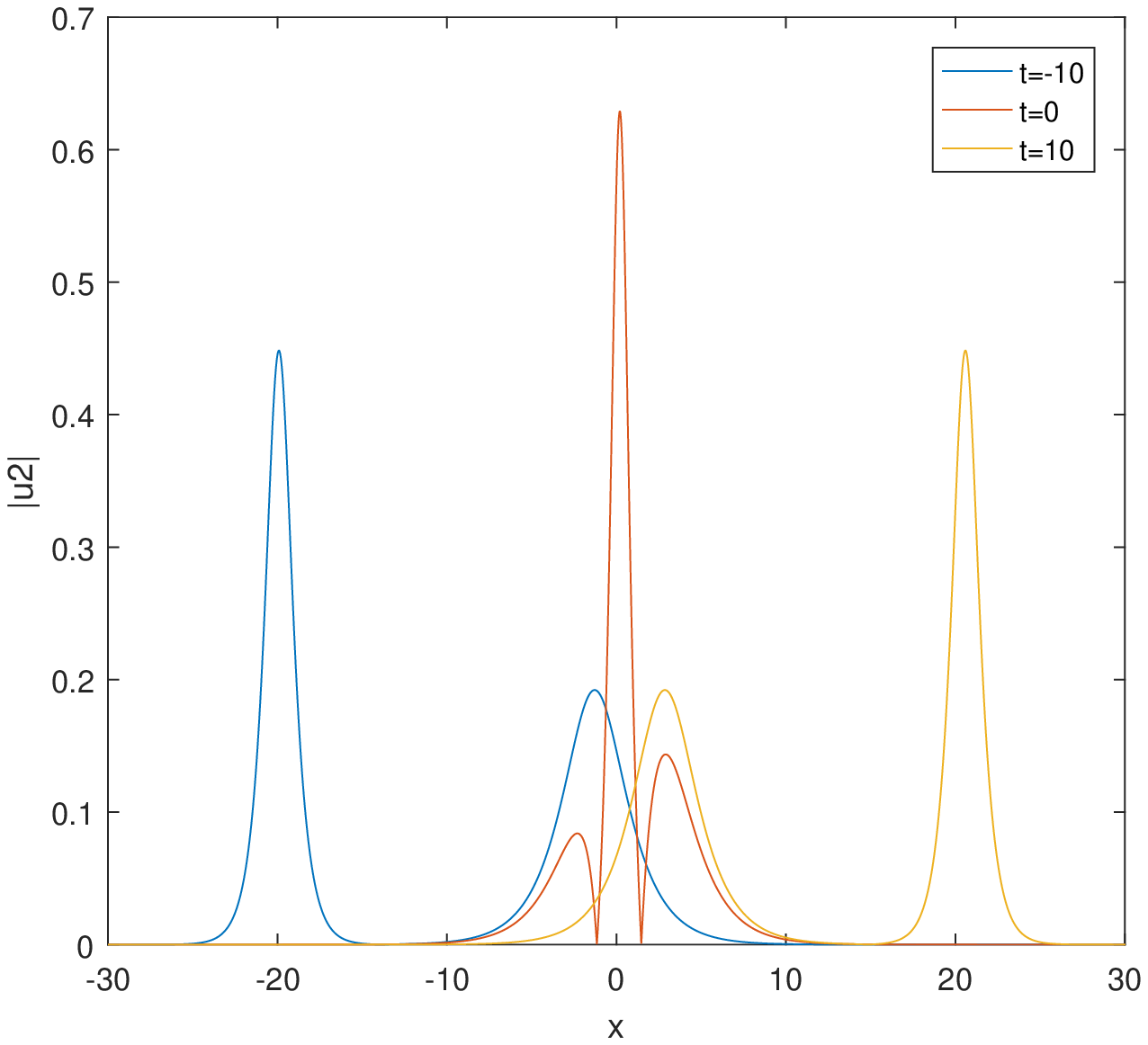}}}

$\ \qquad~~~~~~(\textbf{d})\qquad \ \qquad\qquad\qquad\qquad~(\textbf{e})
\ \qquad\qquad\qquad\qquad\qquad~(\textbf{f})$\\
\noindent
{\rotatebox{0}{\includegraphics[width=3.6cm,height=3.0cm,angle=0]{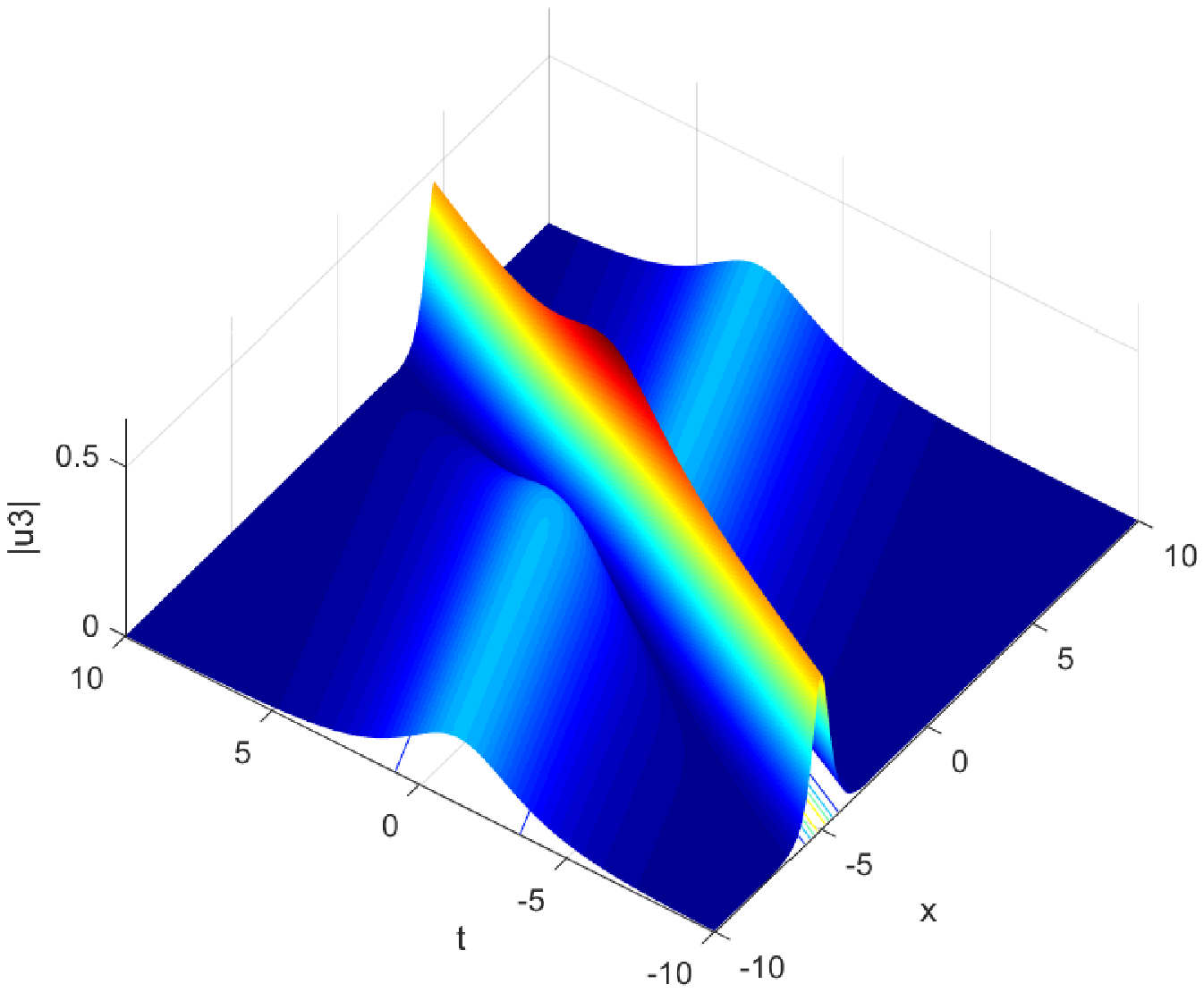}}}
~~~~
{\rotatebox{0}{\includegraphics[width=3.6cm,height=3.0cm,angle=0]{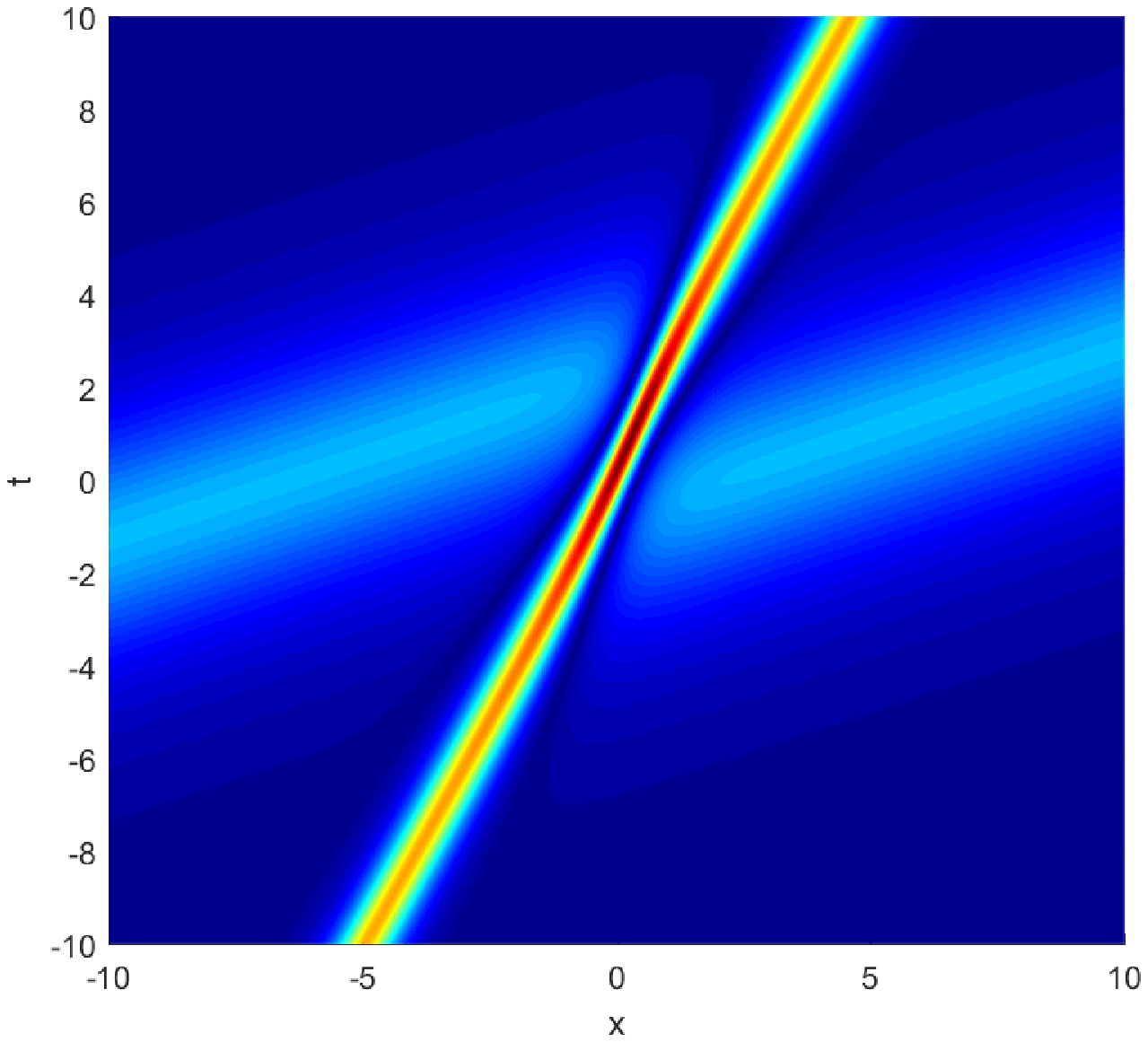}}}
~~~~
{\rotatebox{0}{\includegraphics[width=3.6cm,height=3.0cm,angle=0]{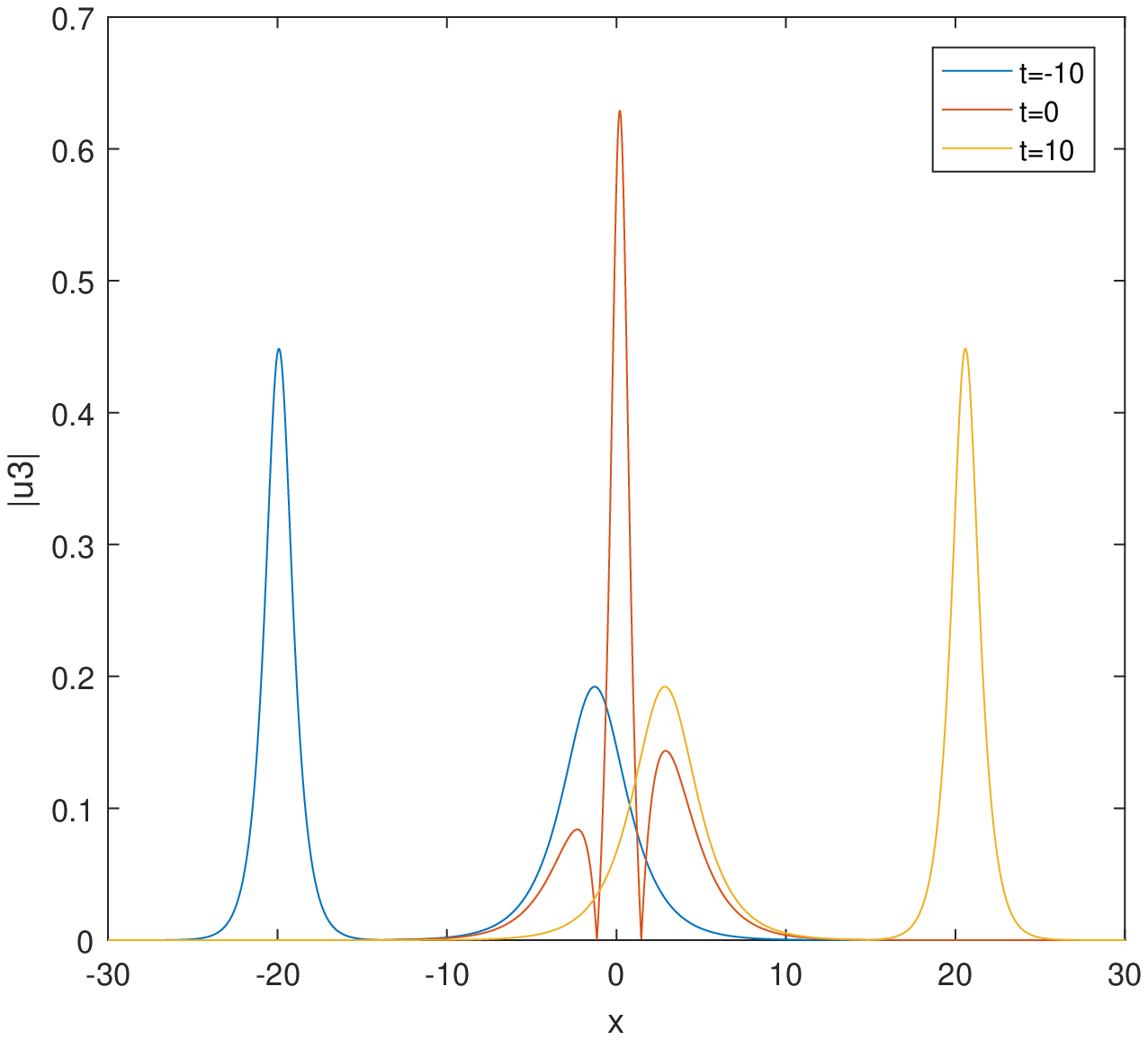}}}

$\ \qquad~~~~~~(\textbf{g})\qquad \ \qquad\qquad\qquad\qquad~(\textbf{h})
\ \qquad\qquad\qquad\qquad\qquad~(\textbf{i})$\\
\noindent { \small \textbf{Figure 9.} (Color online) Two-bell solutions  to  Eq. \eqref{sf4.3} with the parameters  $\alpha_{1,1}=\alpha_{1,2}= \alpha_{3,1}= \alpha_{3,2}= \alpha_{5,1}= \alpha_{5,2}=0.2$, $\alpha_{2,1}=\alpha_{2,2}= \alpha_{4,1}= \alpha_{4,2}= \alpha_{6,1}= \alpha_{6,2}=0.3$, $b_1=0.3$, $b_2=0.7$.
\label{fig4.31}
$\textbf{(a)(d)(g)}$: the structures of the two-bell solutions,
$\textbf{(b)(e)(h)}$: the density plot,
$\textbf{(c)(f)(i)}$: the wave propagation of the two-bell solutions.} \\

In addition, the  localized structures and  dynamic propagation behaviors of two-bell soliton solutions are displayed in Fig. 9, from which  we can see that after two solitons collide with each other, their propagation direction has been misaligned, but the energies of them are  basically  unchanged.

\subsection{Summary}
    From these special cases, we can see that the  exact  solutions of  mmKdV equation we obtain are suitable for matrices of any order  whether $ p = q$, $p> q$ or $ p <q$. More importantly, in the process of  solving specific equations, the special properties of the potential matrices  will help us further classify soliton solutions to find other  meaningful solutions such as  breather-type solutions and bell-type solutions,
which can help us understand the meaning of  solutions  more comprehensively.

\section{Conclusions}
     As we all know, many scholars have done a lot of works on the   mKdV-type  equation or equations, for example,  Zhang et al. have studied two-component cmKdV equations by Darboux transformation \cite{zhang2008lax},   and   Liu et al.  have  discussed initial--boundary problems for the vector modified Korteweg--de Vries(vmKdV) equation via Fokas unified transform method \cite{liu2016initial}. However,  many scholars pay less attention to mmKdV  equation with a $p\times q$ complex-valued  potential matrix function in recent years. Actually, vmKdV equations, cmKdV equations and so on are only the sub cases of the  mmKdV equation. In this work, we study the exact solutions with their propagation behaviors    for  the mmKdV equation which can solve  the solutions of not only previous vmKdV and cmKdV equations, but also other generalized mKdV-type equation  or equations.  The main propose of this work is to study the multi-soliton solutions of mmKdV equation with the aid of RH approach. Firstly, we performed the spectral analysis of Lax pair in order to structure the RH problem. Secondly, we obtain general solution form of  mmKdV equation. Next, we choose certain special cases of potential matrix $Q$ including $2\times 2$, $4\times 4$, $1\times 3$ and $6\times 1$ form.  By analyzing  the special characteristics of potential  matrices,  some interested solutions can be derived on the basis of original exact solutions. At the same time, the localized structures and  dynamic  behaviors of above interesting solutions are displayed vividly.

\section*{Acknowledgements}

This work was supported by    the Postgraduate Research and Practice of Educational Reform for Graduate students in CUMT under Grant No. 2019YJSJG046, the Natural Science Foundation of Jiangsu Province under Grant No. BK20181351, the Six Talent Peaks Project in Jiangsu Province under Grant No. JY-059, the Qinglan Project of Jiangsu Province of China, the National Natural Science Foundation of China under Grant No. 11975306, the Fundamental Research Fund for the Central Universities under the Grant Nos. 2019ZDPY07 and 2019QNA35, and the General Financial Grant from the China Postdoctoral Science Foundation under Grant Nos. 2015M570498 and 2017T100413.



\begin{thebibliography}{99}
\bibitem{hirota1980direct}
R.~Hirota, {Direct methods in soliton theory}, in: Solitons, Springer, 1980,
  pp. 157--176.

\bibitem{matveev1979darboux}
V.~Matveev, {Darboux transformation and explicit solutions of the
  Kadomtcev-Petviaschvily equation, depending on functional parameters}, Lett.
  Math. Phys. 3~(3) (1979) 213--216.

\bibitem{ablowitz1981solitons}
M.~J. Ablowitz, H.~Segur, {Solitons and the inverse scattering transform},
  Vol.~4, Siam, 1981.

\bibitem{beals1984scattering}
R.~Beals, R.~R. Coifman, {Scattering and inverse scattering for first order
  systems}, Commun. Pur. Appl. Math. 37~(1) (1984) 39--90.

\bibitem{beals1988direct}
R.~Beals, P.~Deift, C.~Tomei, {Direct and inverse scattering on the line},
  no.~28, American Mathematical Soc., 1988.
\bibitem{fokas2012unified}
A.~Fokas, J.~Lenells, The unified method: I. nonlinearizable problems on the
  half-line, J. Phys. A: Math. Theor. 45~(19) (2012) 195201.
\bibitem{lenells2012initial}
J.~Lenells, Initial-boundary value problems for integrable evolution equations
  with 3$\times$ 3 lax pairs, Phys. D 241~(8) (2012) 857--875.
\bibitem{guo2012riemann}
B.~Guo, L.~Ling, {Riemann-Hilbert approach and {N}-soliton formula for coupled
  derivative Schr{\"o}dinger equation}, J. Math. Phys. 53~(7) (2012) 073506.
\bibitem{de2013riemann}
A.~B. de~Monvel, D.~Shepelsky, {A Riemann--Hilbert approach for the
  Degasperis--Procesi equation}, Nonlinearity 26~(7) (2013) 2081.
\bibitem{geng2016riemann}
X.~Geng, J.~Wu, Riemann--hilbert approach and n-soliton solutions for a
  generalized sasa--satsuma equation, Wave Motion 60 (2016) 62--72.
\bibitem{yan2017initial}
Z.~Yan, {An initial-boundary value problem for the integrable spin-1
  Gross-Pitaevskii equations with a 4$\times$ 4 Lax pair on the half-line},
  Chaos 27~(5) (2017) 053117.
\bibitem{ma2018riemann}
W.-X. Ma, Riemann--hilbert problems and n-soliton solutions for a coupled mkdv
 system, J. Geom. Phys. 132 (2018) 45--54.
\bibitem{ma2019inverse}
W.-X. Ma, The inverse scattering transform and soliton solutions of a combined
  modified korteweg--de vries equation, J. Math. Anal. Appl. 471~(1-2) (2019)
  796--811.
\bibitem{wang2010integrable}
D.-S. Wang, D.-J. Zhang, J.~Yang, Integrable properties of the general coupled
  nonlinear schr{\"o}dinger equations, J. Math. Phys. 51~(2) (2010) 023510.
\bibitem{zhang2017riemann}
Y.~Zhang, Y.~Cheng, J.~He, Riemann-hilbert method and n-soliton for
  two-component gerdjikov-ivanov equation, J. Nonlinear Math. Phys. 24~(2)
  (2017) 210--223.
\bibitem{tian2017initial}
S.-F. Tian, {Initial-boundary value problems of the coupled modified
  Korteweg--de Vries equation on the half-line via the Fokas method}, J. Phys.
  A: Math. Theor. 50~(39) (2017) 395204.
\bibitem{tian2018initial}
S.-F. Tian, Initial-boundary value problems for the coupled modified
  korteweg-de vries equation on the interval., Commun. Pure \& Appl. Anal.
  17~(3).
\bibitem{tian2016mixed}
S.-F. Tian, The mixed coupled nonlinear schr{\"o}dinger equation on the
  half-line via the fokas method, Proc. R. Soc. A 472~(2195) (2016) 20160588.
\bibitem{xia2018initial}
B.~Xia, A.~Fokas, Initial--boundary value problems associated with the
  ablowitz--ladik system, Phys. D 364 (2018) 27--61.
\bibitem{peng2019riemann}
W.-Q. Peng, S.-F. Tian, X.-B. Wang, T.-T. Zhang, Y.~Fang, {Riemann--Hilbert
  method and multi-soliton solutions for three-component coupled nonlinear
  Schr{\"o}dinger equations}, J. Geom. Phys. 146 (2019) 103508.
\bibitem{yang2019n}
J.-J. Yang, S.-F. Tian, W.-Q. Peng, T.-T. Zhang, The n-coupled higher-order
  nonlinear schr{\"o}dinger equation: Riemann-hilbert problem and multi-soliton
  solutions, Math. Meth. Appl. Sci. (2019) https://doi.org/10.1002/mma.6055.
\bibitem{ablowitz2018inverse}
M.~J. Ablowitz, X.-D. Luo, Z.~H. Musslimani, Inverse scattering transform for
  the nonlocal nonlinear schr{\"o}dinger equation with nonzero boundary
  conditions, J. Math. Phys. 59~(1) (2018) 011501.
\bibitem{vekslerchik1992discrete}
V.~E. Vekslerchik, V.~V. Konotop, {Discrete nonlinear Schrodinger equation
  under nonvanishing boundary conditions}, Inverse Probl. 8~(6) (1992) 889.
\bibitem{biondini2014inverse}
G.~Biondini, G.~Kova{\v{c}}i{\v{c}}, {Inverse scattering transform for the
  focusing nonlinear Schr{\"o}dinger equation with nonzero boundary
  conditions}, J. Math. Phys. 55~(3) (2014) 031506.
\bibitem{prinari2015inverse}
B.~Prinari, F.~Vitale, {Inverse scattering transform for the focusing nonlinear
  Schr{\"o}dinger equation with one-sided nonzero boundary condition}, Cont.
  Math 651 (2015) 157--194.
\bibitem{yang2019riemann}
J.-J. Yang, S.-F. Tian, {Riemann-Hilbert problem for the modified
  Landau-Lifshitz equation with nonzero boundary conditions}, arXiv preprint
  arXiv:1909.11263.
\bibitem{deift1992steepest}
P.~Deift, X.~Zhou, A steepest descent method for oscillatory riemann-hilbert
  problems, Bulletin Amer. Math. Soc. 26~(1) (1992) 119--123.
\bibitem{xu2015long}
J.~Xu, E.~Fan, {Long-time asymptotics for the Fokas--Lenells equation with
  decaying initial value problem: without solitons}, J. Differ. Equations
  259~(3) (2015) 1098--1148.
\bibitem{tian2018long}
S.-F. Tian, T.-T. Zhang, {Long-time asymptotic behavior for the
  Gerdjikov-Ivanov type of derivative nonlinear Schr{\"o}dinger equation with
  time-periodic boundary condition}, Proc. Am. Math. Soc. 146~(4) (2018)
  1713--1729.
\bibitem{wang2019long}
D.-S. Wang, B.~Guo, X.~Wang, Long-time asymptotics of the focusing
  kundu--eckhaus equation with nonzero boundary conditions, J. Differ.
  Equations 266~(9) (2019) 5209--5253.
\bibitem{liu2019long}
N.~Liu, B.~Guo, Long-time asymptotics for the sasa--satsuma equation via
  nonlinear steepest descent method, J. Math. Phys. 60~(1) (2019) 011504.
\bibitem{deift1999orthogonal}
P.~Deift, {Orthogonal polynomials and random matrices: a Riemann-Hilbert
  approach}, Vol.~3, American Mathematical Soc., 1999.
\bibitem{tsuchida1998coupled}
T.~Tsuchida, M.~Wadati, {The coupled modified Korteweg-de Vries equations}, J.
  Phys. Soc. Jpn. 67~(4) (1998) 1175--1187.
\bibitem{athorne1987generalised}
C.~Athorne, A.~Fordy, {Generalised KdV and MKdV equations associated with
  symmetric spaces}, J. Phys. A: Math. Gen. 20~(6) (1987) 1377.
\bibitem{zhang2008lax}
H.-Q. Zhang, B.~Tian, T.~Xu, H.~Li, C.~Zhang, H.~Zhang, {Lax pair and Darboux
  transformation for multi-component modified Korteweg--de Vries equations}, J.
  Phys. A: Math. Theor. 41~(35) (2008) 355210.
\bibitem{liu2016initial}
H.~Liu, X.~Geng, {Initial--boundary problems for the vector modified
  Korteweg--de Vries equation via Fokas unified transform method}, J. Math.
  Anal. Appl. 440~(2) (2016) 578--596.
\end{thebibliography}
\end{document}